\newcommand{\C}{\mathbb{C}}	            
\newcommand{\bunderline}[1]{\underline{#1\mkern-4mu}\mkern4mu}
\newcommand{\boverline}{\bar}
\newcommand{\ProblemFull}{\xspace{Perfect Phylogeny Mixture Deconvolution Problem}}
\newcommand{\ProblemCladisticFull}{\xspace{Cladistic Perfect Phylogeny Mixture Deconvolution Problem}}
\newcommand{\Problem}{\xspace{PPMDP}}
\newcommand{\ProblemCladistic}{\xspace{Cladistic-PPMDP}}
\spnewtheorem{observation}{Observation}{\bfseries}{\itshape}
\spnewtheorem{invariant}{Invariant}{\bfseries}{\itshape}
\spnewtheorem{inv}{(Main Text) Invariant}{\bfseries}{\itshape}
\spnewtheorem{lemU}{(Main Text) Lemma}{\bfseries}{\itshape}
\spnewtheorem{def1}{(Main Text) Definition}{\bfseries}{\itshape}
\spnewtheorem{thm1}{(Main Text) Theorem}{\bfseries}{\itshape}
\spnewtheorem{thm2}{(Main Text) Theorem}{\bfseries}{\itshape}
\spnewtheorem{thm3}{(Main Text) Theorem}{\bfseries}{\itshape}
\spnewtheorem{lem2}{(Main Text) Lemma}{\bfseries}{\itshape}
\spnewtheorem{lem4}{(Main Text) Lemma}{\bfseries}{\itshape}
\spnewtheorem{prop1}{(Main Text) Proposition}{\bfseries}{\itshape}
\spnewtheorem{prob1}{(Main Text) Problem}{\itshape}{}
\author{Mohammed El-Kebir\inst{1} \and Gryte Satas\inst{1} \and Layla Oesper\inst{1,2} \and Benjamin J.~Raphael\inst{1}}
\title{Multi-State Perfect Phylogeny Mixture Deconvolution and Applications to Cancer Sequencing}
\titlerunning{Perfect phylogeny factorization problem}
\institute{$^1$Department of Computer Science and Center for Computational Molecular Biology, Brown University, Providence, RI 02912. $^2$Department of Computer Science, Carleton College, Northfield, MN 55057.
}
\begin{document}

\maketitle

\begin{abstract}
The reconstruction of phylogenetic trees from mixed populations has become important in the study of cancer evolution, as sequencing is often performed on bulk tumor tissue containing mixed populations of cells.
Recent work has shown how to reconstruct a perfect phylogeny tree from samples that contain mixtures of two-state characters, where each character/locus is either mutated or not.  However, most cancers contain more complex mutations, such as copy-number aberrations, that exhibit more than two states.  We formulate the Multi-State \ProblemFull\ of reconstructing a multi-state perfect phylogeny tree given mixtures of the leaves of the tree.  We characterize the solutions of this problem as a restricted class of spanning trees in a graph constructed from the input data, and prove that the problem is NP-complete. 
We derive an algorithm to enumerate such trees in the important special case of cladisitic characters, where the ordering of the states of each character is given.  We apply our algorithm to simulated data and to two cancer datasets.  On simulated data, we find that for a small number of samples, the Multi-State \ProblemFull\ often has many solutions, but that this ambiguity declines quickly as the number of samples increases.  On real data, we recover copy-neutral loss of heterozygosity, single-copy amplification and single-copy deletion events, as well as their interactions with single-nucleotide variants.
\end{abstract}
\section{Introduction}
Many evolutionary processes are modeled using phylogenetic trees, whose 
leaves correspond to extant entities called \emph{taxa}, and whose edges describe the ancestral relationships between the taxa.
In a character-based phylogeny, taxa are represented by a collection of \emph{characters}, each of which have one of several distinct \emph{states}.
The basic problem of character-based phylogenetic tree reconstruction is the following:  given the states for $n$ characters on $m$ taxa, find a vertex-labeled tree whose leaf labels are the taxa and whose internal nodes are labeled by an ancestral state for each character such that the resulting tree maximizes an objective function (e.g. maximum parsimony or maximum likelihood) over all such labeled trees.

Here, we consider a phylogenetic tree mixture problem, where the input is not the set of states for each taxon, but rather \emph{mixtures} of these states (Figure~\ref{fig:overview}(a)).  The goal is to explain these given mixtures as a mixing of the leaves of an unknown phylogenetic tree in unknown proportions (Figure~\ref{fig:overview}(b)).  Stated another way:   if one is given $m$ mixtures of the leaves of a phylogenetic tree, can one reconstruct the tree and the mixing proportions? This problem is motivated by cancer genome sequencing.  Cells in a tumor are derived from a single founder cell and are distinguished by somatic mutations~\cite{Nowell:1976aa}.  However, due to technical and financial constraints, most cancer sequencing projects do not sequence individual cells~\cite{Navin:2014aa,Wang:2014aa,Navin:2015aa,Zhang:2015aa}, but rather bulk tumor samples containing thousands to millions of cells~\cite{Cancer-Genome-Atlas-Research-Network:2013aa,Zhang:2011aa}.  Thus, the data we observe from a sequencing experiment is the fraction of reads that indicate a mutation, which is proportional to the fraction of cells that contain the mutation.

The difficulty of the phylogenetic tree mixture problem depends on the evolutionary model.  The simplest such model assumes that characters are binary (i.e.\ have two states) and change state at most once (i.e.\ the characters evolve with no homoplasy).
In the resulting phylogenetic tree each character-state pair thus labels at most one edge.
Such a restricted phylogenetic tree is called a \emph{perfect phylogeny}.
The no-homoplasy assumption is also called the \emph{infinite sites assumption} for two-state characters. 
Deciding whether a set of taxa with two-state characters admits a perfect phylogeny is solvable in polynomial time~\cite{Estabrook:1975ew,Gusfield:1991}.

Recently, driven by the application to cancer sequencing, there has been a surge of interest in solving the phylogenetic tree mixture problem for a two-state perfect phylogeny, relying on the idea that the infinite sites assumption is a reasonable model for somatic single nucleotide variants (SNVs) \cite{Nik-Zainal:2012aa,Strino:2013,Jiao:2014aa,Hajirasouliha:2014aa,Malikic:2015aa,Popic:2015,ElKebir:2015by}.  All of this work is based on the observation that the infinite sites assumption provides strong constraints  on the possible ancestral relationships between any pair of characters, given their observed frequencies across all samples.
In particular, \cite{ElKebir:2015by,Popic:2015} showed that the phylogenetic trees that produce the observed frequencies correspond to constrained spanning trees of a certain graph, with \cite{ElKebir:2015by} proving that when there are no errors in the measured frequencies there is a 1-1 correspondence between phylogenetic trees and these constrained spanning trees.

While the two-state perfect phylogeny model might be reasonable for somatic SNVs, it is fairly restrictive for modeling the somatic mutational process in cancer.   For instance, somatic copy-number aberrations (CNAs) are ubiquitous in solid tumors~\cite{Zack:2013aa}, and these mutations generally have more than two states.  While in some cases it may be possible to exclude CNAs in tree reconstruction, there can be interesting interactions between SNVs and CNAs that confound such analyses \cite{Eirew:2015cg}. Deshwar et  al.~\cite{Deshwar:2015kw} introduced a probabilistic graphical model for SNVs and CNAs but do so by modeling CNAs as special characters of a two-state perfect phylogeny, rather than addressing the general multi-state problem.
A probalistic model introduced by Li and Li~\cite{Li:2014bp} considers SNVs and CNAs to infer the genomic composition of tumor populations without inferring a tree describing their evolutionary relationships.
In another line of work, Chowdhury et al.~\cite{Chowdhury:2015kd} have developed rich models for copy-number aberrations, but these are for single-cell data and do not address the phylogenetic tree mixture problem.  

For multi-state characters, the no homoplasy assumption is referred to as the \emph{infinite alleles assumption} in population genetics, or the  \emph{multi-state perfect phylogeny}~\cite{Fernandez:2001,gusfield2014recombinatorics}.  In this model, a character may change state more than once on the tree, but changes \emph{to the same state at most once}.
In contrast to the case of two-state characters, the multi-state perfect phylogeny problem is NP-complete~\cite{DBLP:conf/icalp/BodlaenderFW92} in general, but is fixed-parameter tractable in the number of states per character~\cite{Agarwala:1994gd,Kannan96afast}.  There is an elegant connection between multi-state perfect phylogeny and restricted triangulations of chordal graphs \cite{Buneman1974}, which was recently exploited by Gusfield and collaborators to obtain combinatorial conditions for the multi-state perfect phylogeny \cite{GusfieldJCB2010,Gysel:2011aa}.

\paragraph{\textbf{Contributions.}}

Here, we introduce a perfect phylogeny mixture problem in the case of multi-state characters that evolve without homoplasy under the infinite alleles assumption. We define this problem formally as the Multi-State \ProblemFull~(PPMDP). We derive a characterization of the solutions of the PPMDP as a restricted class of spanning trees of a labeled multi-graph that we call the \emph{multi-state ancestry graph}.  Using this characterization, we show that the PPMDP is NP-complete. We adapt the Gabow-Myers~\cite{Gabow:1978cg} algorithm to enumerate these trees, allowing for errors in the input.
We apply our algorithm to simulated data and find that there is considerable ambiguity in the solution with sequencing data from few samples, but that the number of phylogenetic trees decreases dramatically as the number of samples increases.  We apply our algorithm to two cancer datasets, a chronic lymphocytic leukemia (CLL) and a prostate tumor, and infer phylogenetic trees that contain copy-number aberrations (CNAs), single nucleotide variants (SNVs), and combinations of these.


\begin{figure}[t]
  \centering
  \includegraphics[width=.7\textwidth]{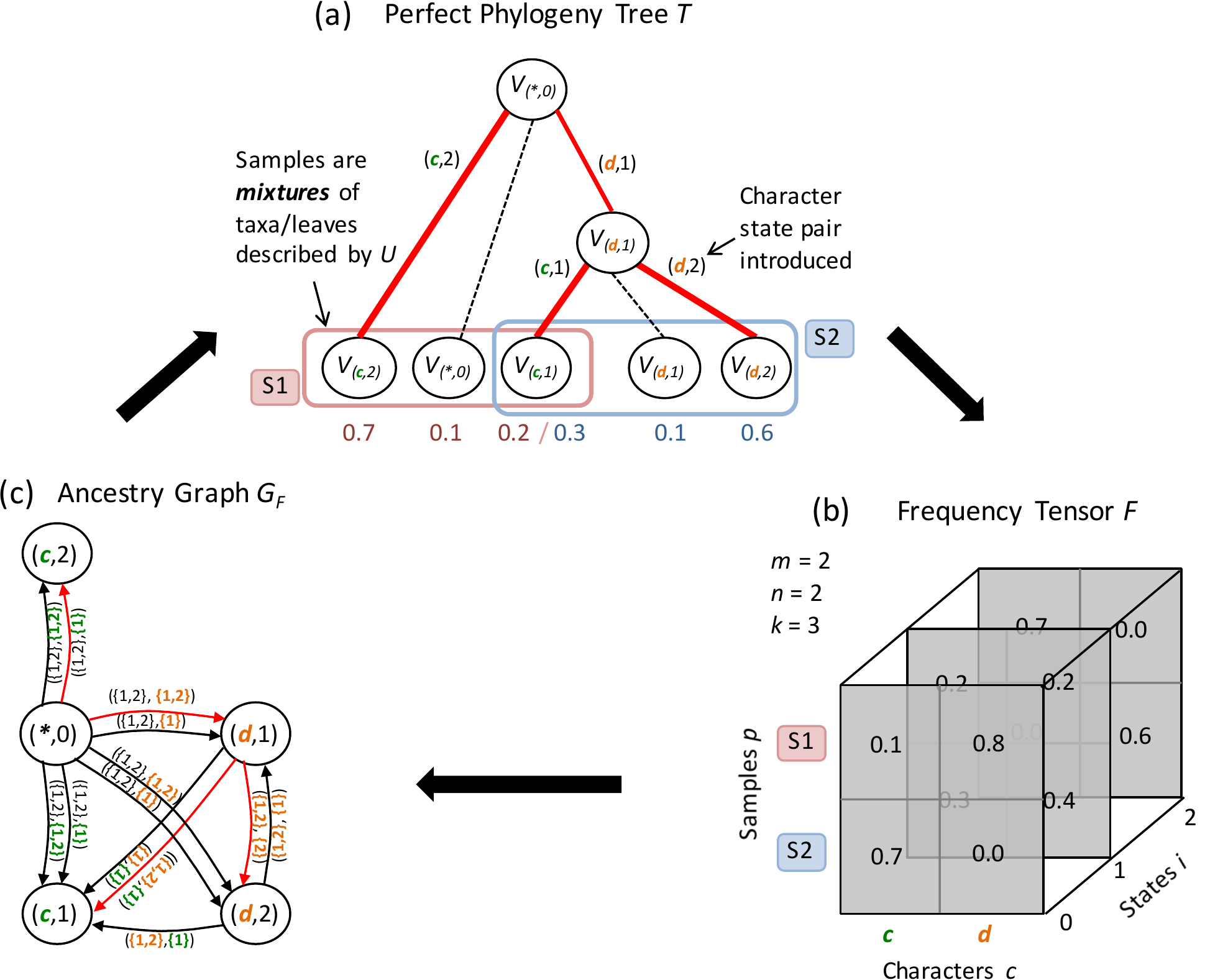}
  \caption{
    \textbf{Overview of the \ProblemFull:} 
    (a) A multi-state perfect phylogeny tree $T$ whose leaves are mixed according to proportions $U$.
    (b) This results in the frequency tensor $\mathcal{F}$ whose entries are the observed frequency of character-state pairs across samples. 
    The \Problem\ asks to infer $T$ and $U$ given $\mathcal{F}$.
    (c) The multi-state ancestry graph $G_{\mathcal{F}}$ describes potential ancestral relationships between character-state pairs.  A \textit{threading} through this graph (red edges) produces a perfect phylogeny tree~$T$. 
 }
  \label{fig:overview}
\end{figure}

\section{The \ProblemFull}
\label{sec:problem_statement}
  
Let $A \in \{0,\ldots,k-1\}^{m \times n}$ be a matrix whose rows correspond to $m$ taxa and whose columns represent $n$ characters, each of which has at most $k$ states.
A tree $T$ is a \emph{perfect phylogeny tree} for $A$ provided (1) each vertex is labeled by a \emph{state vector} in $\{0,\ldots,k-1\}^n$, which denotes the state for each character; (2) each leaf corresponds to exactly one row (taxon) of $A$ labeled by the same states for each character; (3) vertices labeled with state $i$ for character $c$ form a connected subtree $T_{(c,i)}$ of $T$~\cite{gusfield2014recombinatorics,Fernandez:2001}.
Throughout this paper, we restrict our attention to the case of rooted, or directed, perfect phylogenies, where the state vector for the root vertex is all zeros (Figure~\ref{fig:overview}(a)).  This restricted version is consistent with the assumptions for our cancer sequencing application below.  Moreover, perfect phylogeny algorithms can generally be extended to the non-zero rooted case or unrooted case with some additional work~\cite{gusfield2014recombinatorics}.

Rather than measuring $A$ or the leaves of $T$ directly, we are given $m$ samples that are \emph{mixtures} of the rows (taxa) of $A$, or equivalently mixtures of the labels of the leaves of $T$, in unknown proportions.  
Consider a sample $p$ and a character $c$ with states $i \in \{0,\ldots,k-1\}$.
The proportion of taxa in sample $p$ that have state $i$ for character $c$ is given by $f_{p,(c,i)}$.
Note that $f_{p,(c,i)} \ge 0$ and $\sum_{i} f_{p,(c,i)} = 1$.
Our input measurements are thus described by a
$k \times m \times n$ \emph{frequency tensor} $\mathcal{F} = [[f_{p, (c,i)}]]$---see Figure~\ref{fig:overview}(b).

While the frequencies $\mathcal{F}$ are obtained by mixing the leaves of $T$,  the number of leaves in $T$ is typically unknown.  
Since any internal vertex can be extended without change to a leaf and mixing proportions may be zero, we can assume without loss of generality that $\mathcal{F}$ is obtained by mixing the vertices (instead of the leaves) of an \emph{$n,k$-complete perfect phylogeny tree} $T$ with exactly $n(k-1)+1$ vertices representing the possible state vectors.
See Supplementary Methods~\ref{sec:app_preliminaries} for further details.

In an $n,k$-complete perfect phylogeny tree $T$, each non-root vertex of $T$ is the root of exactly one subtree $T_{(c,i)}$.
The root vertex of $T$ corresponds to the root of each subtree in $\{ T_{(c,0)} \mid c \in [n] \}$.
We thus label each vertex of $T$ by $v_{(c,i)}$ and the root vertex of $T$ by $v_{(*,0)}$.
Further, we use $\delta(c,i)$ to denote the children of $v_{(c,i)}$ and we write $(c,i) \prec_T (d,j)$ if $v_{(c,i)}$ is ancestral to $v_{(d,j)}$ in $T$. Note that $\prec_T$ is a reflexive relation.
For each vertex $v_{(c,i)}$ and each sample $p$, we define the mixing proportion $u_{p, (c,i)}$ to be the proportion of sample $p$ that is from the state vector at $v_{(c,i)}$.  Note that we have $u_{p, (c,i)} \ge 0$ for all $p$, $c$, and $i$, and $\sum_{(c,i)} u_{p, (c,i)} = 1$ for all $p$.
We refer to $u_{p, (c,i)}$ as the \emph{usage} of vertex $v_{(c,i)}$ in sample $p$, and define an $m,n,k$-\emph{usage matrix} $U = [u_{p, (c,i)}]$.   
It follows that $f_{p, (c,i)} = \sum_{(d,j) \in T_{(c,i)}} u_{p, (d,j)}$.  
Figure~\ref{fig:overview}(b) shows a perfect phylogeny tree $T$ and how it may be mixed according to a usage matrix $U$ to yield the observed frequency tensor $\mathcal{F}$. 
We have the following problem.
\begin{problem}[\ProblemFull~(\Problem)]
  \label{prob:1}
  Given an $k \times m \times n$ frequency tensor $\mathcal{F} = [[f_{p,(c,i)}]]$, find a $n,k$-complete perfect phylogeny tree $T$ and a $m,n,k$-usage matrix $U = [u_{p, (c,i)}]$
  such that $f_{p,(c,i)} = \sum_{(d,j) \in T_{(c,i)}} u_{p,(d,j)}$ for all character-state pairs $(c,i)$ and all samples $p$.
\end{problem}

In the two-state ($k=2$) case, $\mathcal{F}$ is a $2 \times m \times n$ tensor, and there is a convenient parameterization of the problem as $F = U A$, where $F$ is an $m \times n$ matrix containing the frequencies of the 1-states of each character in each sample and $A$ is a \emph{two-state} perfect phylogeny matrix.  Thus, the problem can be considered as a factorization problem where we are factoring $F$ into two matrices, each of which has a particular structure.  This is the Variant Allele Factorization problem given in \cite{ElKebir:2015by} and described in other forms elsewhere \cite{Nik-Zainal:2012aa,Strino:2013,Jiao:2014aa,Hajirasouliha:2014aa,Malikic:2015aa,Popic:2015,ElKebir:2015by}. When $k > 2$, the relationship between $\mathcal{F}$, $U$ and $A$, is given by $k$ systems of linear equations of the form $F_i = U A_i$, where $F_i$ and $A_i$ are ``slices" of $\mathcal{F}$ and a multi-state perfect phylogeny matrix $A$ for each of the $k$ states of the characters.
(Note that since $\sum_{i=0}^{k-1} f_{p,(c,i)} = 1$, one can instead write only $k-1$ systems of linear equations.)
See Supplementary Methods~\ref{sec:app_preliminaries} for further details and formal definitions of $\mathcal{F}$, $U$, $T$ and $A$.

\section{Combinatorial Characterization of the \Problem}
\label{sec:comb_char}
In this section, we derive a combinatorial characterization of the solutions of the \Problem\ as a restricted set of spanning trees in an edge-labeled, directed multi-graph.
We begin by reviewing and discussing previous results for mixtures of a two-state $(k=2)$ perfect phylogeny.
Next, we derive the characterization for mixtures of the multi-state ($k \ge 2$) perfect phylogeny.  Finally,  we discuss a special case of the multi-state perfect phylogeny problem using cladistic characters.
All proofs as well as additional definitions and lemmas are in Supplementary Methods~\ref{sec:app_methods}.

\subsection{Two-State Perfect Phylogeny Mixtures}
First, we review and recast the main results for mixtures of a two state ($k=2$) perfect phylogeny with all-zero root.
Here, a character changes state from 0 to 1 at most once.
The key insight is that the relative frequencies of the mutated ($=1$) states for a subset of characters constrain their potential ancestral relationships. This is because the mutated state persists in the tree.   In particular, if $(c,1) \prec_T (d,1)$ then all vertices in $T$ that have state 1 for character $d$ must also have state 1 for character $c$. 
A consequence is the following condition, called the \emph{ancestry condition} in \cite{ElKebir:2015by}:
\begin{equation}
  f_{p,(c,1)} \ge f_{p,(d,1)}~\text{for all samples $p$ and characters $c,d$ such that $(c,1) \prec_T (d,1)$}.
\tag{AC}
\end{equation}

In fact, a stronger condition than the ancestry condition can be derived by considering the relationships between subtrees of $T$.  Specifically, for each character $c$, the
subtree $T_{(c,1)}$, consisting of all vertices with state 1 for character $c$, \emph{is identical to} the subtree $\overline{T}_{(c,1)}$ rooted at a vertex $v_{(c,1)}$.
Moreover,  $\overline{T}_{(c,1)}$ is the disjoint union of $v_{(c,1)}$ and the subtrees rooted at its children: $\overline{T}_{(c,1)} =   v_{(c,1)} \cup \left(\bigcup_{(d,1) \in \delta(c,1)}  \overline{T}_{(d,1)} \right)$ (Figure~\ref{fig:t_vs_tbar}(a)).
Combining this fact together with the equation $f_{p,(c,1)} = \sum_{(d,1) \in T_{(c,1)}} u_{p,(d,1)}$ yields the key equation 
 $f_{p,(c,1)} =  u_{p,(c,1)} + \sum_{(d,1) \in \delta(c,1)} f_{p,(d,1)}$.
Recalling that  $u_{p,(c,1)} \ge 0$, we can relax this equation to the following inequality, referred to as the \emph{sum condition} in \cite{ElKebir:2015by},
\begin{equation}
  f_{p,(c,1)} \ge \sum_{(d,1) \in \delta(c,1)} f_{p,(d,1)}~\text{for all samples $p$ and characters $c$}.
\tag{SC}
 \label{eq:SC}
\end{equation}
The sum condition is both necessary and sufficient for $\mathcal{F}$ to be a mixture of $T$.
The  sum and ancestry conditions provide a combinatorial characterization of solutions as constrained spanning trees of a directed acyclic graph, which was called the \emph{ancestry graph} in \cite{ElKebir:2015by}. 
This derivation of the sum condition from the fact that $\overline{T}_{(c,1)} = T_{(c,1)}$ in the two-state case is not explicitly stated in previous work, but this turns out to be the key ingredient in the generalization to the multi-state case.

\subsection{Multi-State Perfect Phylogeny Mixtures}
We now consider a multi-state perfect phylogeny ($k \ge 2$).  In this case the state of a character $c$ can change more than once on the tree, but never changes back to a previous state. Thus in general $T_{(c,i)}$ \emph{is not equal} to $\overline{T}_{(c,i)}$ (Figure~\ref{fig:t_vs_tbar}(b)).
This makes the situation much more complicated, since we must consider not only the children of $v_{(c,i)}$, but also the relationships between $T_{(c,i)}$ and subtrees $T_{(c,j)}$ for $j \neq i$.  

\begin{figure}[t]
  \center
  \includegraphics[width=\textwidth]{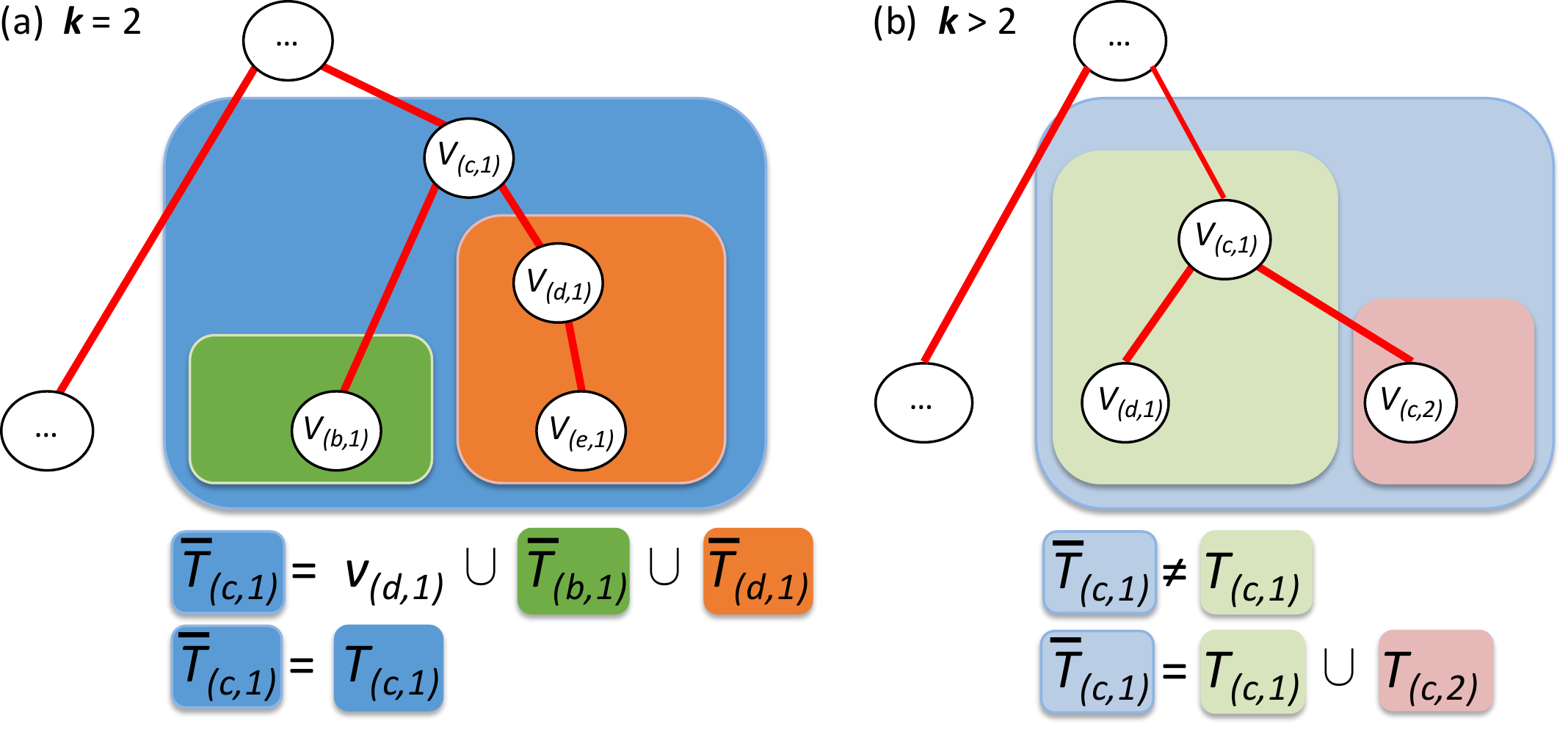}
  \caption{\textbf{Relationship between $T_{(c,i)}$ and $\overline{T}_{(c,i)}$ in the case of $k = 2$ and $k > 2$ states.} 
  (a) In the case of $k=2$ states, we have that $T_{(c,1)}) = \overline{T}_{(c,1)}$. 
  (b) In the case of $k > 2$ states, $T_{(c,i)} \neq \overline{T}_{(c,i)}$. 
  Instead, $\overline{T}_{(c,i)} = \bigcup_{l \in D_{(c,i)}} T_{(c,l)}$ where $D_{(c,i)}$ is the descendant set of $(c,i)$.
  Here, $D_{(c,1)} = \{1,2\}$.}
  \label{fig:t_vs_tbar}
\end{figure}

Our solution to the multi-state problem relies on the notion of a \emph{descendant set} for each character-state pair $(c,i)$.  Formally, 
given a complete perfect phylogeny tree $T$,  
we define the \emph{descendant set} $D_{(c,i)} = \{ j \mid (c,i) \prec_T (c,j) \}$ as the set of states for character $c$ that are descendants of character-state pair $(c,i)$ in $T$.
Note that $i \in D_{(c,i)}$, and that $D_{(c,0)} = \{0,\ldots,k-1\}$ as $v_{(c,0)} = v_{(*,0)}$.
The descendant set of a character precisely determines the relationship between $\overline{T}_{(c,i)}$ and $T_{(c,i)}$; namely we have $\overline{T}_{(c,i)} = \bigcup_{l \in D_{(c,i)}} T_{(c,l)}$. 
Hence, to obtain the usages of all vertices in $\overline{T}_{(c,i)}$ we must consider \emph{cumulative frequencies} $f^+_p(D_{(c,i)}) = \sum_{l \in D_{(c,i)}} f_{p,(c,l)}$.
Generalizing the result from the two-state case, we find that: given $T$,
the cumulative frequencies $f^+_p(D_{(c,i)})$ for the descendant sets defined by $T$ \emph{uniquely} determine the usage matrix $U$,
as shown in the following lemma.
\begin{lemma}
  \label{lem:U}
  Let $T \in \mathcal{T}_{n,k}$ and $\mathcal{F} = [[f_{p,(c,i)}]]$ be a frequency tensor.
  For a character-state pair $(c,i)$ and sample $p$, let
  \begin{equation}
    \label{eq:U}
    u_{p,(c,i)} = f^+_p(D_{(c,i)}) - \sum_{(d,j) \in \delta(c,i)} f^+_p(D_{(d,j)}).
  \end{equation}
  Then $U = [u_{p,(c,i)}]$ is the unique matrix whose entries satisfy $f_{p,(c,i)} = \sum_{(d,j) \in T_{(c,i)}} u_{p,(d,j)}$.
\end{lemma}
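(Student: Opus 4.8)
The lemma asserts two things at once: that the matrix $U$ defined by \eqref{eq:U} actually satisfies the defining equations $f_{p,(c,i)}=\sum_{(d,j)\in T_{(c,i)}}u_{p,(d,j)}$, and that it is the \emph{only} matrix doing so. Everything is per-sample, so I would work with a single $p$ fixed. The driving idea is to play against each other two decompositions of the rooted subtree $\overline{T}_{(c,i)}$ at $v_{(c,i)}$: the \emph{horizontal} one $\overline{T}_{(c,i)}=\bigsqcup_{l\in D_{(c,i)}}T_{(c,l)}$ already recorded in the text (a disjoint union, since each vertex carries a unique state for $c$), and the \emph{vertical} one $\overline{T}_{(c,i)}=\{v_{(c,i)}\}\sqcup\bigsqcup_{(d,j)\in\delta(c,i)}\overline{T}_{(d,j)}$, the multi-state analogue of the ``root plus the subtrees below its children'' identity that gives the two-state sum condition.

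For \emph{uniqueness}: suppose $U$ satisfies the defining equations. Summing them over $l\in D_{(c,i)}$ and collapsing the right-hand side with the horizontal decomposition gives $f^+_p(D_{(c,i)})=\sum_{(d,j)\in\overline{T}_{(c,i)}}u_{p,(d,j)}$, i.e.\ cumulative frequencies equal the usage-mass of $\overline{T}_{(c,i)}$. Splitting that sum with the vertical decomposition and substituting the same identity at each child $(d,j)\in\delta(c,i)$ yields $f^+_p(D_{(c,i)})=u_{p,(c,i)}+\sum_{(d,j)\in\delta(c,i)}f^+_p(D_{(d,j)})$, which is exactly \eqref{eq:U} once rearranged. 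Hence any admissible $U$ must be the stated one.

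For \emph{existence}: take $U$ as \emph{defined} by \eqref{eq:U}, put $h_{p,(c,i)}:=\sum_{(d,j)\in\overline{T}_{(c,i)}}u_{p,(d,j)}$, and first show $h_{p,(c,i)}=f^+_p(D_{(c,i)})$ for all $(c,i)$. The vertical decomposition gives the recursion $h_{p,(c,i)}=u_{p,(c,i)}+\sum_{(d,j)\in\delta(c,i)}h_{p,(d,j)}$, so a bottom-up induction over the vertices of $T$ closes the loop: at a leaf of $T$ one has $\delta(c,i)=\emptyset$, so \eqref{eq:U} reads $u_{p,(c,i)}=f^+_p(D_{(c,i)})=h_{p,(c,i)}$, and the inductive step is immediate from \eqref{eq:U}. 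Then I would pass from cumulative back to individual frequencies: for fixed $c$, the set $D_{(c,i)}$ is $\{i\}$ together with the disjoint sets $D_{(c,j)}$ over those states $j$ whose block $T_{(c,j)}$ attaches directly to $T_{(c,i)}$ in $T$, and correspondingly $\overline{T}_{(c,i)}$ equals $T_{(c,i)}$ together with the matching subtrees $\overline{T}_{(c,j)}$; subtracting, $\sum_{(d,j)\in T_{(c,i)}}u_{p,(d,j)}=h_{p,(c,i)}-\sum_j h_{p,(c,j)}=f^+_p(D_{(c,i)})-\sum_j f^+_p(D_{(c,j)})=f_{p,(c,i)}$, the last equality a one-line telescoping of cumulative frequencies. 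With uniqueness, this finishes the lemma.

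The whole argument is bookkeeping rather than computation; the one genuinely new ingredient over the $k=2$ setting is that one must telescope through the \emph{cumulative} frequencies $f^+_p(D_{(c,i)})$ instead of single frequencies --- this is exactly what makes the vertical decomposition close up, whereas for $k=2$ one has $\overline{T}_{(c,1)}=T_{(c,1)}$ and there is nothing cumulative to carry. The two things to watch are keeping the horizontal and vertical decompositions straight, and the convention $v_{(c,0)}=v_{(*,0)}$: one should check \eqref{eq:U} is independent of the choice of $c$ at the root, which holds because $f^+_p(D_{(c,0)})=\sum_i f_{p,(c,i)}=1$ (this is where the frequency-tensor normalization is used) and $\delta(c,0)$ is the fixed child set of the root.
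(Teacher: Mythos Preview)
Your argument is correct, but it takes a genuinely different route from the paper for the \emph{uniqueness} half. The paper does not derive uniqueness from the two decompositions; instead it invokes a separate lemma (Lemma~\ref{lem:full_row_rank}) showing that the binary matrix $B=\psi(\theta(T))$ has full row rank $n(k-1)+1$, proved by Gaussian elimination guided by a BFS on $H(B)$. From $F=UB$ with $B$ of full row rank, uniqueness of $U$ is immediate. The paper then only needs to \emph{verify} that the $U$ of \eqref{eq:U} satisfies the defining equations, and does so by substituting \eqref{eq:U} into $\sum_{(d,j)\in T_{(c,i)}}u_{p,(d,j)}$ and telescoping directly, via the set $\Delta(c,i)=\bigcup_{(d,j)\in T_{(c,i)}}\delta(d,j)\setminus T_{(c,i)}$ (essentially your ``states $j$ whose block $T_{(c,j)}$ attaches directly to $T_{(c,i)}$''). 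Your existence argument is the same telescoping idea reorganized: you first close the loop on the cumulative quantities $h_{p,(c,i)}=f^+_p(D_{(c,i)})$ by bottom-up induction, and only then peel off the individual $f_{p,(c,i)}$.

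What each buys: your proof is entirely self-contained and avoids the linear-algebra detour, exposing that both existence and uniqueness are pure consequences of the horizontal/vertical decompositions of $\overline{T}_{(c,i)}$. The paper's route isolates the full-rank statement as an independent structural fact about complete perfect phylogeny matrices, which has standalone interest and makes the uniqueness clause a one-liner once that lemma is in hand.
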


We say that $T$ \emph{generates} $\mathcal{F}$ if the corresponding matrix $U = [u_{p,(c,i)}]$, defined by \eqref{eq:U}, is a usage matrix, i.e.\  $u_{p, (c,i)} \ge 0$ for all $p$, $c$, and $i$, and $\sum_{(c,i)} u_{p, (c,i)} = 1$ for all $p$.
We now restate the problem as follows.
\begin{problem}[\Problem]
  Given an $k \times m \times n$ frequency tensor $\mathcal{F}$, does there exist a complete perfect phylogeny tree $T \in \mathcal{T}_{n,k}$ that generates $\mathcal{F}$? 
\end{problem}

It turns out that at that the positivity of values $u_{p,(c,i)}$ is a necessary and sufficient condition for $T$ to generate $\mathcal{F}$. 
This is captured by the Multi-State Sum Condition \eqref{eq:MSSC} in the following theorem.

\begin{theorem}
  \label{thm:characterization}
  A complete perfect phylogeny tree $T$ generates $\mathcal{F}$ if and only if 
  \begin{equation}
    \label{eq:MSSC}
    \tag{MSSC}
    f^+_p(D_{(c,i)}) - \sum_{(d,j) \in \delta(c,i)} f^+_p(D_{(d,j)}) \ge 0
  \end{equation}
  for all character-state pairs $(c,i)$ and all samples $p$.
\end{theorem}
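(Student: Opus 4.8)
The plan is to reduce everything to Lemma~\ref{lem:U}. The first observation is that the left-hand side of \eqref{eq:MSSC} is, by the definition \eqref{eq:U}, exactly the entry $u_{p,(c,i)}$ of the matrix $U$ that Lemma~\ref{lem:U} attaches to $T$ and $\mathcal{F}$. So \eqref{eq:MSSC} is nothing but the assertion that $u_{p,(c,i)} \ge 0$ for all character-state pairs $(c,i)$ and all samples $p$. Recalling that ``$T$ generates $\mathcal{F}$'' means precisely that this same $U$ is a \emph{usage matrix} --- i.e.\ all of its entries are nonnegative \emph{and} $\sum_{(c,i)} u_{p,(c,i)} = 1$ for every $p$ --- the theorem therefore comes down to showing that, for the particular matrix $U$ singled out by Lemma~\ref{lem:U}, nonnegativity of the entries already forces the per-sample sums to equal $1$.

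Given this, the ``only if'' direction is immediate: if $T$ generates $\mathcal{F}$ then $U$ is a usage matrix, so every $u_{p,(c,i)} \ge 0$, which is exactly \eqref{eq:MSSC}.

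For the ``if'' direction, assume \eqref{eq:MSSC}, so that $u_{p,(c,i)} \ge 0$ for all $(c,i)$ and $p$; it remains only to check $\sum_{(c,i)} u_{p,(c,i)} = 1$ for each sample $p$. Here I would combine two facts. First, Lemma~\ref{lem:U} guarantees that this $U$ satisfies $f_{p,(c,i)} = \sum_{(d,j) \in T_{(c,i)}} u_{p,(d,j)}$ for every $(c,i)$ and $p$. Second, for any fixed character $c$ the subtrees $T_{(c,0)}, \ldots, T_{(c,k-1)}$ partition the vertex set of $T$: each vertex of $T$ is labeled by exactly one state of character $c$, hence lies in exactly one $T_{(c,i)}$, and in an $n,k$-complete tree every $T_{(c,i)}$ is moreover nonempty. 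Summing the identity over $i \in \{0,\ldots,k-1\}$ with $c$ fixed then yields $\sum_i f_{p,(c,i)} = \sum_i \sum_{(d,j) \in T_{(c,i)}} u_{p,(d,j)} = \sum_{(d,j)} u_{p,(d,j)}$, the last sum ranging over all vertices of $T$. Since $\mathcal{F}$ satisfies the normalization $\sum_i f_{p,(c,i)} = 1$, we conclude $\sum_{(d,j)} u_{p,(d,j)} = 1$, so $U$ is a usage matrix and $T$ generates $\mathcal{F}$.

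The only step that is more than definition-chasing is the partition observation used in the ``if'' direction; the rest is a matter of recognizing that Lemma~\ref{lem:U} has already pinned down $U$ uniquely, so one is merely verifying its two defining properties --- one of which is the hypothesis \eqref{eq:MSSC} and the other of which is forced by the column-normalization of the frequency tensor. I do not anticipate a genuine obstacle here: the mathematical content of the theorem is essentially the content of Lemma~\ref{lem:U} together with that normalization.
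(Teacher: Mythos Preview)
Your proof is correct. The forward direction matches the paper's exactly. For the reverse direction you take a slightly different route: the paper expands the sum $u_{p,(*,0)} + \sum_{c,i} u_{p,(c,i)}$ using the defining formula \eqref{eq:U} and observes a telescoping cancellation --- every non-root term $f^+_p(D_{(c,i)})$ appears once with a plus sign (when $v_{(c,i)}$ is the parent) and once with a minus sign (when it is a child), leaving only $f^+_p(D_{(*,0)}) = \sum_{i=0}^{k-1} f_{p,(c,i)} = 1$. You instead invoke the identity $f_{p,(c,i)} = \sum_{(d,j)\in T_{(c,i)}} u_{p,(d,j)}$ already established by Lemma~\ref{lem:U} and sum it over the partition $\{T_{(c,0)},\ldots,T_{(c,k-1)}\}$ of $V(T)$ for a fixed character $c$. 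Both arguments are short and elementary; yours has the minor advantage of reusing the conclusion of Lemma~\ref{lem:U} directly rather than unwinding the formula for $u_{p,(c,i)}$ again, while the paper's telescoping makes the role of the tree structure (each non-root vertex has a unique parent) more explicit.
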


Note that \eqref{eq:MSSC} is a generalization of the Sum Condition described in \cite{ElKebir:2015by}.  
In the two-state case, $D_{(c,1)} = \{1\}$ for all characters $c$, and thus the cumulative frequencies $f^+_{p}(D_{(c,1)})$ are equal to the input frequencies $f_{p,(c,1)}$. 
Thus, in the two-state case, the relative order of the frequencies of the mutated (=1) states of characters constrain the ancestral relationships between characters, as described above.  In contrast, in the multi-state case, we must consider the relative order of \emph{cumulative frequencies} $f^+_p(D_{(c,i)})$, but these depend on descendant sets which are \textit{a priori} unknown.  
We will show that for any tree $T$ that generates $\mathcal{F}$ and any pair $(c,i)$ and $(d,j)$ of character-state pairs with $(c,i) \prec_T (d,j)$, Theorem~\ref{thm:characterization} imposes a necessary condition on the corresponding descendant sets $D_{(c,i)}$ and $D_{(d,j)}$ determined by $T$.  
This condition is formalized in the following definition.

\begin{definition}
  \label{def:valid_state_set_pair}
  Let $(c,i)$ and $(d,j)$ be distinct character-state pairs and let $D_{(c,i)}, D_{(d,j)} \subseteq \{0,\ldots,k-1\}$.
  A pair $(D_{(c,i)}, D_{(d,j)})$ is a \emph{valid descendant set pair} provided 
  \begin{equation}
    \label{eq:MSAC}
    \tag{MSAC}
    f^+_p(D_{(c,i)}) - f^+_p(D_{(d,j)}) \ge 0
  \end{equation}
  for all samples $p$; and additionally if $c = d$ then $D_{(c,j)} \subsetneq D_{(c,i)}$.
\end{definition}

There are potentially many valid descendant set pairs as shown by the following lemma.
\begin{lemma}
  \label{lem:supersub}
  Let $(D_{(c,i)}, D_{(d,j)})$ be a valid descendant set pair. If $D_{(c,i)} \subseteq D'_{(c,i)}$ and $D'_{(d,j)} \subseteq D_{(d,j)}$ then $(D'_{(c,i)}, D'_{(d,j)})$ is a valid descendant set pair.
\end{lemma}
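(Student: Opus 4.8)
The plan is to exploit the fact that the cumulative frequency map $S \mapsto f^+_p(S) = \sum_{l \in S} f_{p,(c,l)}$ is monotone nondecreasing in $S$, because every $f_{p,(c,l)}$ is nonnegative, and then to check that each of the two clauses of Definition~\ref{def:valid_state_set_pair} survives the operation of enlarging the first set and shrinking the second. So the proof splits into verifying \eqref{eq:MSAC} for the new pair and, separately, verifying the strict-inclusion clause when $c = d$.

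First I would establish \eqref{eq:MSAC} for $(D'_{(c,i)}, D'_{(d,j)})$. Fix a sample $p$. From $D_{(c,i)} \subseteq D'_{(c,i)}$ and nonnegativity of the frequencies we get $f^+_p(D'_{(c,i)}) \ge f^+_p(D_{(c,i)})$, and from $D'_{(d,j)} \subseteq D_{(d,j)}$ we get $f^+_p(D'_{(d,j)}) \le f^+_p(D_{(d,j)})$. Combining these two bounds with the assumed validity of $(D_{(c,i)}, D_{(d,j)})$ yields
\[
  f^+_p(D'_{(c,i)}) - f^+_p(D'_{(d,j)}) \;\ge\; f^+_p(D_{(c,i)}) - f^+_p(D_{(d,j)}) \;\ge\; 0 ,
\]
which is exactly \eqref{eq:MSAC} for the primed pair; since $p$ was arbitrary this holds for all samples.

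Second I would dispose of the additional clause that is only active when $c = d$. In that case $D'_{(d,j)} = D'_{(c,j)}$ and the validity hypothesis gives $D_{(c,j)} \subsetneq D_{(c,i)}$; chaining the containments of the lemma statement, $D'_{(c,j)} \subseteq D_{(c,j)} \subsetneq D_{(c,i)} \subseteq D'_{(c,i)}$, so $D'_{(c,j)} \subsetneq D'_{(c,i)}$ as required. Both conditions of Definition~\ref{def:valid_state_set_pair} now hold, so $(D'_{(c,i)}, D'_{(d,j)})$ is a valid descendant set pair. There is essentially no obstacle in this argument; the only point that warrants a moment of care is the bookkeeping in the $c = d$ case, namely checking that a non-strict inclusion followed by a strict one composes to a strict inclusion, which it does because one link in the chain is strict.
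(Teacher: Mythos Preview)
Your proof is correct and follows essentially the same approach as the paper: both use monotonicity of the cumulative frequency (from nonnegativity of the $f_{p,(c,l)}$) to verify \eqref{eq:MSAC}, and both handle the $c=d$ clause by chaining the inclusions $D'_{(c,j)} \subseteq D_{(c,j)} \subsetneq D_{(c,i)} \subseteq D'_{(c,i)}$. Your write-up is slightly more explicit about the chaining of inequalities and the strictness bookkeeping, but there is no substantive difference.
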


The Multi-State Ancestry Condition \eqref{eq:MSAC} is a generalization of the Ancestry Condition 
to $k \geq 2$. 
The following proposition shows that \eqref{eq:MSAC} is a necessary condition for solutions of the \Problem.

\begin{proposition}
  Let $T$ be a complete perfect phylogeny tree that generates $\mathcal{F}$. If $(c,i) \prec_T (d,j)$ then there exist a valid descendant set pair $(D_{(c,i)}, D_{(d,j)})$.
\end{proposition}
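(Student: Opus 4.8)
The plan is to exhibit a valid descendant set pair explicitly: take $D_{(c,i)}$ and $D_{(d,j)}$ to be the descendant sets that $T$ itself determines, and then verify the two conditions of Definition~\ref{def:valid_state_set_pair} directly (that these are subsets of $\{0,\ldots,k-1\}$ is immediate). The one identity I would establish first is $f^+_p(D_{(c,i)}) = \sum_{v_{(e,h)} \in \overline{T}_{(c,i)}} u_{p,(e,h)}$, valid for every character-state pair and every sample $p$: since for a fixed character $c$ the subtrees $\{T_{(c,l)} \mid l \in D_{(c,i)}\}$ are pairwise disjoint with union $\overline{T}_{(c,i)}$, summing the defining relation $f_{p,(c,l)} = \sum_{v_{(e,h)} \in T_{(c,l)}} u_{p,(e,h)}$ over $l \in D_{(c,i)}$ collapses the left-hand side to $f^+_p(D_{(c,i)})$ and the right-hand side to the claimed sum of usages over $\overline{T}_{(c,i)}$. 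This is essentially the computation underlying Lemma~\ref{lem:U}, and it is the only step that requires any real bookkeeping.

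Given that identity, \eqref{eq:MSAC} drops out. Because $(c,i) \prec_T (d,j)$, the vertex $v_{(d,j)}$ lies in the subtree $\overline{T}_{(c,i)}$ rooted at $v_{(c,i)}$, so the entire subtree $\overline{T}_{(d,j)}$ rooted at $v_{(d,j)}$ is contained in $\overline{T}_{(c,i)}$. Since $T$ generates $\mathcal{F}$, the matrix $U$ of Lemma~\ref{lem:U} is a genuine usage matrix, hence $u_{p,(e,h)} \ge 0$ for all vertices and samples; dropping the terms indexed by $\overline{T}_{(c,i)} \setminus \overline{T}_{(d,j)}$ can therefore only decrease the sum, which yields $f^+_p(D_{(c,i)}) \ge f^+_p(D_{(d,j)})$ for every sample $p$.

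Finally I would dispatch the $c = d$ clause. If $(c,i) \prec_T (c,j)$, then transitivity of $\prec_T$ gives $D_{(c,j)} \subseteq D_{(c,i)}$; the inclusion is strict because $i \in D_{(c,i)}$ while $i \notin D_{(c,j)}$ --- otherwise $v_{(c,i)}$ and $v_{(c,j)}$ would each be ancestral to the other and hence coincide, forcing $i = j$ and contradicting that the two pairs are distinct. Thus $D_{(c,j)} \subsetneq D_{(c,i)}$, and $(D_{(c,i)}, D_{(d,j)})$ satisfies both clauses of Definition~\ref{def:valid_state_set_pair}. The only place where something could subtly go wrong is the partition claim used in the first identity, so that is where I would be most careful.
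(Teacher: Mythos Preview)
Your proof is correct and, in fact, cleaner than the paper's. You exhibit the $T$-determined descendant sets directly as the valid pair, establishing \eqref{eq:MSAC} via the identity $f^+_p(D_{(c,i)}) = \sum_{v_{(e,h)} \in \overline{T}_{(c,i)}} u_{p,(e,h)}$ together with the subtree containment $\overline{T}_{(d,j)} \subseteq \overline{T}_{(c,i)}$ and nonnegativity of the usages.

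The paper instead argues by contradiction: assuming no valid pair exists, it invokes Corollary~\ref{cor:supersub} to conclude that even the extreme pair $([k-1],\{j\})$ fails, then derives a contradiction by telescoping the \eqref{eq:MSSC} inequality along the path from $v_{(c,i)}$ to $v_{(d,j)}$. That telescoping step yields $f^+_p(D^T_{(c,i)}) \ge f^+_p(D^T_{(d,j)})$ for the $T$-determined sets, which via Lemma~\ref{lem:supersub} forces the extreme pair to be valid after all. Your argument and the paper's are morally the same---both ultimately rest on nonnegativity of the $u_{p,(e,h)}$---but you get there in one step via the cumulative-frequency identity rather than routing through \eqref{eq:MSSC}, a path argument, and a corollary whose own justification is left implicit. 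Your treatment of the $c=d$ clause is also more explicit than the paper's. The partition claim you flag as the delicate point is exactly right; it is the content of the decomposition $\overline{T}_{(c,i)} = \bigcup_{l \in D_{(c,i)}} T_{(c,l)}$ stated just before Lemma~\ref{lem:U}.
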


We now proceed to define the multi-state ancestry graph $G_\mathcal{F}$ whose edges correspond to valid descendant state pairs. This graph provides a combinatorial characterization of the solutions to the \Problem.

\begin{definition}
The \emph{multi-state ancestry graph} $G_{\mathcal{F}}$ of the frequency tensor $\mathcal{F}$ is an edge-labeled, directed multi-graph $G_{\mathcal{F}} = (V,E)$ whose vertices $v_{(c,i)}$ correspond to character-state pairs $(c,i)$ and whose multi-edges are $(v_{(c,i)},v_{(d,j)})$ for all valid descendant set pairs $(D_{(c,i)}, D_{(d,j)})$.
\end{definition}

Note that in the definition above, $v_{(1,0)},\ldots,v_{(n,0)}$ all refer to the same root vertex $v_{(*,0)}$.
In the $k=2$ case, $G_{\mathcal{F}}$ is a simple directed graph and 
the solutions to the \Problem\ are spanning trees of the ancestry graph that satisfy the sum condition \eqref{eq:SC}  \cite{ElKebir:2015by}.
When $k>2$, $G_{\mathcal{F}}$ becomes a directed edge-labeled multi-graph, and the labels of the multi-edges further constrain the set of allowed spanning trees.  We formalize this constraint by defining a \emph{threaded spanning tree} as follows.

\begin{definition}
  \label{def:threaded}
  A rooted subtree $T$ of $G_{\mathcal{F}} = (V,E)$ is a \emph{threaded tree} provided (1) for every pair of adjacent edges $(v_{(c,i)}, v_{(d,j)}), (v_{(d,j)}, v_{(e,l)}) \in E(T)$ with corresponding labels $(D_{(c,i)}, D_{(d,j)})$ and $(D'_{(d,j)}, D'_{(e,l)})$, it holds that $D_{(d,j)} = D'_{(d,j)}$, and (2) 
  for every pair of vertices $v_{(c,i)},v_{(c,j)} \in V(T)$ it holds that $D_{(c,j)} \subseteq D_{(c,i)}$ if and only if $(c,i) \prec_T (c,j)$.
\end{definition}
Threaded spanning trees of the multi-state ancestry graph $G_\mathcal{F}$ satisfying \eqref{eq:MSSC} are the solutions of the \Problem\ as stated in the following theorem.

\begin{theorem}
  \label{thm:combinatorial_characterization}
  A complete perfect phylogeny tree $T$ generates $\mathcal{F}$ if and only if $T$ is a threaded spanning tree of the multi-state ancestry graph $G_\mathcal{F}$  such that \eqref{eq:MSSC} holds.
\end{theorem}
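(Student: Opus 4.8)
The plan is to prove both implications, using Theorem~\ref{thm:characterization} (a complete perfect phylogeny tree generates $\mathcal{F}$ iff \eqref{eq:MSSC} holds) and the Proposition above (for a generating tree $T$, the descendant sets induced by $T$ along any ancestral pair form a valid descendant set pair). For a complete perfect phylogeny tree $T$ write $D^T_{(c,i)} = \{ j \mid (c,i) \prec_T (c,j) \}$; this is the descendant set $D_{(c,i)}$ ``determined by $T$'' in the sense of the earlier text, and I reserve the word \emph{label} for the abstract sets attached to edges of $G_\mathcal{F}$. The forward direction is essentially bookkeeping; the reverse direction carries the content.

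\emph{Forward direction.} Suppose $T$ generates $\mathcal{F}$. Since $T \in \mathcal{T}_{n,k}$, its vertex set is exactly the set of character-state pairs with the roots identified, so as an unlabeled graph $T$ is a spanning tree of $G_\mathcal{F}$; it remains to supply edge labels. Label each edge $(v_{(c,i)}, v_{(d,j)})$ of $T$ by $(D^T_{(c,i)}, D^T_{(d,j)})$. By the Proposition applied to the parent–child pair $(c,i) \prec_T (d,j)$, this is a valid descendant set pair (the strict containment when $c=d$ follows from transitivity of $\prec_T$ together with $i \in D^T_{(c,i)} \setminus D^T_{(c,j)}$), so the labeled $T$ is a spanning tree of $G_\mathcal{F}$. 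The first threading condition of Definition~\ref{def:threaded} holds because along any path $v_{(c,i)} \to v_{(d,j)} \to v_{(e,l)}$ the two labels assigned to the middle vertex are both literally $D^T_{(d,j)}$. The second threading condition is the equivalence $D^T_{(c,j)} \subseteq D^T_{(c,i)} \iff (c,i) \prec_T (c,j)$: ``$\Leftarrow$'' is transitivity of $\prec_T$, and ``$\Rightarrow$'' holds because $j \in D^T_{(c,j)} \subseteq D^T_{(c,i)}$ gives $(c,i) \prec_T (c,j)$. Finally \eqref{eq:MSSC} holds by Theorem~\ref{thm:characterization}, so $T$ is a threaded spanning tree of $G_\mathcal{F}$ satisfying \eqref{eq:MSSC}.

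\emph{Reverse direction.} Suppose $T$ is a threaded spanning tree of $G_\mathcal{F}$ satisfying \eqref{eq:MSSC}. First I would check that $T$, equipped with the natural state labeling (root all-zero; each non-root $v_{(c,i)}$ inherits its parent's state vector with coordinate $c$ reset to $i$), is a complete perfect phylogeny tree in $\mathcal{T}_{n,k}$: it has $n(k-1)+1$ vertices by construction of $G_\mathcal{F}$, so the only thing to verify is that each $T_{(c,i)}$ is connected, which follows because the $c$-state of a vertex $w$ is the index $i$ of its deepest ancestor of the form $v_{(c,\cdot)}$, and this index only increases (down the tree) when one passes through another $v_{(c,\cdot)}$-vertex, so the vertices with $c$-state $i$ form exactly a subtree rooted at $v_{(c,i)}$. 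Second — the crux — I would show that the edge labels of $T$ coincide with the induced descendant sets, i.e.\ the label $D_{(c,i)}$ attached to vertex $v_{(c,i)}$ (well-defined by the first threading condition, and taken to be $\{0,\dots,k-1\}$ at the root) equals $D^T_{(c,i)}$. The second threading condition gives $D^T_{(c,i)} = \{ j \mid D_{(c,j)} \subseteq D_{(c,i)} \}$; combining this with reflexivity of valid-descendant-set labels ($i \in D_{(c,i)}$) and their transitivity along tree edges ($j \in D_{(c,i)}$ implies $D_{(c,j)} \subseteq D_{(c,i)}$) squeezes both containments and yields $D_{(c,i)} = D^T_{(c,i)}$. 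Once labels and induced descendant sets agree, the hypothesis \eqref{eq:MSSC} on $T$ is exactly the \eqref{eq:MSSC} inequality of Theorem~\ref{thm:characterization}, which therefore gives that $T$ generates $\mathcal{F}$.

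\emph{Main obstacle.} I expect the only genuinely delicate point to be Step~2 of the reverse direction: forcing the abstract edge labels of a threaded spanning tree to equal the descendant sets the perfect-phylogeny structure demands. The second threading condition only constrains containments between labels, so pinning each label down requires the auxiliary facts that labels occurring in $G_\mathcal{F}$ are reflexive and behave transitively along $T$'s edges; establishing these means unwinding Definition~\ref{def:valid_state_set_pair} (and likely invoking Lemma~\ref{lem:supersub} to reduce to reflexive labels), and then propagating a single consistent label to every vertex via the first threading condition, handling the root separately since it has no incoming edge. The connectivity check in Step~1 is routine, and everything after the label-matching step is a re-reading of Theorem~\ref{thm:characterization}, so essentially all the work lies in this label-matching lemma.
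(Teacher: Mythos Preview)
Your approach is essentially the paper's: both directions pivot on Theorem~\ref{thm:characterization}, the forward direction labels each edge with the tree's own descendant sets $(D^T_{(c,i)},D^T_{(d,j)})$ and verifies Definition~\ref{def:threaded}, and the reverse direction argues that the threading conditions force the edge labels to coincide with the induced descendant sets $D^T$ before re-invoking Theorem~\ref{thm:characterization}. The paper dispatches your ``main obstacle'' in a single sentence (condition~(2) of Definition~\ref{def:threaded} together with $T$ spanning $G_\mathcal{F}$ implies the labels agree with the descendant sets of Theorem~\ref{thm:characterization}), so your more explicit treatment of that label-matching step is additional detail on the same argument rather than a different route.
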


In previous work \cite{ElKebir:2015by}, we derived a hardness result for the \Problem~in the case where $k=2$ and the number of samples $m = O(n)$.
Here, we prove a stronger hardness result where $m = 2$.
\begin{theorem}
  \Problem~is NP-complete even if $k=2$ and $m=2$.
\end{theorem}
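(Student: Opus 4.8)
The plan is to show membership in NP and then give a polynomial-time reduction from the NP-complete \textsc{Subset Sum} problem. Membership is routine: by Theorem~\ref{thm:combinatorial_characterization} a solution is a threaded spanning tree $T$ of $G_\mathcal{F}$, and from $T$ one computes the candidate usage matrix via \eqref{eq:U} and checks in polynomial time that all its entries are non-negative (and sum to $1$ per sample); since $T$ has $O(nk)$ vertices and all arithmetic is over rationals of polynomial bit-length, this is a polynomial-time verifier, so \Problem{} is in NP even for $k=m=2$.

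For hardness I reduce from \textsc{Subset Sum}: given positive integers $a_1,\dots,a_n$ with sum $\Sigma$ and target $t$, is there $S\subseteq[n]$ with $\sum_{i\in S}a_i=t$? I first reduce to the case $1\le t\le\Sigma-1$, $n\ge 2$, and---crucially---the $a_i$ \emph{distinct}; the last step is a standard fact (e.g.\ replace each $a_i$ by $3\Sigma n^2a_i+i$, adjoin fresh elements $1,\dots,n$, and shift the target, exploiting that the subset sums of $\{1,\dots,n\}$ are exactly $\{0,1,\dots,n(n+1)/2\}$), after which I sort so that $a_1<\dots<a_n$. The reduction outputs a $2\times 2\times(n+1)$ frequency tensor $\mathcal{F}$ with one ``pocket'' character $X$ and $n$ ``item'' characters $I_1,\dots,I_n$; I specify the $1$-state frequencies (the $0$-state ones being their complements): $f_{1,(X,1)}=t/\Sigma$, $f_{2,(X,1)}=1/2$, $f_{1,(I_i,1)}=a_i/\Sigma$, and $f_{2,(I_i,1)}=(n+1-i)\varepsilon$, where $\varepsilon$ is a fixed rational small enough that $\sum_i f_{2,(I_i,1)}<1/2$. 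All entries lie in $(0,1)$ and have polynomial bit-length, and two features are essential: in the first sample the root has budget exactly $1=f_{1,(X,1)}+\sum_i f_{1,(I_i,1)}$, and in the second sample the $I_i$ are ordered oppositely to the first sample and all lie strictly below $f_{2,(X,1)}=1/2$.

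It then remains to prove $\mathcal{F}$ is a yes-instance of \Problem{} iff $S$ exists. Given such an $S$, I build the tree rooted at $r=v_{(*,0)}$ whose children are $X$ and all $I_i$ with $i\notin S$, and whose remaining vertices $I_i$ ($i\in S$) are the children of $X$: all required edges lie in $G_\mathcal{F}$ ($r$ dominates everything; $X$ dominates $I_i$ for $i\in S$ since $a_i\le t=\sum_{j\in S}a_j$ and $1/2>f_{2,(I_i,1)}$), the threading conditions hold trivially (all descendant sets are $\{0,1\}$ at the root and $\{1\}$ elsewhere), and the Sum Condition \eqref{eq:MSSC} is tight in the first sample at $r$ and at $X$ precisely because $\sum_{i\in S}a_i=t$ and slack in the second sample by the choice of $\varepsilon$; so $T$ generates $\mathcal{F}$. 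Conversely, suppose $T$ generates $\mathcal{F}$. Because the $a_i$ are distinct and the second sample reverses their order, the item characters are pairwise incomparable, so in the threaded spanning tree no $I_i$ can be the parent of another; also no item dominates $X$, so the parent of $X$ is $r$, and hence the parent of each $I_i$ is $r$ or $X$. Writing $S=\{i:I_i\text{ is a child of }X\}$, the Sum Condition at $X$ and at $r$ in the first sample gives $\sum_{i\in S}a_i\le t$ and $\sum_{i\notin S}a_i\le\Sigma-t$; these add to $\Sigma=\sum_i a_i$, so both are equalities and $\sum_{i\in S}a_i=t$.

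The main obstacle is exactly this converse direction: one must rule out every unintended shape of the solution tree. The three bad cases are (a) an item attaching directly under $r$ alongside $X$---excluded because attaching $X$ already exhausts the root's first-sample budget of $1$, leaving no room; (b) an item $I_j$ nested below another item, which would ``hide'' $a_j$ from the counting at $X$ and at $r$ and break the reduction---this is exactly why the instance must be preprocessed to have distinct $a_i$, as that is what makes the item characters pairwise incomparable in $G_\mathcal{F}$; and (c) $X$ being buried below an item---excluded by the second sample. The delicate point is to choose the frequencies so that (a)--(c) fail simultaneously while the honest solution stays feasible (tight in the first sample, slack in the second); once that is arranged, the remaining correctness argument is the bookkeeping sketched above.
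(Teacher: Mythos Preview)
Your reduction is essentially the same as the paper's: both reduce from \textsc{Subset Sum} using a two-sample construction where the first sample carries the numerical values and the second sample uses $\varepsilon$-perturbations in the opposite order to make the item characters pairwise incomparable and to pin down the position of the pocket character(s). The paper uses \emph{two} pocket characters with first-sample values $d/e$ and $(e-d)/e$, so that the items must split between them; you use a \emph{single} pocket $X$ with value $t/\Sigma$ and let the items split between $X$ and the root. Both arguments then finish by observing that the Sum Condition forces the split to match the target exactly. Your one-pocket variant is marginally more economical and requires the explicit distinctness preprocessing you describe; the paper's two-pocket version does not need that step (but in exchange its ancestry-graph description tacitly assumes the $b_\ell$ are separated by the second sample, which amounts to the same thing).

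Your core argument (the third paragraph) is correct, but your closing commentary is internally inconsistent with it. You write that ``the root has budget exactly $1=f_{1,(X,1)}+\sum_i f_{1,(I_i,1)}$'', yet this sum equals $t/\Sigma+1>1$; and you list as bad case~(a) ``an item attaching directly under $r$ alongside $X$---excluded'', whereas your own intended tree places every $I_i$ with $i\notin S$ as a child of $r$. The converse actually works exactly as you already argued: every item's parent is $r$ or $X$ (by incomparability and the second sample), $X$'s parent is $r$, the Sum Condition at $X$ gives $\sum_{i\in S}a_i\le t$, the Sum Condition at $r$ gives $t+\sum_{i\notin S}a_i\le\Sigma$, and adding forces both to equality. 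So the proof is sound; only that final paragraph needs to be rewritten to agree with it.
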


\subsection{The~\ProblemCladisticFull}

A special case of the multi-state perfect phylogeny problem is the case of \emph{cladistic} multi-state characters~\cite{Fernandez:2001}, where we are given a set $\mathcal{S} = \{S_c \mid c \in [n]\}$ of \emph{state trees} for each character. A \emph{state tree} $S_c$ is a tree whose vertex set $\{0,\ldots,k-1\}$ are the states for character $c$, and whose edges describe the ancestral relationships between the states of character $c$.
We say that a perfect phylogeny $T$ is \emph{consistent} with $\mathcal{S}$ provided $(c,i) \prec_T (c,j)$ if and only if $i \prec_{S_{c}} j$   for all characters $c$ and states $i,j$.

The cladistic multi-state perfect phylogeny problem reduces to the binary case and is polynomial-time decidable~\cite{Fernandez:2001}. 
Thus, it is not surprising that the PPMDP also simplifies in the cladistic case.
In particular, the state tree $S_c$ determines the descendant state sets $D_{(c,i)}$ for each state $i$:  namely, \ $D_{(c,i)} = \{j \mid i \prec_{S_c} j \}$.
Therefore in the cladistic case, the multi-state ancestry graph becomes a simple graph with edges $(v_{(c,i)}, v_{(d,j)})$  labeled by $(D_{(c,i)}, D_{(d,j)})$, provided that \eqref{eq:MSAC} holds.
Moreover, since solutions of the \Problem\ have to be consistent with $S_c$, they will not contain edges $(v_{(c,i)}, v_{(c,j)})$ where $i$ is not the parent of $j$ in $S_c$ for each character $c$. 
We thus remove all such edges and arrive at the cladistic ancestry graph $G_{(\mathcal{F}, \mathcal{S})}$.
The following proposition formalizes the solutions in the cladistic case.
\begin{proposition}
A complete perfect phylogeny tree $T$ generates $\mathcal{F}$ and is consistent with state trees $\mathcal{S}$ if and only if $T$ is a threaded spanning tree of the cladistic ancestry graph $G_{(\mathcal{F},\mathcal{S})}$ such that \eqref{eq:MSSC} holds.
\end{proposition}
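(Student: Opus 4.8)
The plan is to derive the proposition directly from Theorem~\ref{thm:combinatorial_characterization} by identifying the cladistic ancestry graph $G_{(\mathcal{F},\mathcal{S})}$ with the part of $G_\mathcal{F}$ inside which a threaded spanning tree that is consistent with $\mathcal{S}$ is constrained to lie. The one elementary ingredient I would establish first is a fact about state trees: with $D_{(c,i)} = \{j \mid i \prec_{S_c} j\}$, for any two states $i,j$ of a character $c$ one has $D_{(c,j)} \subseteq D_{(c,i)}$ if and only if $i \prec_{S_c} j$ (the forward direction uses $j \in D_{(c,j)}$), and this containment is strict precisely when $i \neq j$. In particular, whenever $i$ is the parent of $j$ in $S_c$ we get $D_{(c,j)} \subsetneq D_{(c,i)}$, so by Definition~\ref{def:valid_state_set_pair} every edge of $G_{(\mathcal{F},\mathcal{S})}$ carrying its cladistic label is a valid descendant set pair; hence $G_{(\mathcal{F},\mathcal{S})}$ is a sub(multi)graph of $G_\mathcal{F}$ in which each vertex $v_{(c,i)}$ receives the single fixed descendant set $D_{(c,i)}$ on all incident edges.

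For the forward direction, suppose $T$ generates $\mathcal{F}$ and is consistent with $\mathcal{S}$. The descendant sets determined by $T$ are $\{j \mid (c,i) \prec_T (c,j)\}$, which by consistency equal the cladistic sets $D_{(c,i)}$, so by Theorem~\ref{thm:combinatorial_characterization} the tree $T$, labelled by these sets, is a threaded spanning tree of $G_\mathcal{F}$ satisfying \eqref{eq:MSSC}. It remains to check that no edge of $T$ is lost when passing to $G_{(\mathcal{F},\mathcal{S})}$, i.e.\ that every within-character edge $(v_{(c,i)},v_{(c,j)})$ of $T$ has $i$ equal to the parent of $j$ in $S_c$: here $v_{(c,i)}$ is the parent of $v_{(c,j)}$ in $T$, so $(c,i) \prec_T (c,j)$ with no vertex strictly between; if $i$ were not the $S_c$-parent of $j$, consistency would furnish a third state $l$ with $(c,i) \prec_T (c,l) \prec_T (c,j)$, putting $v_{(c,l)}$ strictly between $v_{(c,i)}$ and $v_{(c,j)}$ on the tree edge, a contradiction. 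Thus $T$ is a spanning tree of $G_{(\mathcal{F},\mathcal{S})}$; conditions (1) and (2) of Definition~\ref{def:threaded} are retained --- (1) trivially, since the label at each vertex is fixed --- and \eqref{eq:MSSC} is unchanged, so $T$ is a threaded spanning tree of $G_{(\mathcal{F},\mathcal{S})}$ satisfying \eqref{eq:MSSC}.

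For the converse, let $T$ be a threaded spanning tree of $G_{(\mathcal{F},\mathcal{S})}$ satisfying \eqref{eq:MSSC}. Since $G_{(\mathcal{F},\mathcal{S})}$ is a sub(multi)graph of $G_\mathcal{F}$, $T$ is a spanning subtree of $G_\mathcal{F}$ with the same edge labels; condition (1) of Definition~\ref{def:threaded} holds automatically because every incident label at $v_{(c,i)}$ is the fixed set $D_{(c,i)}$, and condition (2) together with \eqref{eq:MSSC} are the standing hypotheses, so $T$ is a threaded spanning tree of $G_\mathcal{F}$ satisfying \eqref{eq:MSSC}, whence $T$ generates $\mathcal{F}$ by Theorem~\ref{thm:combinatorial_characterization}. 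Consistency with $\mathcal{S}$ then follows by combining condition (2) with the state-tree fact: for states $i \neq j$ of character $c$, $(c,i) \prec_T (c,j)$ iff $D_{(c,j)} \subseteq D_{(c,i)}$ iff $i \prec_{S_c} j$, and the case $i = j$ is immediate from reflexivity.

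I expect the only delicate point to be the within-character-edge argument in the forward direction: one must use consistency to show that the edges of $T$ joining two states of a single character are exactly the $S_c$-parent-child pairs, so that pruning $G_\mathcal{F}$ down to $G_{(\mathcal{F},\mathcal{S})}$ destroys no edge of $T$. Everything else is bookkeeping against Definitions~\ref{def:valid_state_set_pair} and~\ref{def:threaded} and the established characterization of Theorem~\ref{thm:combinatorial_characterization}.
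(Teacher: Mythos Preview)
Your proposal is correct and follows essentially the same approach as the paper: both reduce to Theorem~\ref{thm:combinatorial_characterization} and use the identity between the cladistic descendant sets $D_{(c,i)} = \{j \mid i \prec_{S_c} j\}$ and the tree-determined descendant sets $\{j \mid (c,i) \prec_T (c,j)\}$ to transfer the threading condition into consistency with $\mathcal{S}$. The paper's own proof is considerably more terse---it only spells out the backward direction (threaded spanning tree of $G_{(\mathcal{F},\mathcal{S})}$ implies consistency) and lets the rest follow implicitly from Theorem~\ref{thm:combinatorial_characterization}; in particular it does not isolate the within-character-edge argument you flag as the delicate point of the forward direction, so your write-up is in fact more complete than the paper's on that score.
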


\section{Algorithm for the \ProblemCladistic}
In this section we describe an algorithm to find all threaded spanning trees in the cladistic ancestry graph $G_{(\mathcal{F},\mathcal{S})}$ that satisfy \eqref{eq:MSSC}.
We adapt the Gabow-Myers algorithm~\cite{Gabow:1978cg} for enumerating spanning trees to enumerate threaded spanning trees that satisfy \eqref{eq:MSSC}, following an approach used in \cite{Popic:2015} for the two-state problem with uncertain frequencies.
The crucial observation is that any subtree of a solution $T$ must also be a consistent, threaded tree and satisfy \eqref{eq:MSSC}.
Here, a subtree $T'$ is \emph{consistent} if it is rooted at $v_{(*,0)}$, and for each character $c$ the set of states $\{i \mid v_{(c,i)} \in V(T') \}$ induces a connected subtree in $S_c$.
Since the Gabow-Myers algorithm constructively grows spanning trees, we can arrive at the desired threaded spanning trees by maintaining the following invariant.

\begin{invariant}
  \label{inv:T}
  Let tree $T$ be the partially constructed tree.
  It holds that (1) for each $v_{(c,i)} \in V(T) \setminus \{v_{(*,0)}\}$ and parent $\pi(i)$ of $i$ in $S_c$, the vertex $v_{(c,\pi(i))}$ is the first vertex labeled by character $c$ on the unique path from $v_{(*,0)}$ to $v_{(c,i)}$; and (2) for each vertex $v_{(c,i)} \in V(T)$, \eqref{eq:MSSC} holds for $T$ and $\mathcal{F}$.
\end{invariant}

In addition, we maintain a subset of edges $H \subseteq E(G_{(\mathcal{F},\mathcal{S})})$ called the \emph{frontier} that can be used to extend $T$.  The requirement is that any edge $(v_{(c,i)}, v_{(d,j)}) \in H$ can be used to extend the partial tree $T$ without introducing a cycle or violating Invariant~\ref{inv:T}, which is captured by the following invariant.

\begin{invariant}
  \label{inv:H}
  Let tree $T$ be the partially constructed tree.
  For every edge $(v_{(c,i)},v_{(d,j)}) \in H$, (1) $v_{(c,i)} \in V(T)$, (2) $v_{(d,j)} \not \in V(T)$, and (3) Invariant~\ref{inv:T} holds for $T'$ where $E(T') = E(T) \cup \{(v_{(c,i)},v_{(d,j)})\}$.
\end{invariant}

Supplementary Algorithm~\ref{alg:cladistic} maintains the two invariants and reports all solutions to a \ProblemCladistic~instance $(\mathcal{F}, \mathcal{S})$.
The initial call is \textsc{Enumerate}$(G, \{v_{(*,0)}\}, \delta(*,0))$. Here, $\delta(*,0)$ corresponds to the set of outgoing edges from vertex $v_{(*,0)}$ of $G_{(\mathcal{F},\mathcal{S})}$, which by definition satisfies Invariant~\ref{inv:H}.

The running time is the same as the original Gabow-Myers algorithm: $O(|V| + |E| + |E| \times K)$ where $K$ is the number of spanning trees, disregarding \eqref{eq:MSSC}, in $G_{(\mathcal{F},\mathcal{S})} = (V,E)$.
In the case where $f_{p,(c,0)} = 1$ for all characters $c$ and samples $p$, the number of spanning trees of $G_{(\mathcal{F},\mathcal{S})}$ equals the number of threaded spanning trees that satisfy \eqref{eq:MSSC}.
Further details are in Supplementary Methods~\ref{sec:alg}.

In order to extend this algorithm to the general \Problem, we need to update the descendant sets $D_{(c,i)}$ as we grow the tree.
This in turn has implications for how we maintain the frontier: for each potential frontier edge, we need to consider how its addition to $T$ affects the descendant sets of the existing vertices of $T$ and thereby \eqref{eq:MSSC}.
We leave this extension as future work.

\subsubsection{Handling Errors in Frequency Tensor.}
In applications to real data, the frequencies $\mathcal{F} = [[f_{p,(c,i)}]]$ are often estimated from sequencing data and thus may have errors.
We account for errors in frequencies by taking as input nonempty intervals $[\bunderline{f}_{p,(c,i)}, \boverline{f}_{p,(c,i)}]$ that contain the true frequency $f_{p,(c,i)}$ for character-state pairs $(c,i)$ in samples $p$.
A tree $T$ is \emph{valid} if there exists a frequency tensor $\mathcal{F}' = [[f'_{p,(c,i)}]]$ such that $\bunderline{f}_{p,(c,i)} \le f'_{p,(c,i)} \le \boverline{f}_{p,(c,i)}$ and $\mathcal{F}'$ generates $T$---i.e.\ \eqref{eq:MSSC} holds for $T$ and $\mathcal{F}'$.
A valid tree $T$ is \emph{maximal} if there exists no valid supertree $T'$ of $T$, i.e.\ $E(T) \subsetneq E(T')$.
The task now becomes to find the set of all maximal valid trees.
Note that a maximal valid tree is not necessarily a spanning tree of $G_{(\mathcal{F},\mathcal{S})}$.

We start by recursively defining $\hat{\mathcal{F}} = [[\hat{f}_{p,(c,i)}]]$ where
\begin{equation}
  \hat{f}_{p,(c,i)} = \max\Big\{\bunderline{f}_{p,(c,i)}, \sum_{(d,j) \in \delta(c,i)} \hat{f}^+_p(D_{(d,j)}) - \sum_{j \in D(c,i) \setminus \{i\}} \hat{f}_{p,(c,j)}\Big\}.
\end{equation}
The intuition here is to satisfy \eqref{eq:MSSC} by assigning the smallest possible values to the children. We do this bottom-up from the leaves and set $\hat{f}_{p,(c,i)} = \bunderline{f}_{p,(c,i)}$ for each leaf vertex $v_{(c,i)}$.
It turns out that $\bunderline{f}_{p,(c,i)} \leq \hat{f}_{p,(c,i)} \leq \overline{f}_{p,(c,i)}$ is a necessary condition for $T$ to be valid as shown in the  following lemma.
\begin{lemma}
\label{lem:treevalid}
  If tree $T$ is valid then $\bunderline{f}_{p,(c,i)} \le \hat{f}_{p,(c,i)} \le \boverline{f}_{p,(c,i)}$ for all $p$ and $(c,i)$.
\end{lemma}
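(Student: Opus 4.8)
The lower inequality $\bunderline{f}_{p,(c,i)} \le \hat{f}_{p,(c,i)}$ holds by construction, since $\bunderline{f}_{p,(c,i)}$ is one of the two arguments of the maximum defining $\hat{f}_{p,(c,i)}$; so the content of the lemma is the upper inequality $\hat{f}_{p,(c,i)} \le \boverline{f}_{p,(c,i)}$. The plan is to prove the stronger statement that $\hat{\mathcal{F}}$ is a pointwise lower bound for \emph{every} admissible tensor: assuming $T$ is valid, fix a witness $\mathcal{F}' = [[f'_{p,(c,i)}]]$ with $\bunderline{f}_{p,(c,i)} \le f'_{p,(c,i)} \le \boverline{f}_{p,(c,i)}$ for all $p,(c,i)$ and with \eqref{eq:MSSC} holding for $T$ and $\mathcal{F}'$, and show that $\hat{f}_{p,(c,i)} \le f'_{p,(c,i)}$ for all $p$ and $(c,i)$. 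The lemma then follows, since $f'_{p,(c,i)} \le \boverline{f}_{p,(c,i)}$.

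I would establish $\hat{\mathcal{F}} \le \mathcal{F}'$ by induction over the vertices of $T$ in reverse topological order, i.e.\ processing a vertex only after all of its $\prec_T$-descendants; this is exactly the order in which $\hat{\mathcal{F}}$ is computed. The recursion is well founded for the induction because every term on the right-hand side of the defining equation for $\hat{f}_{p,(c,i)}$ --- namely $\hat{f}_{p,(d,l)}$ for $l \in D_{(d,j)}$, $(d,j) \in \delta(c,i)$, and $\hat{f}_{p,(c,j)}$ for $j \in D_{(c,i)} \setminus \{i\}$ --- is indexed by a $\prec_T$-descendant of $(c,i)$ distinct from $(c,i)$. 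The base case is a leaf $v_{(c,i)}$, where $\delta(c,i) = \emptyset$ and $D_{(c,i)} = \{i\}$, so $\hat{f}_{p,(c,i)} = \bunderline{f}_{p,(c,i)} \le f'_{p,(c,i)}$. For the inductive step: if the maximum defining $\hat{f}_{p,(c,i)}$ is attained by its first argument, then $\hat{f}_{p,(c,i)} = \bunderline{f}_{p,(c,i)} \le f'_{p,(c,i)}$ as before; otherwise $\hat{f}_{p,(c,i)}$ equals its second argument, while rearranging \eqref{eq:MSSC} for $\mathcal{F}'$ at $(c,i)$ --- that is, $\sum_{l \in D_{(c,i)}} f'_{p,(c,l)} \ge \sum_{(d,j)\in\delta(c,i)} \sum_{l \in D_{(d,j)}} f'_{p,(d,l)}$ --- gives $f'_{p,(c,i)} \ge \sum_{(d,j)\in\delta(c,i)} \sum_{l \in D_{(d,j)}} f'_{p,(d,l)} - \sum_{j \in D_{(c,i)}\setminus\{i\}} f'_{p,(c,j)}$, so it suffices to compare this with $\sum_{(d,j)\in\delta(c,i)} \hat{f}^{+}_p(D_{(d,j)}) - \sum_{j \in D_{(c,i)}\setminus\{i\}} \hat{f}_{p,(c,j)}$ term by term.

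This comparison is where I expect the main difficulty. The inductive hypothesis makes $\sum_{(d,j)\in\delta(c,i)} \hat{f}^{+}_p(D_{(d,j)})$ dominated by its $\mathcal{F}'$-analogue, but it also makes $\sum_{j \in D_{(c,i)}\setminus\{i\}} \hat{f}_{p,(c,j)}$ dominated by the corresponding $\mathcal{F}'$-sum, and since this sum is \emph{subtracted} the hypothesis pushes the wrong way, so plain coordinatewise monotonicity is not enough. The device I would use is to run the induction on cumulative frequencies rather than individual ones. Write $\hat{g}_{p,(c,i)} := \hat{f}^{+}_p(D_{(c,i)})$ and $g'_{p,(c,i)} := \sum_{l \in D_{(c,i)}} f'_{p,(c,l)}$, and use that $D_{(c,i)}\setminus\{i\}$ is the disjoint union $\bigsqcup_{j'} D_{(c,j')}$ over the states $j'$ immediately below $i$ in the state tree $S_c$; then the defining equation for $\hat{\mathcal{F}}$ is equivalent to
\[
  \hat{g}_{p,(c,i)} = \max\Big\{\, \bunderline{f}_{p,(c,i)} + \sum_{j'} \hat{g}_{p,(c,j')}, \quad \sum_{(d,j)\in\delta(c,i)} \hat{g}_{p,(d,j)} \,\Big\},
\]
an identity in which every coefficient on the right is nonnegative, while \eqref{eq:MSSC} for $\mathcal{F}'$ gives $g'_{p,(c,i)} \ge \sum_{(d,j)\in\delta(c,i)} g'_{p,(d,j)}$ and $f'_{p,(c,i)} \ge \bunderline{f}_{p,(c,i)}$ gives $g'_{p,(c,i)} \ge \bunderline{f}_{p,(c,i)} + \sum_{j'} g'_{p,(c,j')}$, so $g'_{p,(c,i)}$ dominates both bracketed expressions; a clean bottom-up induction then yields $\hat{g}_{p,(c,i)} \le g'_{p,(c,i)}$ for all $p,(c,i)$. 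The remaining step --- and, I expect, the crux --- is to descend from this cumulative inequality to the coordinatewise bound $\hat{f}_{p,(c,i)} \le \boverline{f}_{p,(c,i)}$ using $\hat{f}_{p,(c,i)} = \hat{g}_{p,(c,i)} - \sum_{j'} \hat{g}_{p,(c,j')}$: the case where the first argument of the $\hat{\mathcal{F}}$-recursion is active at $(c,i)$ is immediate, while in the remaining case one has the equality $\hat{g}_{p,(c,i)} = \sum_{(d,j)\in\delta(c,i)} \hat{g}_{p,(d,j)}$ and must bound $\sum_{(d,j)\in\delta(c,i)} \hat{g}_{p,(d,j)} - \sum_{j'} \hat{g}_{p,(c,j')}$ above by $f'_{p,(c,i)}$ using only \eqref{eq:MSSC} for $\mathcal{F}'$ together with the cumulative inequality restricted to the children of $v_{(c,i)}$ and to the states of $c$ below $i$; closing this case requires exploiting the precise way in which the descendant sets of $T$ below $(c,i)$ decompose through the children $\delta(c,i)$, which is where the real work lies.
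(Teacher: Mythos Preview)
Your diagnosis of the termwise-comparison failure is exactly right, and --- notably --- the paper's own proof commits precisely this error: it applies the inductive hypothesis to both the added sum $\sum_{(d,j)\in\delta(c,i)}\hat f^+_p(D_{(d,j)})$ and the subtracted sum $\sum_{j\in D_{(c,i)}\setminus\{i\}}\hat f_{p,(c,j)}$ and concludes the desired inequality directly, which does not follow since the subtracted terms push the wrong way.

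Your cumulative-frequency reformulation is correct and does yield $\hat g_{p,(c,i)}\le g'_{p,(c,i)}$. But the final descent you flag as ``the crux'' cannot be closed --- not because you are missing a trick, but because the lemma as stated is false. Take $n=2$, $k=3$, one sample, and $T$ with edges $v_{(*,0)}\to v_{(c,1)}\to v_{(d,1)}\to v_{(c,2)}$ and $v_{(*,0)}\to v_{(d,2)}$. Use intervals $[\bunderline f_{(c,1)},\boverline f_{(c,1)}]=[0,0.1]$, $[\bunderline f_{(d,1)},\boverline f_{(d,1)}]=[0.5,1]$, and $[0,1]$ for $(c,2)$ and $(d,2)$. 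The tensor $f_{(c,1)}=0.1$, $f_{(c,2)}=0.5$, $f_{(d,1)}=0.5$, $f_{(d,2)}=0.1$ lies in these intervals, and \eqref{eq:MSSC} holds (the usages are $0.3,\,0.1,\,0,\,0.5,\,0.1$, all nonnegative), so $T$ is valid. But the recursion gives $\hat f_{(c,2)}=0$, then $\hat f_{(d,1)}=\max\{0.5,0\}=0.5$, and then
\[
\hat f_{(c,1)}=\max\bigl\{0,\ \hat f^+(D_{(d,1)})-\hat f_{(c,2)}\bigr\}=\max\{0,\,0.5-0\}=0.5>0.1=\boverline f_{(c,1)}.
\]
The cumulative bound does survive here ($\hat g_{(c,1)}=0.5\le 0.6=g'_{(c,1)}$), confirming that your induction for $\hat g$ is sound; it is exactly the passage from $\hat g$ back to $\hat f$ that fails, and it fails for a structural reason rather than for want of a missing lemma.
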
 
The pseudo code of the resulting \textsc{NoisyEnumerate} algorithm and further details are given in Supplementary Methods~\ref{sec:alg}.

\section{Multi-State Model for Copy-Number Aberrations in Cancer Sequencing}
\label{sec:cancer_model}

Our motivating application for the PPMPD is to analyze multi-sample cancer sequencing data.  A tumor is a collection of cells that have evolved from the normal, founder cell by a series of somatic mutations.  Here, we assume that  $m$ distinct samples from a tumor have been sequenced, each such sample being a mixture of cells with different somatic mutations.  Our aim is to reconstruct the phylogenetic tree describing the evolutionary relationships between subpopulations of cells (or \emph{clones}) within the tumor.  Earlier work on this problem\cite{Jiao:2014aa,ElKebir:2015by,Malikic:2015aa} has focused on the problem of single-nucleotide variants (SNVs) that have changed at most once in the progression from normal to tumor, and thus the phylogenetic tree is a two-state perfect phylogeny.  Here we consider three additional types of mutations, copy-neutral loss-of-heterozygosity (CN-LOH), single-copy deletion (SCD) and single-copy amplification (SCA), that are common in tumors.
We do this by modeling a position in the genome as a multi-state character.

\begin{figure}[tbp]
  \centering
\includegraphics[width=0.35\textwidth]{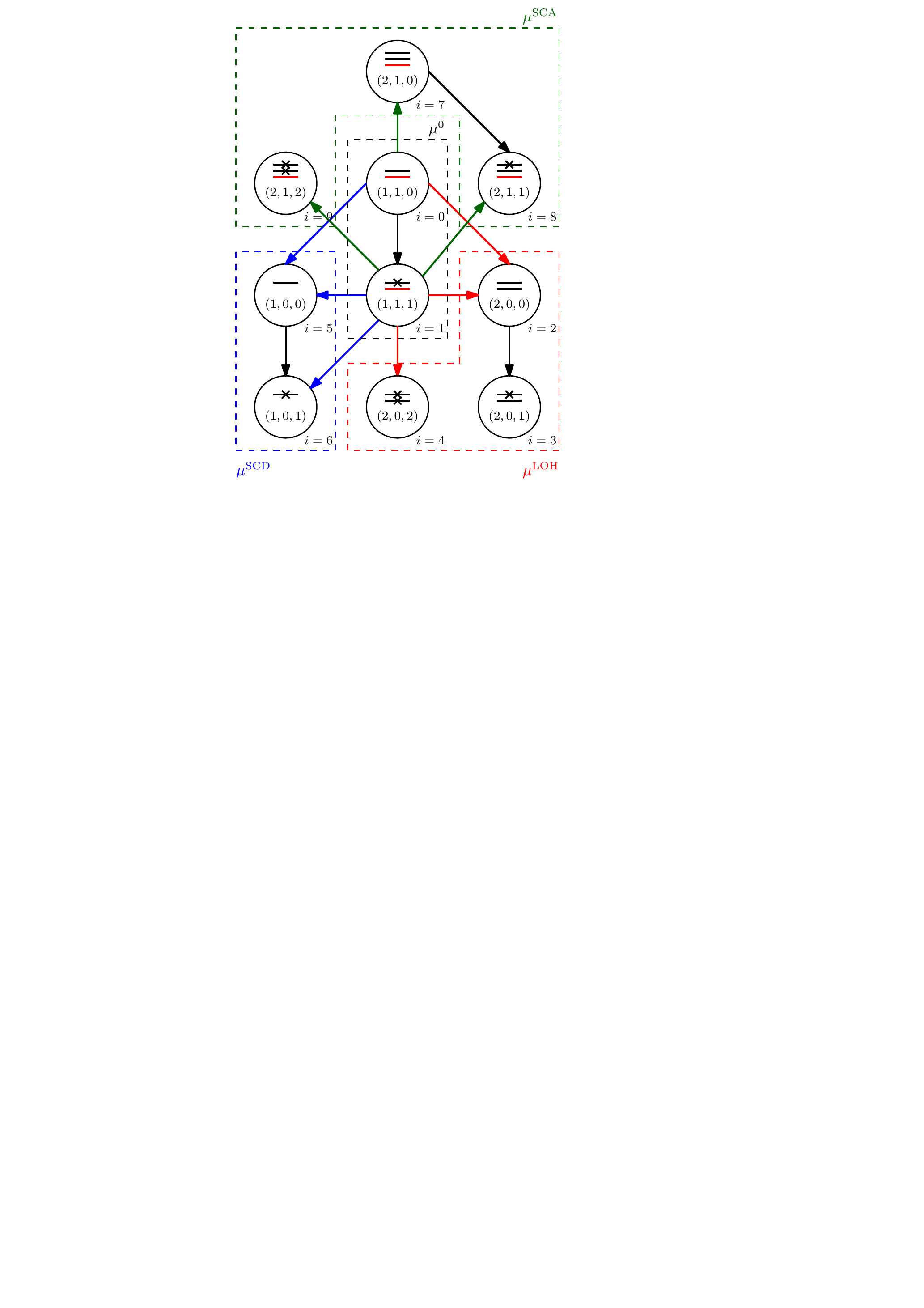}
\caption{\textbf{State graph for our model relating SNVs with CN-LOH, SCD and SCA events.} Edges are colored by the type of state transition:  black edges denote SNV events, the red edges denote CN-LOH events, the blue edges denote SCD events and the green edges denote SCA events.
  }
  \label{fig:CNA_SNV}
\end{figure}

The characters in our model are positions in the genome whose states we model with a triple $(x,y,z)$. Here,  $x$ and $y$ are the number of maternal and paternal copies of the position, respectively, and $z$ is the number of mutated copies.
We consider SNVs that are in regions that are unaffected by CNAs, or that have undergone CNA events that are copy-neutral loss-of-heterozygosity (CN-LOH), single-copy deletion (SCD) or single-copy amplification (SCA) events. A CN-LOH is an event where one chromosomal copy of a locus (either maternal or paternal) is lost and the remaining copy is duplicated so that the copy number of the locus remains 2 (diploid).  For each locus, we assume that the maternal copy number is at least as large as the paternal copy number, which, lacking phasing of chromosomes, does not constrain the model.
This results in the following ten states: non-mutated heterozygous diploid $(x_0, y_0, z_0) = (1,1,0)$, heterozygous diploid with SNV $(x_1, y_1, z_1) = (1,1,1)$, CN-LOH without SNV $(x_2, y_2, z_2) = (2,0,0)$, CN-LOH prior to SNV $(x_3, y_3, z_3) = (2,0,1)$, CN-LOH retaining SNV $(x_4, y_4, z_4) = (2,0,2)$, SCD without SNV $(x_5, y_5, z_5) = (1,0,0)$, SCD retaining SNV $(x_6, y_6, z_6) = (1,0,1)$, SCA without SNV $(x_7,y_7,z_7) = (2,1,0)$, SCA prior to SNV $(x_8,y_8,z_8) = (2,1,1)$ and SCA of SNV $(x_9,y_9,z_9) = (2,1,2)$.

Unfortunately, for a character (genomic locus) $c$ and sample $p$, we cannot derive the frequencies $f_{p,(c,i)}$ for states $i \in \{0,\ldots,9\}$ directly from DNA sequencing data.  
Instead, alignment of DNA sequence reads from a tumor and matched normal genome gives us measurements of three quantities: \emph{variant allele frequencies} for SNVs, which is the proportion of sequence reads that contain the mutant allele;  \emph{B-allele frequencies} for germline single-nucleotide polymorphisms which reveal LOH events; \emph{read-depth ratios} for larger regions, which are the total number of the reads of the region in the tumor sample divided by the total number of reads in the same region of a matched normal~\cite{Raphael:2014aa}.  We will show that with a few simplifying assumptions about the allowed state transitions, we can derive the frequencies $f_{p,(c,i)}$ from these three quantities as well as the state trees for each character.  This will give us an instance of \ProblemCladistic.

We define a \emph{state graph} to be a directed graph whose nodes are character states, and whose edges are allowed transitions between states (Figure~\ref{fig:CNA_SNV}).  Because the ancestor of all cells in the tumor is a normal cell, the root state is $(1,1,0)$.
We assume that a locus undergoes one SNV event and at most one CN-LOH, SCD or SCA event.
Thus we derive state trees for each character from the state graph by considering rooted subtrees of the graph that use one SNV edge and at most one CN-LOH, SCD or SCA edge.
This results in the set $\Sigma$ comprised of 13 possible state trees, which are shown in Supplementary~Figure~\ref{fig:LOH_SCD_SNV}.

We denote the variant allele frequency (VAF) of character $c$ in sample $p$ by $h_{p,c}$.
Using the read-depth ratio and BAFs, we determine whether a locus $c$ in sample $p$ has undergone a CN-LOH or an SCD event, by using the 
THetA~\cite{oesper2013theta,Oesper:2014aa} algorithm to cluster read-depth ratios and BAFs.
For each locus $c$ and sample $p$, THetA also determines the proportions:  $\mu^\mathrm{LOH}_{p,c}$, $\mu^\mathrm{SCD}_{p,c}$, $\mu^\mathrm{SCA}_{p,c}$, $\mu^0_{p,c}$ which are the proportion of cells that have undergone an CN-LOH, SCD, SCA event, or are unaffected by a copy-number aberration (CNA), respectively at locus $c$  in sample $p$. 
By the definition of the states, we have that $\mu^0_{p,c} = f_{p,(c,0)} + f_{p,(c,1)}$, $\mu^\mathrm{LOH}_{p,c} =  f_{p,(c,2)} + f_{p,(c,3)} + f_{p,(c,4)}$, $\mu^\mathrm{SCD}_{p,c} =  f_{p,(c,5)} + f_{p,(c,6)}$ and $\mu^\mathrm{SCA}_{p,c} =  f_{p,(c,7)} + f_{p,(c,8)} + f_{p,(c,9)}$ (Figure~\ref{fig:CNA_SNV}). The following equation relates the VAF $h_{p,c}$ to frequencies $f_{p,(c,i)}$ and thus to proportions $\mu^0_{p,c}$, $\mu^\mathrm{LOH}_{p,c},\mu^\mathrm{SCD}_{p,c}$ and $\mu^\mathrm{SCA}_{p,c}$.
\begin{equation}
  \label{eq:vaf2}
  h_{p,c} =  \frac{\sum_{i=0}^9 z_i \cdot f_{p,(c,i)}}{\sum_{i=0}^9 (x_i + y_i) \cdot f_{p,(c,i)}}
\end{equation}

According to our assumption, we have that for each character $c$ across all samples $p$ at most one of the proportions $\mu^\mathrm{LOH}_{p,c}, \mu^\mathrm{SCD}_{p,c}, \mu^\mathrm{SCA}_{p,c}$ is nonzero.
Moreover, by selecting a state tree $S_c \in \Sigma$, we fix the frequencies of the absent states of $S_c$ to 0.
We thus have a linear system of equations with variables $f_{p,(c,i)}$ and constants $h_{p,c}$, $\mu^0_{p,c}$, $\mu^\mathrm{LOH}_{p,c}$, $\mu^\mathrm{SCD}_{p,c}$ and $\mu^\mathrm{SCA}_{p,c}$. 
Solving this system results in a unique solution for each variable $f_{p,(c,i)}$ (Supplementary Table~\ref{tab:CNA}).
Hence for each state tree $S_c$, we obtain frequencies $f_{p,(c,i)}$. 
If all the frequencies are nonnegative, we say that state tree $S_c$ is \emph{compatible} with $c$. 
By enumerating combinations of compatible state trees, we arrive at instances $(\mathcal{F},\mathcal{S})$ of the \ProblemCladistic.
We account for noise in VAFs $h_{p,c}$ by considering their confidence intervals $[\bunderline{h}_{p,c},\boverline{h}_{p,c}]$. 
In a similar fashion to the above, we derive frequency intervals $[\bunderline{f}_{p,(c,i)}, \boverline{f}_{p,(c,i)}]$ from the selected state tree $S_c \in \Sigma$, the VAF confidence interval $[\bunderline{h}_{p,c},\boverline{h}_{p,c}$], and the proportions $\mu^0_{p,c}$, $\mu^\mathrm{LOH}_{p,c}$, $\mu^\mathrm{SCD}_{p,c}$ and $\mu^\mathrm{SCA}_{p,c}$ (Supplementary Table~\ref{tbl:intervals}).

\section{Results}

We apply our multi-state perfect phylogeny model for copy-number aberrations to analyze 180 simulated datasets, a chronic lymphocytic leukemia tumor from~\cite{Schuh:2012aa} and a prostate cancer tumor from~\cite{Gundem:2015}.

\subsection{Error-Free Simulations}  
We create $60$ multi-state perfect phylogeny trees $T$ containing $n \in \{4,5,6\}$ characters, using state trees from $\Sigma$ as defined  in the previous section.  For each tree $T$ and number of samples $m \in \{2,5,10\}$ we simulate a frequency tensor $\mathcal{F}$ by mixing vertices from $T$, resulting in 180 simulated mixtures. 
Next, we use the entries of the simulated $\mathcal{F}$ to generate the actual input by computing for each character $c$ in sample $p$, the VAF $h_{p,c}$ and the proportions $\mu^0_{p,c}$, $\mu^\mathrm{LOH}_{p,c}$, $\mu^\mathrm{SCD}_{p,c}$ and $\mu^\mathrm{SCA}_{p,c}$---as defined in Section~\ref{sec:cancer_model}. 
We run \textsc{Enumerate} for each combination of compatible state trees and only retain the largest maximal valid perfect phylogeny trees.
Figure~\ref{fig:res_A} shows that 
as the number of characters increases, the size of the solution space also increases, but that
increasing the number of samples reduces the number of solutions by several orders of magnitude. For example, while there are on average 34,586 solutions for $n=6$ characters with $m=2$ samples, there are on average only 7 solutions with $m=10$ samples for the same number of characters.  This demonstrates how the \eqref{eq:MSSC} becomes a strong constraint with increasing numbers of samples.   Figure~\ref{fig:res_B} shows a summary of the 45 solutions for one input instance ($n=6$, $m=5$).

We define the \emph{concordance} of a tree in the solution space to be the fraction of edges in the simulated tree that are recovered in the solution. Because we are enumerating the full solution space, we are guaranteed to find a solution with a concordance of 1 (i.e.\ the true tree). The spread of the distribution of concordances for the solution space is a measure of the level of ambiguity in the data.
Figure~\ref{fig:res_C} shows the distributions of these measurements across different number of samples for the previously considered tree. 
We see that an increase in the number of samples corresponds with an increase in the concordance of the solution space. 
In summary, we can deal with ambiguity by increasing the number of samples, which will decrease the size of solution space while at the same time increasing the concordance.

\subsection{Simulations with VAF Errors}

We now consider how well we can reconstruct the 180 simulated mixtures described above in the presence of errors in the VAFs. For each character $c$ and sample $p$, we draw its total read count from a Poisson distribution parameterized by a target coverage.
Next we draw the number of variant reads from a binomial parameterized by the previously drawn total read count and the true VAF.
We then consider the posterior distribution of observing the drawn total and variant read counts from which we compute a 0.95 confidence intervals on the VAF using a beta posterior~\cite{ElKebir:2015by}.
We run \textsc{NoisyEnumerate} on the simulated instances on each combination of compatible state trees, outputting the largest trees.
Supplemental Figure~\ref{fig:res_D} shows the size of the solution space for different values of the target coverage for all 20 instances with $n=4$ characters. 
Note that increasing the coverage, results in a decrease of the size of the solution space. In fact, a coverage of 10,000x approaches the error-free data. However, a more efficient way to deal with ambiguity is to increase the number of samples. For instance, a target coverage of 50x with 5 samples has a similar number of solutions as a coverage 10,000x with 2 samples.
We observe the same trends with $n=5$ characters and also in terms of running time (Supplementary Figure~\ref{fig:noisy}).

As the number of characters increases, exhaustive enumeration of the solution space becomes infeasible.  Thus, we analyzed how well we reconstruct the true tree using a fixed number $N$, of maximal trees.  Supplemental Figure~\ref{fig:res_E} shows the concordance of the solutions for a simulated instance with $n=10$ characters and a target coverage of 1,000X.

\begin{figure}
  \centering
 \subfloat[\label{fig:res_A}]{\includegraphics[width=.40\textwidth]{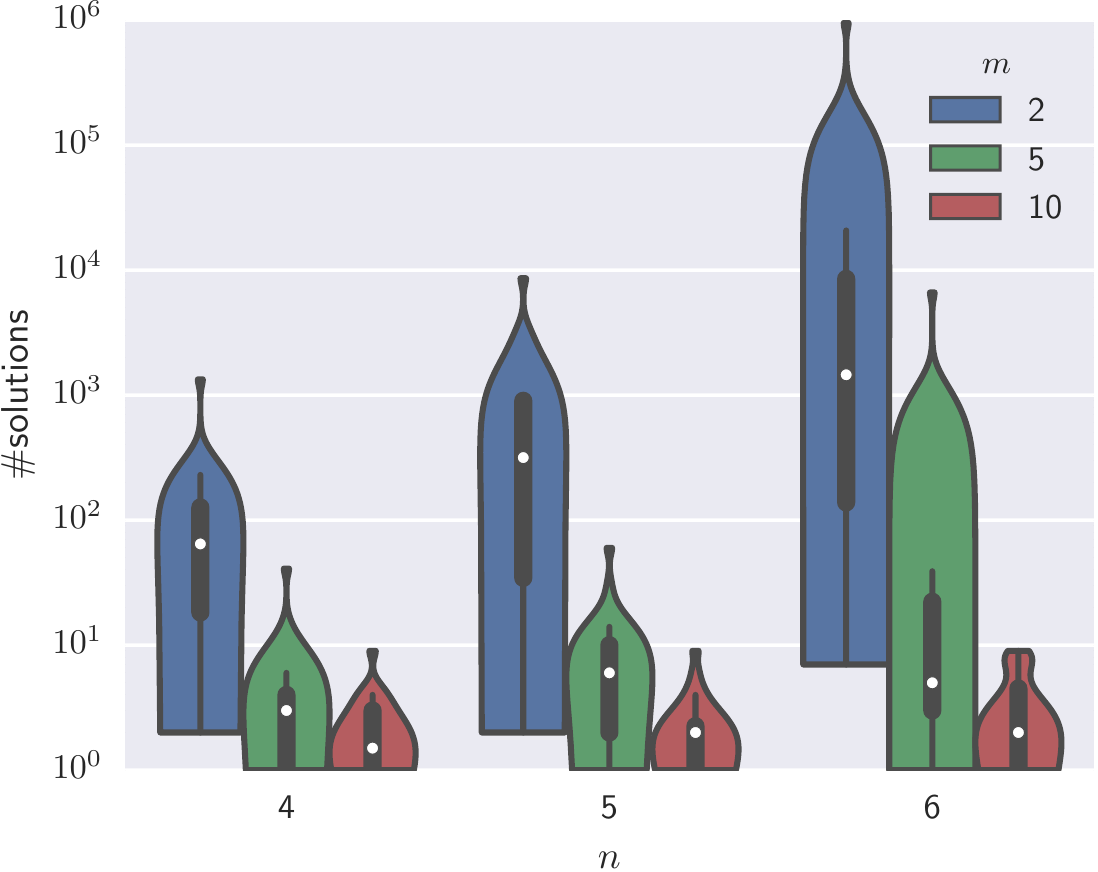}}
  \hspace{.05cm}
  \subfloat[\label{fig:res_B}
  ]{\includegraphics[width=.32\textwidth]{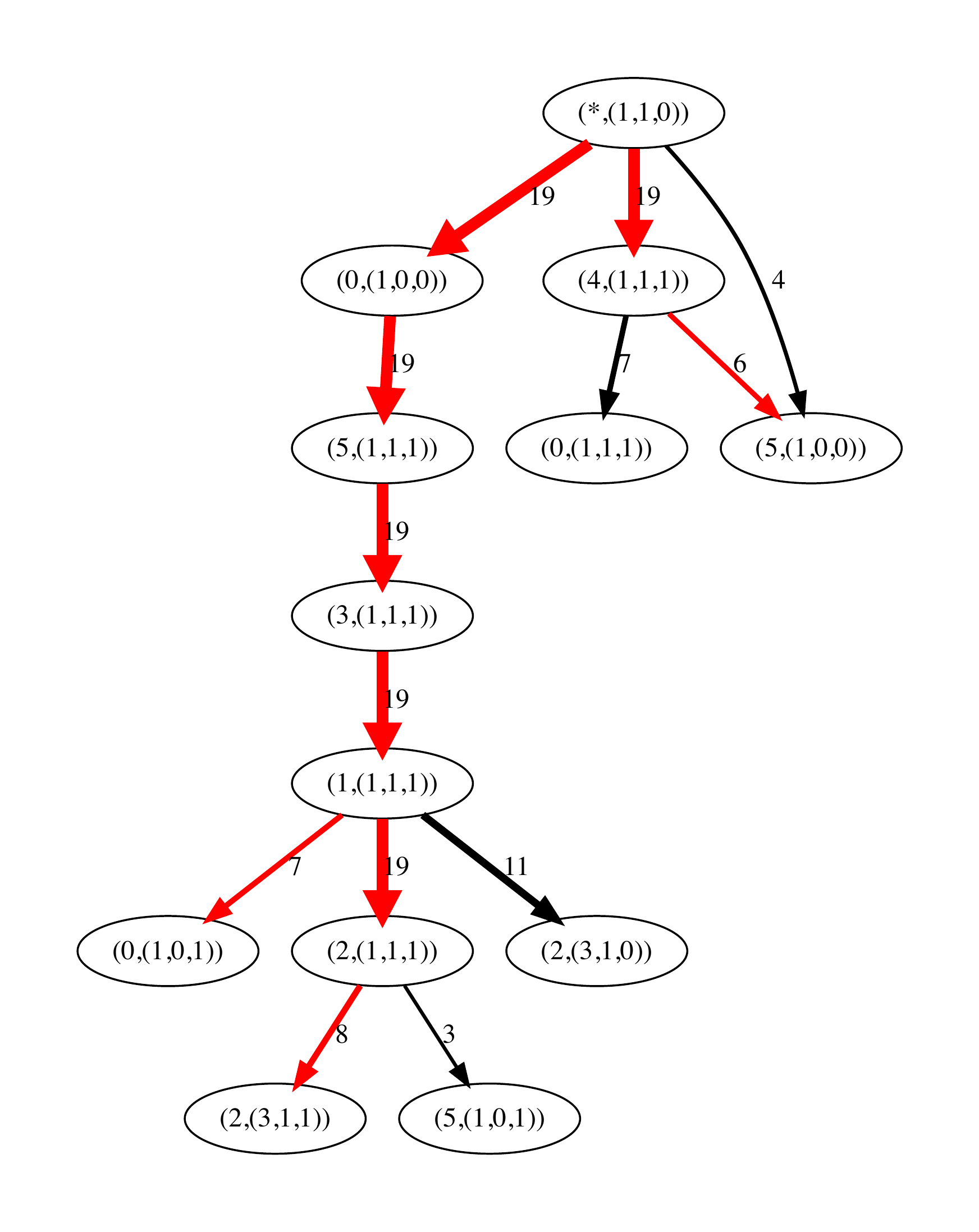}}
  \hspace{.05cm}
  \subfloat[\label{fig:res_C}]{\includegraphics[width=.20\textwidth]{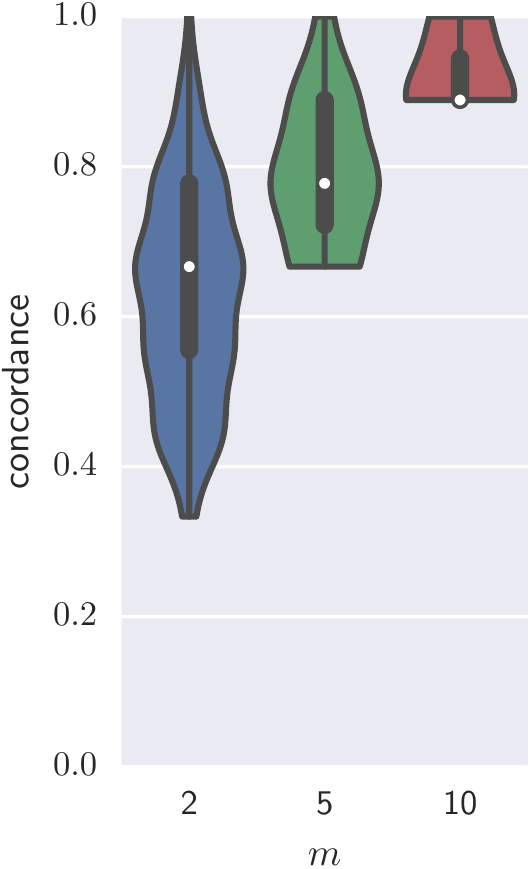}}
  \caption{\textbf{Simulated data results.} (a) Number of solutions (log-scale) for error-free data with $n \in \{4,5,6\}$ characters and $m \in \{2,5,10\}$ samples. (b) Solution space of an $n=6$, $m=5$ instance. Vertices are from the solution trees and each edge is labeled by the number of solutions in which it occurs. Red edges indicate the simulated tree. (c) Concordance of one $n=6$ instance. 
  The green violin corresponds to Figure~\ref{fig:res_B}.
  }
\end{figure}

\subsection{Real Data}
\label{sec:real_data}

We next apply the algorithm on a liquid chronic lymphocytic leukemia tumor (CLL077) from \cite{Schuh:2012aa} and  a solid prostate cancer tumor (A22) from \cite{Gundem:2015}. 

\subsubsection{Results on CLL077.}

We used targeted and whole-genome sequencing data from four time-separated samples (b, c, d, e).
The targeted data includes 14 SNVs, one of which (\textit{SAMHD1}) is classified as an CN-LOH in all four samples.
Two SNVs (in genes \textit{BCL2CB} and \textit{NAMPTL})  were classified as being unaffected by CNAs, but in some of the samples they had had a VAF confidence interval greater than 0.5 and as such were incompatible with all state trees.
The 12 remaining characters had only one compatible state tree associated with them.
We ran \textsc{NoisyEnumerate} until completion, and thus enumerated the entire solution space, which consists of 20 trees of nine vertices (Supplementary Figure~\ref{fig:solution_space_cll}).
Figure~\ref{fig:CLL_tree} shows one tree from the solution space. A similar tree with two branches is also reported by PhyloSub~\cite{Jiao:2014aa}, PhyloWGS~\cite{Deshwar:2015kw}, CITUP~\cite{Malikic:2015aa} and AncesTree~\cite{ElKebir:2015by} for this dataset.  However, the tree reported here and the one reported by AncesTree predict the order of all the mutations on each branch, while PhyloSub, PhyloWGS and CITUP group some mutations together.  Additionally, AncesTree did not consider the SNV in gene \textit{SAMHD1}, as its VAF~$> 0.5$.
Here, we reconstruct a tree containing the CN-LOH event on \textit{SAMHD1}.

By enumerating the entire search space, we can detect ambiguities in the input data.
For instance, in our tree \emph{LRRC16A} is a child of \emph{EXOC6B} whereas there are solutions which assign \emph{LRRC16A} as a child of either \emph{OCA2} or \emph{DAZAP1} (which is absent in the shown tree).
Without additional data or further assumptions, there is not enough information to distinguish between these ancestral relationships.
In contrast, by only providing one solution, AncesTree and CITUP give an incomplete picture that does not reflect the true uncertainty inherent to the data.

\begin{figure}[h]
  \center
  \subfloat[Tree for CLL077]{\label{fig:CLL_tree}
  \includegraphics[width=.3\textwidth]{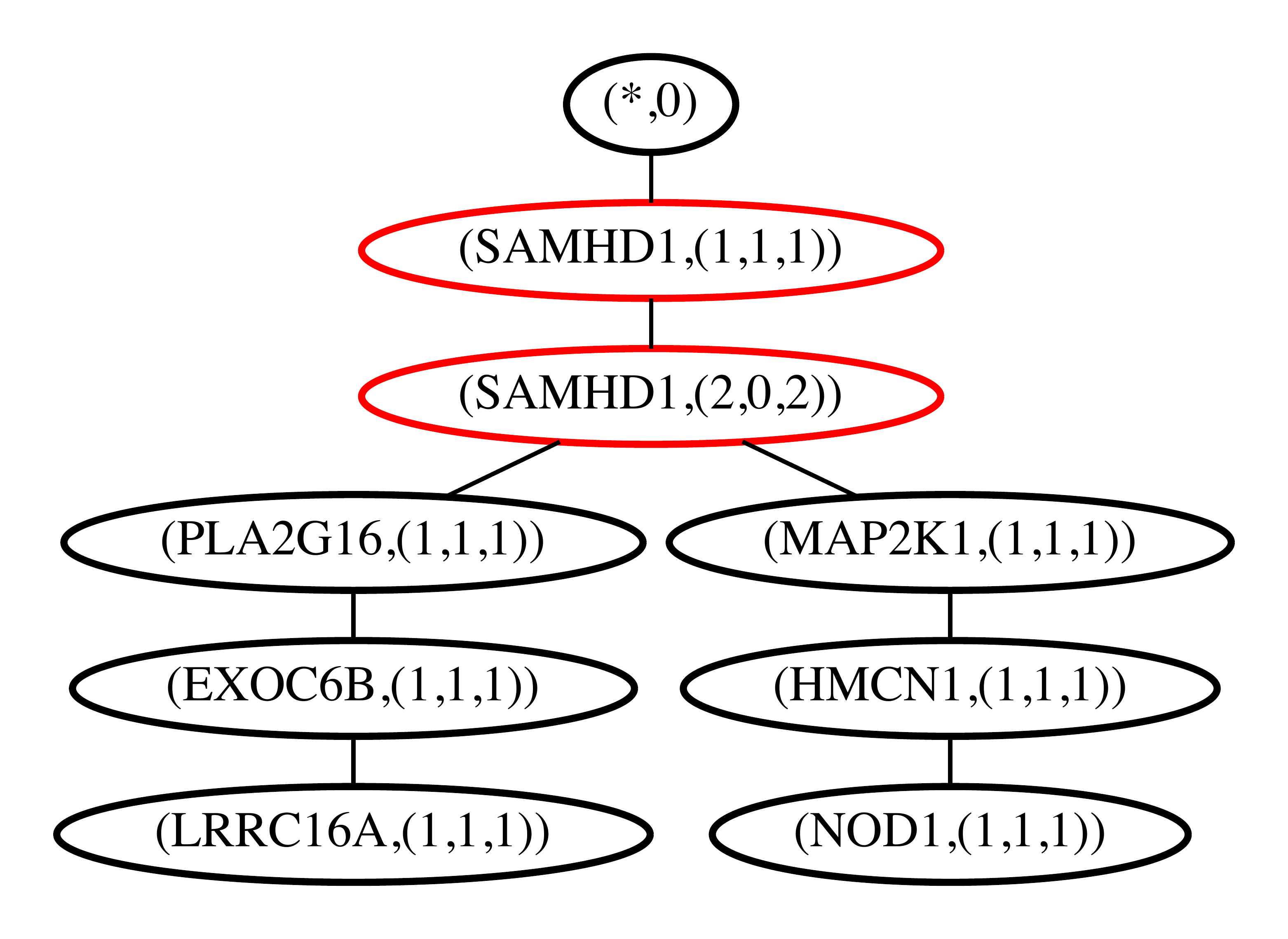}}
  \subfloat[\label{fig:A22_tree}Tree for A22]{\includegraphics[width=.69\textwidth]{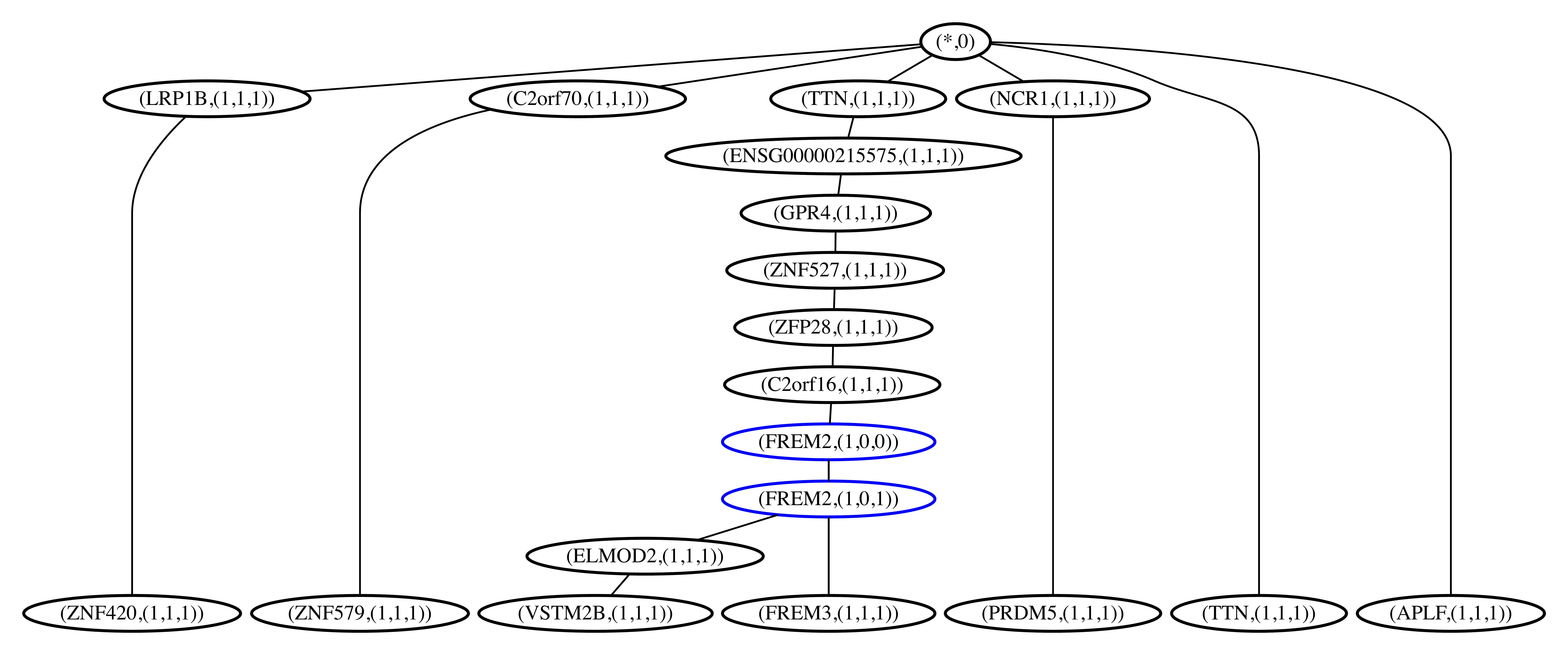}}
  \\
  \subfloat[\label{fig:CLL077_U}Usage matrix for CLL077]{\includegraphics[width=.3\textwidth]{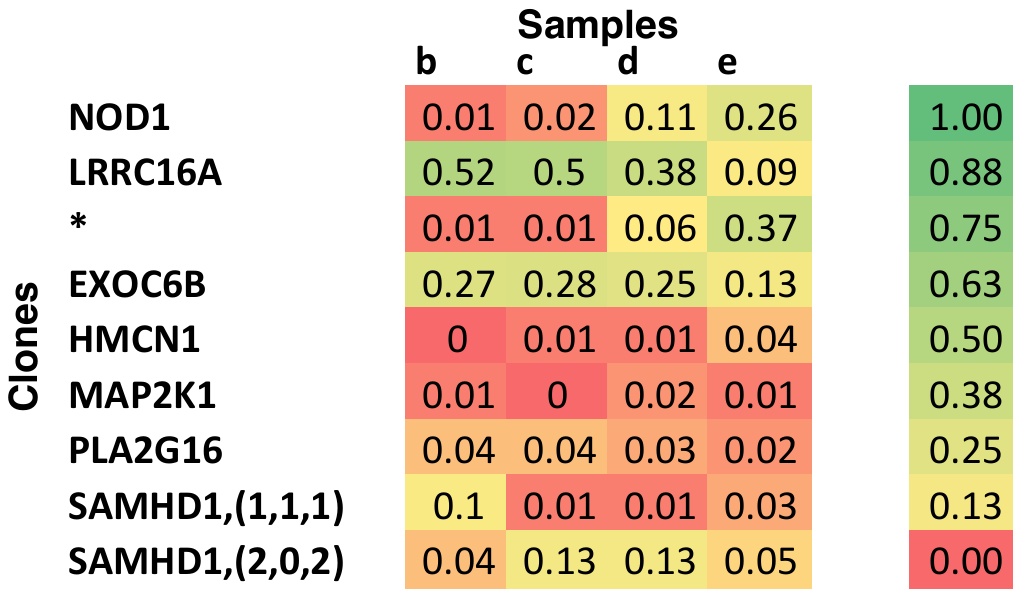}}
  \hspace{1cm}
  \subfloat[\label{fig:A22_U}Usage matrix for A22]{\includegraphics[width=.38\textwidth]{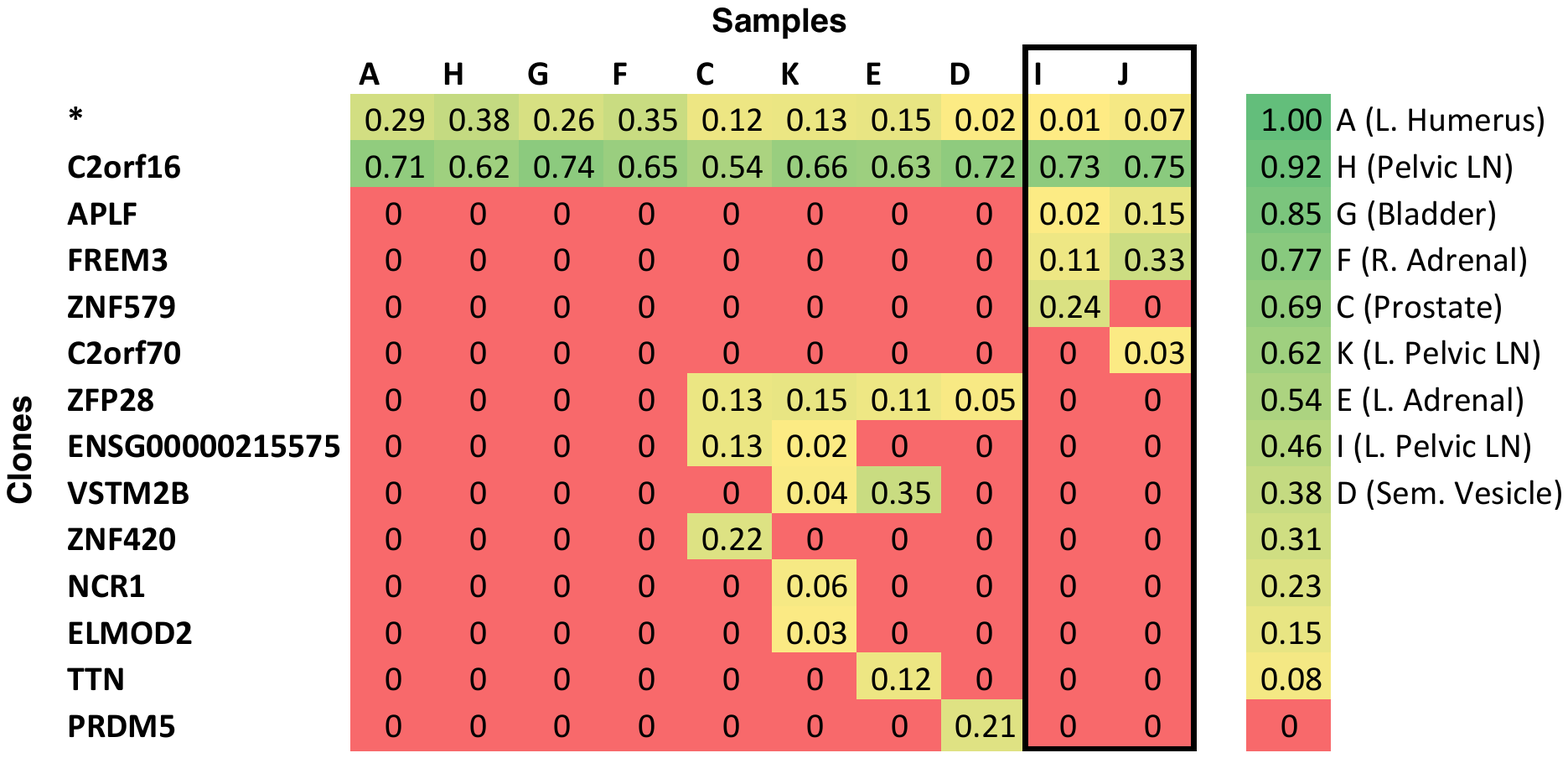}}
  \caption{\textbf{Real data results.} (a) The computed tree for CLL077 with a CN-LOH event in red. (b) The computed tree for A22 with a SCD event in blue.  (c) Usage matrix for CLL077 shows that the samples (columns) are mixed and consist of many clones (rows) as indicated by the coloring, (d) Usage matrix for A22 shows that samples consist of small subsets of clones, which reflect their distinct spatial locations}
\end{figure}

\subsubsection{Results on A22.}
Next, we consider a solid prostate cancer tumor (A22)~\cite{Gundem:2015} where 10 samples were taken from the primary tumor and different metastases.
The number of SNVs is 114.
Applying THetA showed that this tumor is highly rearranged. 
We consider only SNVs that are in regions classified as CN-LOH or SCD across all samples and whose VAFs are greater than 0.01 in all samples.
This resulted in a set of 27 SNVs.

We restrict the enumeration to $N=10^6$ maximal trees.
\textsc{NoisyEnumerate} finds 24,288 solutions comprised of 20 vertices (Supplementary Figure~\ref{fig:solution_space_prostate}).
Figure~\ref{fig:A22_tree} shows a representative tree of the solution space, i.e.\ the solution tree that shares the largest number of edges with other trees in the solution space. This tree has a SCD event containing gene \textit{FREM2}, which has a VAF $> 0.5$  in 8 of 10 samples. Since a VAF $> 0.5$ for an SNV violates the assumption of two-state perfect phylogeny, methods that use this assumption will disregard this locus.
In the inferred tree, the parent of \textit{FREM2} is \textit{C2orf16}, but the VAF of the SNV in this gene is lower than \textit{FREM2} in every sample. 
Thus, the VAFs of SNVs in isolation provide insufficient evidence to infer the ancestral relationship between \textit{FREM2} and \textit{C2orf16}, whereas combining the VAFs with BAFs and read-depth ratios allows us to do so.

Figure~\ref{fig:A22_U} shows the usage matrix for this solution.
In contrast to the CLL tumor, we do not expect the clones to be well mixed, since the primary tumor is a solid tumor and the metastases samples are physically separated from the primary tumor.
Indeed, we find clones that are specific to certain samples and that there is no sample consisting of all clones.
In addition, we see that certain samples are more similar to each other in terms of their usages. 
In particular, samples $I$ and $J$ only differ in two clones and both correspond to pelvic lymph nodes.
In summary, we find that the samples consist of small subsets of clones that reflect that they correspond to distinct spatial locations of the samples.

\section{Discussion}
We introduce the \ProblemFull\ for multi-state characters.  We describe both a combinatorial characterization of the solutions to this problem, and an algorithm to solve it in the presence of measurement errors.  Using this algorithm, we find that even with a small number of characters, there is extensive ambiguity in the solution with a modest number of samples ($m \approx 5$), but this ambiguity declines substantially as the number of samples increases ($m \approx 10$).  We analyze two tumor datasets and find that analysis of both copy-number aberrations and SNVs is required to obtain accurate phylogenetic trees.

The combinatorial structure derived here for the multi-state problem could be useful for the development of better probabilistic models, which have proved useful both in clustering mutations~\cite{Nik-Zainal:2012aa} and in simultaneous clustering and tree inference \cite{Jiao:2014aa,Deshwar:2015kw,MarkowetzANdBeerenwinkel}.  For example, rather than considering generic tree-structured priors, one could use priors that are informed by the combinatorial structure of the multi-state ancestry graph.

Although we focused on applications to cancer genome sequencing, the algorithm has applications in other cases  of mixed samples, including metagenomics~\cite{kembel2011, segata2013} and studying the process of somatic hypermutation. The latter was explored by~\cite{Strino:2013} using a single-sample, two-state perfect phylogeny model.  Some of these applications, as well as the cancer application, may require further relaxation of the infinite alleles model that we used here.  It is an interesting question whether more complicated phylogenetic models (e.g.\ the maximum parsimony model or more complicated copy number models~\cite{Chowdhury:2015kd}) can be analyzed in the setting of phylogenetic mixtures.


\bibliographystyle{plain}
\bibliography{k-vaffp}

\clearpage
\appendix

\section{Supplementary Methods}
\label{sec:app_methods}
In this section we present additional definitions and results.  Proofs of theorems, lemmas and propositions that were omitted in the main text are marked as such and maintain the same numbering as in the main text.

\label{sec:appendix_proofs}

\renewcommand\thefigure{\thesection\arabic{figure}}
\setcounter{figure}{0}

\subsection{The \ProblemFull}
\label{sec:app_preliminaries}


Recall that $m$ is the number of samples, $n$ is the number of characters and $k$ is the number of states of each character.
Our input measurements are given by the $k \times m \times n$ frequency tensor $\mathcal{F} = [F_i] = [ [ f_{p,(c,i)} ] ]$ where $f_{p,(c,i)}$ is the proportion of taxa of sample $p$ that have state $i$ for character $c$. We denote by $F_i$ the slice of $\mathcal{F}$ where the state of each character is $i$. 
Formally, $\mathcal{F}$ is defined as follows.
\begin{definition}
  \label{def:F}
  An $k \times m \times n$ tensor $\mathcal{F} = [F_i] = [[f_{p,(c,i)}]]$ is a \emph{frequency tensor} provided $f_{p,(c,i)} \geq 0$ and $\sum_{i = 0}^{k-1} f_{p,(c,i)} = 1$ for all characters $c$ and samples $p$.
\end{definition}

As mentioned in Section~\ref{sec:problem_statement}, the goal is to explain the observed frequencies $\mathcal{F}$ as $m$ mixtures of the leaves of a perfect phylogeny tree $T$, where each mixture corresponds to one sample. We recall the definition of a perfect phylogeny\cite{gusfield2014recombinatorics,Fernandez:2001}.
\begin{definition}
A rooted tree $T$ is a \emph{perfect phylogeny tree} provided that (1) each vertex is labeled by a \emph{state vector} in $\{0,\ldots,k-1\}^n$, which denotes the state for each character; (2) the root vertex of $T$ has state 0 for each character; (3) vertices labeled with state $i$ for character $c$ form a connected subtree $T_{(c,i)}$ of $T$.
\end{definition}
Rather than explaining $\mathcal{F}$ as mixtures of the leaves of a perfect phylogeny tree, we aim to explain $\mathcal{F}$ as $m$ mixtures of \emph{all} vertices of an $n,k$-complete perfect phylogeny tree, which is defined as follows.

\begin{definition}
  \label{def:T}
  An edge-labeled rooted tree $T$ on $n(k-1) + 1$ vertices is a \emph{$n,k$-complete perfect phylogeny tree} provided each of the $n(k-1)$ edges is labeled with exactly one character-state pair from $[n] \times [k-1]$ and no character-state pair appears more than once in $T$. Let $\mathcal{T}_{n,k}$ be the set of all $n,k$-complete perfect phylogeny trees.
\end{definition}

We may do this without loss of generality, as each $n,k$-complete perfect phylogeny $T$ can be mapped to a perfect phylogeny tree $T'$ by extending inner vertices  of $T$ that have non-zero mixing proportions to leaves of $T'$. See Supplementary Figure~\ref{fig:mapping} for an example.
In the following we denote by $T$ an $n,k$-complete perfect phylogeny tree $T$, by $v_{(c,i)}$ the vertex of $T$ whose incoming edge is labeled by $(c,i)$, and by $v_{(*,i)}$ the root of $T$.
Alternatively, $v_{(1,0)},\ldots, v_{(n,0)}$ all refer to the root vertex $v_{(*,0)}$.

\begin{figure}[h]
  \center
  \includegraphics[width=\textwidth]{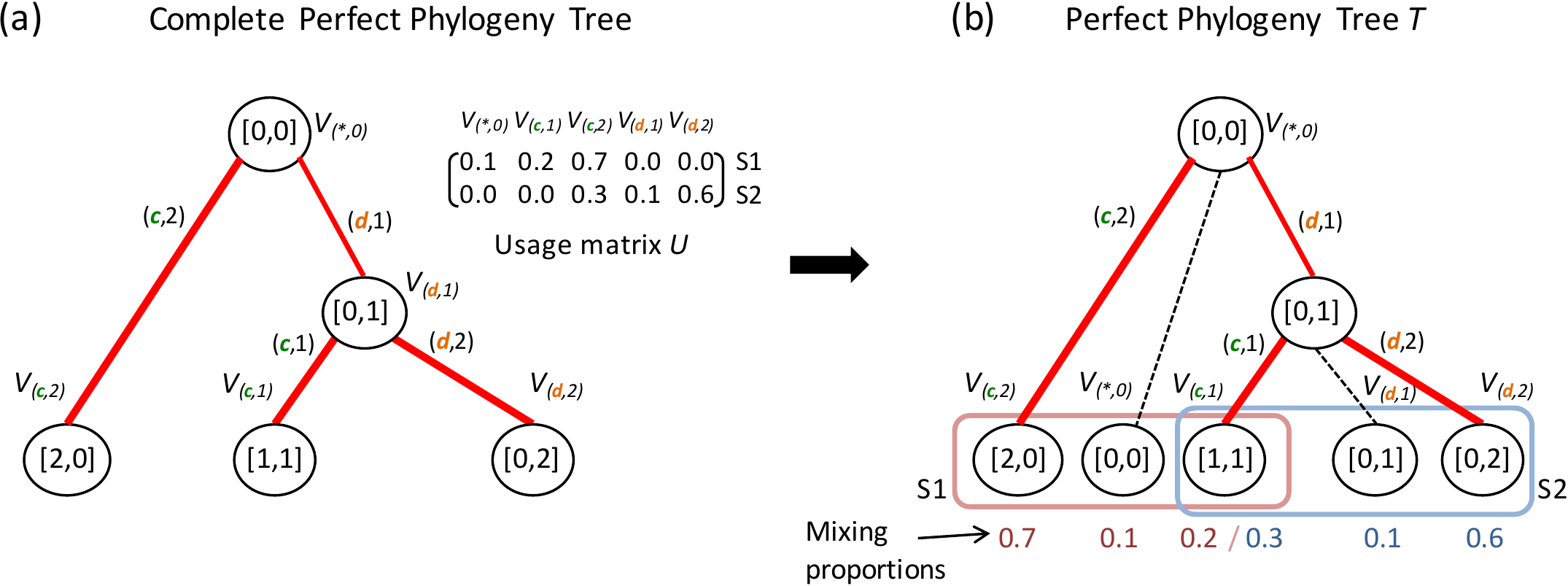}
  \caption{\textbf{Each complete perfect phylogeny tree corresponds to a perfect phylogeny tree}}
  \label{fig:mapping}
\end{figure}

The observed frequencies $\mathcal{F}$ are related to the vertices of $T$ by an $m \times (n(k-1) + 1)$ matrix $U = [u_{p,(c,i)}]$ whose rows correspond to samples and columns to vertices of $T$ such that each entry $u_{p,(c,i)}$ indicates the mixing proportion of the vertex $v_{(c,i)}$ of $T$ in sample $p$.
More specifically, each frequency $f_{p,(c,i)}$ is the sum of mixing proportions of all vertices of $T$ that possess state $i$ for character $c$, i.e.\
$f_{p,(c,i)} = \sum_{(d,j) \in T_{(c,i)}} u_{p,(d,j)}$ (Supplementary Figure~\ref{fig:F_U_relation}).
Formally, we define a usage matrix $U$ as follows.
\begin{definition}
  An $m \times (n(k-1)+1)$ matrix $U = [u_{p,(c,i)}]$ is an \emph{$m,n,k$-usage matrix} provided $u_{p,(c,i)} \ge 0$, and $u_{p,(*,0)} + \sum_{c = 1}^n \sum_{i = 1}^{k-1} u_{p,(c,i)} = 1$ for all samples $p$.
Let $\mathcal{U}_{m,n,k}$ be the set of all $m,n,k$-usage matrices $U$. 
\end{definition}

\begin{figure}[t]
  \center
  \subfloat[\label{fig:F_U_relation}A $4,4$-complete perfect phylogeny tree $T$, here $f_{p,(4,2)} = \sum_{(d,j) \in T_{(4,2)}} u_{p,(d,j)} = u_{p,(4,2)} + u_{p,(2,2)} + u_{p,(2,3)}$ (shaded vertices comprise $T_{(4,2)}$)]{\includegraphics{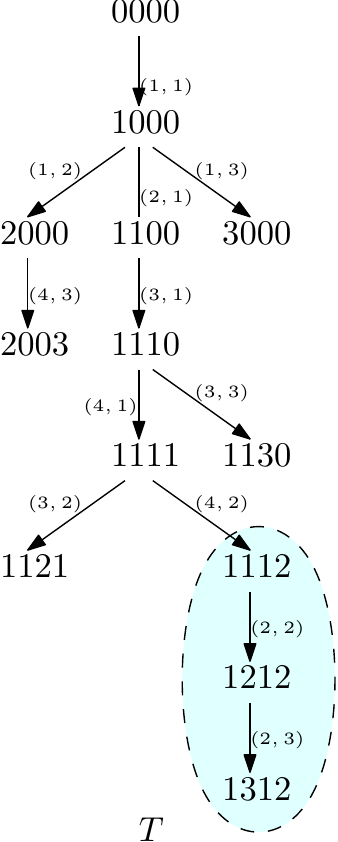}}
  \hspace{.75cm}
  \subfloat[State trees $\mathcal{S}$ determined by $T$]{\includegraphics{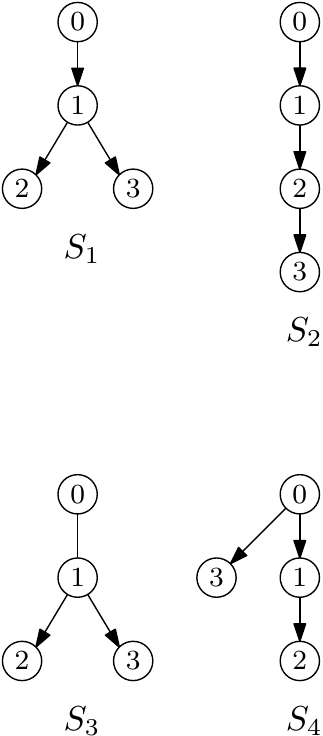}}
  \hspace{.75cm}
  \subfloat[\label{fig:G}$G(A)$, red edges denote a spanning tree rooted at $0000$ that corresponds to $T$]{\includegraphics{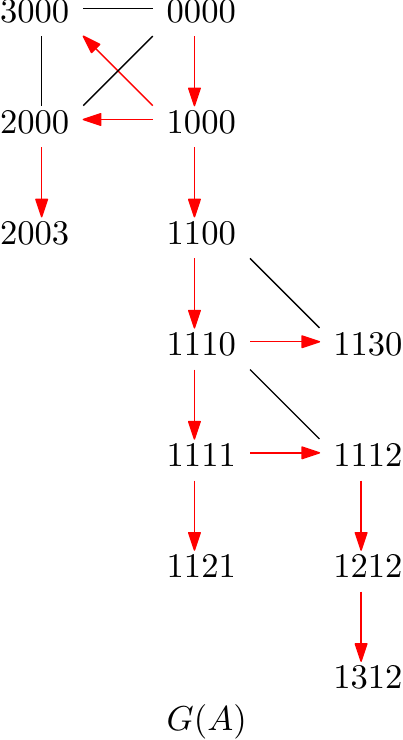}}
  \hspace{.75cm}
  \subfloat[\label{fig:A_example}$4,4$-complete perfect phylogeny matrix $A =
    \theta(T)$. Note that entries in red correspond to the first two conditions
      of Definition~\ref{def:A}]{\includegraphics{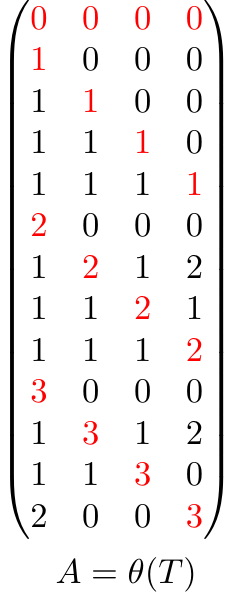}}
  \caption{\textbf{Concepts of the \ProblemFull}}
  \label{fig:concepts}
\end{figure}

Given $\mathcal{F}$, the goal is to infer a perfect phylogeny $T$ and a usage matrix $U$ such that mixing the leaves of $T$ according to $U$ results in $\mathcal{F}$, which was stated as Problem~\ref{prob:1} in the main text.
\begin{prob1}[\ProblemFull~(\Problem)]
  Given an $k \times m \times n$ frequency tensor $\mathcal{F} = [[f_{p,(c,i)}]]$, find a $n,k$-complete perfect phylogeny tree $T$ and a $m,n,k$-usage matrix $U = [u_{p, (c,i)}]$
  such that $f_{p,(c,i)} = \sum_{(d,j) \in T_{(c,i)}} u_{p,(d,j)}$ for all character-state pairs $(c,i)$ and all samples $p$.
\end{prob1}

In the remainder of this section we show how Problem~\ref{prob:1} can be restated as a linear algebra problem.
We start by observing that each vertex $v_{(c,i)}$ of $T$ defines a \emph{state vector} $\mathbf{a}_{(c,i)} \in \{0,\ldots,k-1\}^n$ indicating the state of each character at that vertex.
The root vertex $v_{(*,0)}$ has state vector $\mathbf{a}_{(*,0)} = (0, \ldots, 0)$, i.e.\ $a_{{(*,0)},d} = 0$ for each character $d \in [n]$. The state vector $\mathbf{a}_{(c,i)}$ of the remaining vertices $v_{(c,i)} \neq v_{(*,0)}$ is the same as the state vector $\mathbf{a}_{\pi(c,i)}$ of the parent vertex $v_{\pi(c,i)}$ except at character $c$ where the state is $i$. 
The state vectors of all the vertices of an $n,k$-complete perfect phylogeny tree $T$ correspond to an $(n(k-1) + 1) \times n$ matrix $A$---see Figure~\ref{fig:A_example}.

We now define a subset of matrices $A \in \{0,\ldots,k-1\}^{(n(k-1) + 1) \times n}$ that we call \emph{$n,k$-complete perfect phylogeny matrices} whose rows encode the state vectors of the vertices of an $n,k$-complete perfect phylogeny tree $T$.
Let $A \in \{0,\ldots,k-1\}^{(n(k-1) + 1) \times n}$.
We define $G(A)$ as the undirected graph whose vertices correspond to the rows of $A$, and whose edges set consists of all pairs of vertices whose corresponding state vectors differ at exactly one position, i.e.\ have Hamming distance 1.
We require that $G(A)$ is connected (Figure~\ref{fig:G}), that every row $\mathbf{a}_{(c,i)}$ of $A$ (where $i \in [k]$) introduces the character-state pair $(c,i)$ and that there is a row $\mathbf{a}_{(*,0)}$ that contains only 0-s (Figure~\ref{fig:A_example}). 
Formally, we say that $A$ is an $n,k$-complete perfect phylogeny matrix if the following holds.
\begin{definition}
  \label{def:A}
  Matrix $A = [a_{(c,i),d}] \in \{0,\ldots,k-1\}^{(n(k-1) + 1) \times n}$ is a \emph{$n,k$-complete perfect phylogeny matrix} provided $a_{(*,0),d} = 0$ for all characters $d$, $a_{(c,i),c} = i$ for all character-state pairs $(c,i)$ and $G(A)$ is connected. Let $\mathcal{A}_{n,k}$ be the set of all $n,k$-complete perfect phylogeny matrices.
\end{definition}

Unlike the general multi-state perfect phylogeny problem~\cite{DBLP:conf/icalp/1992}, we can recognize complete perfect phylogeny matrices in polynomial time, as these matrices form a restricted subset of multi-state perfect phylogeny matrices whose rows unambiguously encode all the vertices of a corresponding tree.
We relate complete perfect phylogeny trees to complete perfect phylogeny matrices by defining the following function.

\begin{definition}
  The function $\theta : \mathcal{T}_{n,k} \to \mathcal{A}_{n,k}$ maps a complete perfect phylogeny tree $T \in \mathcal{T}_{n,k}$ to the complete perfect phylogeny matrix $\theta(T) = A = [a_{(c,i),d}]$ where
  \begin{equation}
  a_{(c,i),d} = 
  \begin{cases}
    0, & \mbox{if $(c,i) = (*,0)$,}\\
    i, & \mbox{if $d = c$,}\\
    a_{\pi(c,i),d}, & \mbox{if $d \neq c$.}
  \end{cases}
\end{equation}
\end{definition}

\begin{lemma}
  The function $\theta : \mathcal{T}_{n,k} \to \mathcal{A}_{n,k}$ is a surjection.
\end{lemma}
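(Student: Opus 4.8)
The plan is to prove surjectivity of $\theta : \mathcal{T}_{n,k} \to \mathcal{A}_{n,k}$ by exhibiting, for an arbitrary $A \in \mathcal{A}_{n,k}$, a tree $T \in \mathcal{T}_{n,k}$ with $\theta(T) = A$. The natural candidate for $T$ is essentially the graph $G(A)$ itself, or rather a spanning tree of it rooted appropriately: since $A$ is an $n,k$-complete perfect phylogeny matrix, $G(A)$ is connected and has exactly $n(k-1)+1$ vertices, one of which (call it $\mathbf{a}_{(*,0)}$) is all-zero. First I would argue that $G(A)$ is in fact a tree, not merely connected — this is the crux. The counting argument is: each of the $n(k-1)$ non-root rows $\mathbf{a}_{(c,i)}$ (with $i \in [k-1]$) satisfies $a_{(c,i),c}=i$, so it introduces the ``fresh'' symbol $i$ in coordinate $c$; I want to show each such row has a unique neighbor in $G(A)$ that lies ``closer'' to the root, which would give exactly $n(k-1)$ edges on $n(k-1)+1$ vertices, forcing a tree. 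Equivalently, one shows $G(A)$ is acyclic: a cycle in $G(A)$ would correspond to a sequence of single-coordinate changes returning to the start, but because every symbol value appearing in a given coordinate $c$ (other than $0$) is ``used'' by exactly one row $\mathbf{a}_{(c,i)}$, one can show no such nontrivial closed walk of Hamming-1 steps exists without repeating a vertex.

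Once $G(A)$ is shown to be a tree, I would root it at $\mathbf{a}_{(*,0)}$ and define a labeled tree $T$: orient all edges away from the root, and label the edge entering the vertex $\mathbf{a}_{(c,i)}$ by the character-state pair $(c,i)$. I must check this is well-defined, i.e.\ that the edge into vertex $\mathbf{a}_{(c,i)}$ changes exactly coordinate $c$, and changes it to value $i$. For this, note the parent of $\mathbf{a}_{(c,i)}$ differs from it in exactly one coordinate, say $d$; I need $d = c$. If $d \neq c$, then the parent has $a_{\cdot,c} = i \neq 0$, so the parent is itself some $\mathbf{a}_{(c,i')}$ with $i' = i$ (since coordinate $c$ equals $i$), contradicting that parent and child are distinct rows introducing the same pair $(c,i)$ — here I use that rows of $A$ are distinct and that $a_{(c,i),c}=i$ pins down $i$ uniquely. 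Hence $d=c$, and since the child has $a_{\cdot,c}=i$, the edge is labeled $(c,i)$ consistently; moreover no character-state pair labels two edges, so $T \in \mathcal{T}_{n,k}$ by Definition~\ref{def:T}.

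Finally I would verify $\theta(T) = A$. By the recursive definition of $\theta$, the row of $\theta(T)$ indexed by $(c,i)$ is obtained from the row indexed by its parent $\pi(c,i)$ by overwriting coordinate $c$ with $i$; by the previous paragraph the corresponding rows of $A$ are related in exactly the same way (parent and child of $G(A)$ differ only in coordinate $c$, with the child's value there equal to $i$). Since $\theta(T)$ and $A$ agree on the root row (both all-zero) and the tree structure of $G(A)$ equals the parent structure used by $\theta$, a straightforward induction on distance from the root gives $\theta(T)_{(c,i)} = \mathbf{a}_{(c,i)}$ for every row, hence $\theta(T)=A$. Since $A$ was arbitrary in $\mathcal{A}_{n,k}$, $\theta$ is surjective.

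I expect the main obstacle to be the first step: rigorously establishing that $G(A)$ is a tree (acyclicity) purely from the axioms $a_{(*,0),d}=0$, $a_{(c,i),c}=i$, and connectedness of $G(A)$. The vertex/edge count gives the ``right number'' of edges only after one knows each non-root vertex has a well-defined unique parent, so the argument has to be set up carefully — either via a potential function (e.g.\ number of nonzero coordinates, or sum of coordinate values, which strictly decreases toward the root along the candidate parent edges) to rule out cycles, or by directly constructing the parent map and checking it is a function onto a tree. Everything after that is bookkeeping that follows the recursive shape of $\theta$.
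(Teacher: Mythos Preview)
Your proposal contains a genuine gap at exactly the point you flag as the ``crux'': the claim that $G(A)$ is itself a tree is false. The simplest counterexample is $n=1$, $k=3$: the matrix $A$ has rows $(0)$, $(1)$, $(2)$, and any two of these differ in the single coordinate, so $G(A)$ is the triangle $K_3$. More generally, whenever $k\geq 3$ and some character $c$ has three rows that agree on all coordinates except $c$ (which always happens, e.g.\ take $\mathbf{a}_{(*,0)}$, $\mathbf{a}_{(c,1)}$, $\mathbf{a}_{(c,2)}$ in the ``star'' matrix where every $\mathbf{a}_{(c,i)}$ is the unit vector with $i$ in position $c$), those three rows form a triangle in $G(A)$. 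Your acyclicity heuristic---that a closed walk of Hamming-$1$ steps cannot return without repeating a vertex because each nonzero value in coordinate $c$ is ``used once''---does not rule out the walk $0\to 1\to 2\to 0$ in a single coordinate.

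The paper's proof sidesteps this entirely: it does not claim $G(A)$ is a tree, only that $G(A)$ is connected (which is part of Definition~\ref{def:A}) and therefore has \emph{some} spanning tree rooted at $\mathbf{a}_{(*,0)}$; any such spanning tree is then a preimage of $A$ under $\theta$. Your subsequent steps (labeling the incoming edge of each non-root vertex, verifying $\theta(T)=A$ by induction) are the right shape, but note that your argument for why the edge into $\mathbf{a}_{(c,i)}$ must change coordinate $c$ is also broken: from ``the parent has value $i$ in coordinate $c$'' you cannot conclude the parent is a row indexed $(c,\cdot)$, since many rows of $A$ may share the value $i$ in coordinate $c$. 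The fix is to label each edge by the pair (coordinate changed, child's value there) and then argue that this labeling is a bijection onto $[n]\times[k-1]$ and that the vertex receiving label $(c,i)$ is exactly the row $\mathbf{a}_{(c,i)}$; this holds, but not for the reason you give.
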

\begin{proof}
  The set of complete perfect phylogeny trees corresponding to a matrix $A \in \mathcal{A}_{n,k}$ is exactly the set of spanning trees of $G(A)$ rooted at $v_{(*,0)}$. This set is nonempty as by Definition~\ref{def:A}, $G(A)$ is connected and thus has at least one spanning tree for any $A \in \mathcal{A}_{n,k}$. \qed
\end{proof}

We now have the following convenient parameterization of the problem.
We define matrix $A_i = [a^i_{(d,j),c}]$ as
\[
  a^i_{(d,j),c} = 
  \begin{cases}
    1, &\mbox{if $a_{(d,j),c} = i$,}\\
    0, & \mbox{otherwise.}
  \end{cases}
\]
Note that $\sum_{i=0}^{k-1} i A_i = A$.
Since each sample is a mixture of the vertices of $T$, captured by the complete perfect phylogeny matrix~$A$, with proportions defined in the usage matrix $U$, the observed frequency tensor $\mathcal{F} = [F_i]$ satisfies
\begin{equation}
  \label{eq:main_path}
  F_i = U A_i
\end{equation}
for all states $i \in \{0,\ldots,k-1\}$.
Assuming no errors in $\mathcal{F}$, our goal is thus to find $U \in \mathcal{U}_{m,n,k}$ and $A \in \mathcal{A}_{n,k}$ satisfying~\eqref{eq:main_path}.
We thus may restate Problem~\ref{prob:1} as follows.
\begin{problem}[\ProblemFull~(\Problem)]
  Given an $k \times m \times n$ frequency tensor $\mathcal{F} = [F_i] = [[f_{p,(c,i)}]]$, find a $n,k$-complete perfect phylogeny tree $T$ and a $m,n,k$-usage matrix $U = [u_{p, (c,i)}]$
  such that $F_i = U A_i$ for all $i \in \{0,\ldots,k-1\}$ where $A = \theta(T)$.
\end{problem}

\subsection{Uniqueness of $U$ given $\mathcal{F}$ and $T$}

Remarkably, $\mathcal{F} = [F_i]$ and $A \in \mathcal{A}_{n,k}$ \emph{uniquely} define the matrix $U$ such that $F_i = U A_i$ for all states $i$ as we prove in the following.

We start by defining a set of $(n(k-1) + 1) \times nk$ binary matrices $B_{n,k}$ that are in 1-1 correspondence to $A_{n,k}$. 
We do so by defining the undirected graph $H(B)$ for a matrix $B \in \{0,1\}^{(n(k-1) + 1) \times nk}$.
The vertices of $H(B)$ correspond to the rows of $B$ and there is an edge in $H(B)$ if and only if the two corresponding state vectors differ at exactly two positions, i.e.\ have Hamming distance~2.
Formally, we define a \emph{binary $n,k$-complete perfect phylogeny matrix} as follows.
\begin{definition}
  \label{def:path_B}
  A matrix $B = [b_{(d,j),(c,i)}] \in \{0,1\}^{(n(k-1) + 1) \times nk}$ matrix is a \emph{binary $n,k$-complete perfect phylogeny matrix} provided 
  \begin{itemize}
    \item $\sum_{c=1}^n \sum_{i = 0}^{k-1} b_{(d,j),(c,i)} = n$ where $(d,j) \in [n] \times [k-1]$, 
    \item $\sum_{c=1}^n b_{(d,j),(c,i)} = 1$ where $(d,j) \in [n] \times [k-1]$ and $i \in \{0,\ldots,k-1\}$, 
    \item $b_{(*,0),(c,0)} = 1$ where $c \in [n]$,
    \item $b_{(c,i),(c,i)} = 1$ for all $(c,i) \in [k] \times [n-1]$ and 
    \item $H(B)$ is connected. 
  \end{itemize}
  Let $\mathcal{B}_{n,k}$ be the set of all binary $n,k$-complete perfect phylogeny matrices.
\end{definition}

We now define the following function $\psi$ that maps a complete perfect phylogeny matrix $A$ to a binary matrix.
\begin{definition}
  The function $\psi$ maps a complete perfect phylogeny matrix $A \in \mathcal{A}_{n,k}$ to the binary matrix $\psi(A) = B = [A_0 \ldots A_{k-1}]$.
\end{definition}

We now show that $\psi(A)$ is a binary $n,k$-complete perfect perfect phylogeny matrix for each $A \in \mathcal{A}_{n,k}$, and that $\psi$ is in fact a bijection.

\begin{lemma}
  \label{lem:A_equivalent_B}
  The function $\psi$ is a bijection between $\mathcal{A}_{n,k} \rightarrow \mathcal{B}_{n,k}$.
\end{lemma}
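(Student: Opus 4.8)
The plan is to verify the claim in three stages: first that $\psi$ lands in $\mathcal{B}_{n,k}$, then that it is injective, then that it is surjective. Throughout, recall that for $A = [a_{(c,i),d}] \in \mathcal{A}_{n,k}$ we have $\psi(A) = B = [A_0 \; A_1 \; \cdots \; A_{k-1}]$, where $A_i$ is the $0/1$ indicator matrix for the entries of $A$ that equal $i$. So the column of $B$ indexed by $(c,i)$ is exactly the indicator vector of which rows $(d,j)$ satisfy $a_{(d,j),c} = i$, and the row of $B$ indexed by $(d,j)$ records, for each character $c$, the one-hot encoding of the state $a_{(d,j),c}$.

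\textbf{Step 1: $\psi(A) \in \mathcal{B}_{n,k}$.} I would check the five defining conditions of Definition~\ref{def:path_B} one at a time, each a direct unwinding of definitions. The row-sum condition $\sum_{c,i} b_{(d,j),(c,i)} = n$ holds because each of the $n$ characters contributes exactly one $1$ (the one-hot block for that character). The condition $\sum_c b_{(d,j),(c,i)} = 1$ for fixed $(d,j)$ and fixed state $i$ says that among the $n$ characters exactly one is in state $i$ in row $(d,j)$ — wait, that is not generally true; here I should be careful and re-read: the summation in Definition~\ref{def:path_B} is $\sum_{c=1}^n b_{(d,j),(c,i)} = 1$, but this must actually be read together with the indexing so that it restates ``row $(d,j)$, character $c$ has exactly one state'', i.e. $\sum_{i} b_{(d,j),(c,i)} = 1$; in any case the one-hot structure of each block of $A_i$'s gives exactly what is required. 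The condition $b_{(*,0),(c,0)} = 1$ holds since $a_{(*,0),c} = 0$ for all $c$ by Definition~\ref{def:A}. The condition $b_{(c,i),(c,i)} = 1$ holds since $a_{(c,i),c} = i$ by Definition~\ref{def:A}. Finally, connectivity of $H(B)$: two rows of $A$ differ in exactly one coordinate (Hamming distance $1$ in $G(A)$) if and only if the corresponding rows of $B$ differ in exactly two coordinates (the two one-hot positions in the affected block flip), so $H(B)$ has the same edge set as $G(A)$ and hence is connected because $G(A)$ is.

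\textbf{Step 2: injectivity.} This is immediate: $A$ is recovered from $B = \psi(A)$ by $A = \sum_{i=0}^{k-1} i A_i$, i.e. reading off, for each row and each character, the unique state $i$ whose one-hot bit is set. So $\psi$ has a left inverse and is injective.

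\textbf{Step 3: surjectivity.} Given $B \in \mathcal{B}_{n,k}$, define $A$ by $a_{(d,j),c} = $ the unique $i$ with $b_{(d,j),(c,i)} = 1$ (well-defined by the one-hot conditions in Definition~\ref{def:path_B}). I must check $A \in \mathcal{A}_{n,k}$: $a_{(*,0),c} = 0$ from $b_{(*,0),(c,0)} = 1$, and $a_{(c,i),c} = i$ from $b_{(c,i),(c,i)} = 1$; and $G(A)$ is connected because, again, its edges correspond exactly to Hamming-distance-$2$ pairs of rows of $B$, which are the edges of $H(B)$. Then $\psi(A) = B$ by construction. The main obstacle, such as it is, is purely bookkeeping: keeping the double index $(c,i)$ straight and making sure the Hamming-distance-$1$ in $A$ versus Hamming-distance-$2$ in $B$ correspondence is stated precisely (a single coordinate of $A$ changing from $i$ to $i'$ flips exactly the two bits $(c,i)$ and $(c,i')$ in $B$, and no others), so that ``$G(A)$ connected $\iff$ $H(B)$ connected'' is airtight. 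No deep idea is needed beyond this coordinate dictionary.
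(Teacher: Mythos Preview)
Your proposal is correct and follows essentially the same route as the paper: verify that $\psi$ lands in $\mathcal{B}_{n,k}$ by checking the five conditions, then exhibit the explicit inverse $B \mapsto A$ (via $a_{(d,j),c} = $ the unique $i$ with $b_{(d,j),(c,i)}=1$) and check it lands in $\mathcal{A}_{n,k}$, using throughout the observation that Hamming distance $1$ between rows of $A$ corresponds to Hamming distance $2$ between rows of $B$, so $G(A)$ and $H(B)$ have the same edge set. Your hesitation about the second bullet of Definition~\ref{def:path_B} is well-founded --- as written, the indices appear transposed (it should read $\sum_{i=0}^{k-1} b_{(d,j),(c,i)} = 1$ for each $c$, the one-hot condition), and your reading is the correct one.
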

\begin{proof}
  Let $A \in \mathcal{A}_{n,k}$. We claim that $B = \psi(A) = [A_0 \ldots A_{k-1}] \in \mathcal{B}_{n,k}$. Recall that $A_i = [a^i_{(d,j),c}]$ where 
\[
  a^i_{(d,j),c} = 
  \begin{cases}
    1, &\mbox{if $a_{(d,j),c} = i$,}\\
    0, & \mbox{otherwise.}
  \end{cases}
\]
  Therefore $a^i_{(d,j),c} = b_{(d,j),(c,i)}$. 
  We thus have that $\sum_{c=1}^n \sum_{i = 0}^{k-1} b_{(d,j),(c,i)} = n$ and $\sum_{c=1}^n b_{(d,j),(c,i)} = 1$ where $i \in \{0,\ldots,k-1\}$. 
  Moreover, because $a_{(*,0),d} = 0$ for all $d \in [n]$, we have that $b_{(*,0),(d,0)} = 1$. 
  Also, as $a_{(c,i),c} = i$, we have $b_{(c,i),(c,i)} = 1$. 
  Furthermore, $G(A)$ and $H(B)$ are isomorphic with $u_{(c,i)} \leftrightarrow v_{(c,i)}$ where $u_{(c,i)} \in V(G)$, $v_{(c,i)} \in V(H)$. 
  Thus, $H(B)$ is connected as $G(A)$ is connected. 
  Hence, $A \in \mathcal{A}_{n,k}$.

  Let $B = [b_{(d,j),(c,i)}] \in \mathcal{B}_{n,k}$. 
  We claim that $A = [a_{(d,j),c}] \in \mathcal{A}_{n,k}$ where $a_{(d,j),c} = i$ such that $b_{(d,j),(c,i)} = 1$. 
  Since $b_{(*,0),(c,0)} = 1$ where $c \in [n]$, we have that $a_{(*,0),c} = 0$. 
  Moreover, as $b_{(c,i),(c,i)} = 1$ for all $(c,i) \in [k] \times [n-1]$, we have that $a_{(c,i),c} = i$. 
  Furthermore, $H(B)$ and $G(A)$ are isomorphic with $u_{(c,i)} \leftrightarrow v_{(c,i)}$ for $u_{(c,i)} \in V(G)$, $v_{(c,i)} \in V(H)$. 
  Thus, $G(A)$ is connected as $H(B)$ is connected. 
  Hence, $B \in \mathcal{B}_{n,k}$.
  \qed
\end{proof}

We now flatten frequency tensor $\mathcal{F} = [F_i]$ into the $m \times nk$ matrix $F = [F_0 \ldots F_{k-1}]$ and prove that the problem is equivalent to factorizing $F$.

\begin{lemma}
  Let $\mathcal{F} = [F_i]$ be frequency tensor and let $U \in \mathcal{U}_{m,n,k}$ be a usage matrix. There exists a matrix $A \in \mathcal{A}_{n,k}$ such that $F_i = U A_i$ for all states $i \in \{0,\ldots,k-1\}$ if and only if there exists a matrix $B \in \mathcal{B}_{n,k}$ such that 
  \begin{equation}
    \label{eq:F=UB}
    F = [F_0 \ldots F_{k-1}] = UB.
  \end{equation}
\end{lemma}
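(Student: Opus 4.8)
The plan is to reduce the statement to the bijection $\psi$ established in Lemma~\ref{lem:A_equivalent_B} together with the elementary fact that right-multiplying $U$ by a horizontal concatenation of blocks concatenates the products. Recall that, by definition, $\psi(A) = [A_0 \ldots A_{k-1}]$ for every $A \in \mathcal{A}_{n,k}$, that $\psi$ is a bijection $\mathcal{A}_{n,k} \to \mathcal{B}_{n,k}$, and that the flattened matrix is $F = [F_0 \ldots F_{k-1}]$ with each block $F_i$ having $n$ columns, matching the block partition of $\psi(A)$ into the $n$-column blocks $A_i$.

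For the forward direction, I would fix $A \in \mathcal{A}_{n,k}$ with $F_i = U A_i$ for all $i \in \{0,\ldots,k-1\}$ and set $B = \psi(A)$, which lies in $\mathcal{B}_{n,k}$ by Lemma~\ref{lem:A_equivalent_B}. Since $U B = U [A_0 \ldots A_{k-1}] = [U A_0 \ldots U A_{k-1}]$ by the block structure of matrix multiplication, substituting $U A_i = F_i$ gives $U B = [F_0 \ldots F_{k-1}] = F$, as required.

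For the converse, suppose $B \in \mathcal{B}_{n,k}$ satisfies $F = U B$. Because $\psi$ is a bijection, there is a unique $A = \psi^{-1}(B) \in \mathcal{A}_{n,k}$ with $\psi(A) = B$, i.e.\ $B = [A_0 \ldots A_{k-1}]$. Then $U B = [U A_0 \ldots U A_{k-1}]$ while $F = [F_0 \ldots F_{k-1}]$, and comparing the $i$-th block of $n$ columns on each side yields $F_i = U A_i$ for every $i$.

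There is no real obstacle here: the lemma is a bookkeeping restatement that converts the system of $k$ matrix equations $F_i = U A_i$ into the single factorization $F = U B$, with the passage between the two shapes of perfect phylogeny matrix handled entirely by Lemma~\ref{lem:A_equivalent_B}. The only point needing a sentence of justification is that the block partition of $F$ into the $F_i$ is aligned with the block partition of $\psi(A)$ into the $A_i$, so that $U B = F$ can be matched block by block; this is immediate from the definitions of the flattening and of $\psi$.
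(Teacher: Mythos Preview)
Your proposal is correct and follows essentially the same approach as the paper: both invoke the bijection $\psi$ from Lemma~\ref{lem:A_equivalent_B} and use that $\psi(A)=[A_0\ldots A_{k-1}]$ so that $UB$ and $F$ agree block by block. The only cosmetic difference is that the paper carries out the comparison entry-wise (writing $f_{p,(c,i)}=\sum_{(d,j)} u_{p,(d,j)} a^i_{(d,j),c}=\sum_{(d,j)} u_{p,(d,j)} b_{(d,j),(c,i)}$) whereas you phrase it via block matrix multiplication; the content is the same.
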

\begin{proof}
  By Lemma~\ref{lem:A_equivalent_B}, let $A \in \mathcal{A}_{n,k}$ and $B \in \mathcal{B}_{n,k}$ be corresponding matrices. Note that $B = \psi(A) = [A_0 \ldots A_{k-1}]$ where $A_i = [a^i_{(d,j),c}]$ such that 
  \[
  a^i_{(d,j),c} = 
  \begin{cases}
    1, &\mbox{if $a_{(d,j),c} = i$,}\\
    0, & \mbox{otherwise.}
  \end{cases}
  \]
  Since $a^i_{(d,j),c} = b_{(d,j),(c,i)}$, we have
  \begin{align*}
    f_{p,(c,i)} & = u_{p,(*,0)} a^i_{(*,0),c)} + \sum_{d=1}^n \sum_{j=1}^{k-1} u_{p,(d,j)} a^i_{(d,j),c} = u_{p,(*,0)} b_{(*,0),(c,i)} + \sum_{d=1}^n \sum_{j=1}^{k-1} u_{p,(d,j)} b_{(d,j),(c,i)}
  \end{align*}
  for all $p \in [m]$, $c \in [n]$ and $i \in \{0,\ldots,k-1\}$.
  \qed
\end{proof}

Hence, we may restate Problem~\ref{prob:1} as a matrix factorization problem.
\begin{problem}[\ProblemFull~(\Problem)]
  Given an $k \times m \times n$ frequency tensor $\mathcal{F} = [F_i] = [[f_{p,(c,i)}]]$, find a $n,k$-complete perfect phylogeny tree $T$ and a $m,n,k$-usage matrix $U = [u_{p, (c,i)}]$
  such that $F = U B$ where $F = [F_0 \ldots F_{k-1}]$ and $B = \psi(\theta(T))$.
\end{problem}

We now have all the ingredients to show that $\mathcal{F}$ and $T$ uniquely define a matrix $U$.
We first show that any matrix $B \in \mathcal{B}_{n,k}$ has full row rank.

\begin{lemma}
  \label{lem:full_row_rank}
  Any matrix $B \in \mathcal{B}_{n,k}$ has row rank $n(k-1) + 1$.
\end{lemma}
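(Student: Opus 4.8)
The plan is to exhibit an $(n(k-1)+1)\times(n(k-1)+1)$ submatrix of $B$ of determinant $\pm 1$; since $B$ has exactly $n(k-1)+1$ rows, this forces the row rank to equal $n(k-1)+1$.

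First I would translate $B$ into tree language. By Lemma~\ref{lem:A_equivalent_B} the matrix $A=\psi^{-1}(B)$ lies in $\mathcal{A}_{n,k}$, and since $\theta$ is surjective there is a complete perfect phylogeny tree $T\in\mathcal{T}_{n,k}$ with $\psi(\theta(T))=B$. Unwinding the definitions of $\theta$ and $\psi$, the entry $b_{(d,j),(c,i)}$ is $1$ exactly when vertex $v_{(d,j)}$ of $T$ has state $i$ for character $c$, that is, exactly when $v_{(d,j)}\in T_{(c,i)}$. I will use two structural facts recalled in the main text: $v_{(c,i)}\in T_{(c,i)}$ for every character-state pair $(c,i)$; and for $i\ge 1$ the subtree $T_{(c,i)}$ is rooted at $v_{(c,i)}$, so every vertex of $T_{(c,i)}$ is a (weak) descendant of $v_{(c,i)}$ in $T$. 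In particular the root $v_{(*,0)}$ has state $0$ for all characters, so its row of $B$ has a $1$ in each column $(c,0)$ and $0$ everywhere else.

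Next I would choose the submatrix. Fix a linear order $\sigma$ of the $n(k-1)$ non-root vertices $\{v_{(c,i)} : (c,i)\in[n]\times[k-1]\}$ of $T$ in which an ancestor always precedes a descendant (the restriction to non-root vertices of any BFS or DFS order of $T$ works). Let $M$ be the square submatrix of $B$ formed by all $n(k-1)+1$ rows and the $n(k-1)+1$ columns indexed by $\{(1,0)\}\cup([n]\times[k-1])$. Within $M$ the row $v_{(*,0)}$ is the unit vector supported on the column $(1,0)$, so expanding $\det M$ along this row gives $\det M=\pm\det\tilde M$, where $\tilde M$ is the matrix with both its rows and its columns indexed by $[n]\times[k-1]$ and listed in the order $\sigma$, and whose $((d,j),(c,i))$ entry is $b_{(d,j),(c,i)}$, i.e.\ is $1$ precisely when $v_{(d,j)}\in T_{(c,i)}$. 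The diagonal entries are $b_{(c,i),(c,i)}=1$. For an entry strictly above the diagonal we have $\sigma(v_{(d,j)})<\sigma(v_{(c,i)})$ with $(d,j)\ne(c,i)$; if that entry were $1$ then $v_{(d,j)}\in T_{(c,i)}$, hence $v_{(d,j)}$ is a proper descendant of $v_{(c,i)}$, hence $\sigma(v_{(c,i)})<\sigma(v_{(d,j)})$, a contradiction. Thus $\tilde M$ is lower triangular with unit diagonal, $\det\tilde M=1$, and $\det M=\pm 1\ne 0$. Therefore $\operatorname{rank}B\ge n(k-1)+1$, and since $B$ has only $n(k-1)+1$ rows, its row rank is exactly $n(k-1)+1$.

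I do not anticipate a real obstacle; the argument is essentially the standard fact that the ancestor-descendant incidence matrix of a rooted tree is triangular under a topological vertex order. The one place to be careful is the bookkeeping around the columns $(c,0)$: all $n$ of them carry a $1$ in the root row (and the root vertex is simultaneously $v_{(c,0)}$ for every $c$), so exactly one of them, $(1,0)$, is included to make $M$ square, while the other $n(k-1)$ chosen columns correspond to pairwise distinct non-root vertices $v_{(c,i)}$, which is precisely what yields the triangular pattern. It is also worth stating the two facts about $T_{(c,i)}$ in exactly the form used above; both follow from the connectedness requirement in the definition of a perfect phylogeny together with the convention that $v_{(c,i)}$ is the vertex whose incoming edge is labelled $(c,i)$.
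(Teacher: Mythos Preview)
Your proof is correct and rests on the same structural fact the paper uses: in the square submatrix $D=[b_{(d,j),(c,i)}]$ with $(c,i),(d,j)\in[n]\times[k-1]$, a nonzero off-diagonal entry forces $v_{(d,j)}$ to be a proper descendant of $v_{(c,i)}$. The paper exploits this by running Gaussian elimination row-by-row in BFS order on $H(B)$, subtracting already-processed rows until $D$ becomes the identity, and then argues separately that the root row is independent of the rest. You instead pull back to an explicit tree $T$ via the surjection $\theta$ and observe that under any topological order of $T$ the matrix $D$ is already lower triangular with unit diagonal; the root is handled by cofactor expansion along column $(1,0)$.

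The two arguments are essentially equivalent, but yours is a bit cleaner: triangularity gives the determinant in one line without tracking an inductive elimination, and invoking $T$ directly makes transparent why $b_{(d,j),(c,i)}=1$ implies the ancestor relation (the paper's phrase ``must [be] on a path to the root'' in $H(B)$ is really appealing to the same tree structure). The only cosmetic point is that your final paragraph of meta-commentary could be dropped in a polished version; the argument itself needs no hedging.
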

\begin{proof}
  By Definition~\ref{def:path_B}, we have that $B = \begin{pmatrix}\mathbf{1} & \mathbf{0}\\C & D\end{pmatrix}$ where $C$ has dimensions $n(k-1) \times n$, $D$ has dimensions $n(k-1) \times n(k-1)$, $\mathbf{1}$ is the $1 \times n$ matrix of all 1-s and $\mathbf{0}$ is the $1 \times n(k-1)$ matrix of all 0-s.
  We show that the square submatrix $D = [b_{(d,j),(c,i)}]$, where $(c,i), (d,j) \in [n] \times [k-1]$, has full rank by performing Gaussian elimination according to a breadth-first search on $H(B)$, starting from the all-zero ancestor $v_{(*,0)}$. 
  Let $\ell(v)$ denote the breadth-first search (BFS) level of vertex $v \in V(H(B))$. 
  Note that $\ell(v_{(*,0)}) = 0$.
  We claim that this process results in the $n(k-1) \times n(k-1)$ identity matrix $I$ where row $\mathbf{i}_{(c,i)}$ corresponds to vertex $v_{(c,i)} \in V(H(B)) \setminus \{v_{(*,0)}\}$.

  We show this constructively by induction on the BFS level $l$. 
  The claim is that at BFS level $l$ all rows $\mathbf{i}_{(c,i)}$ where $\ell(v_{(c,i)}) \leq l$ have been generated from $D$ using elementary row operations.
  Initially, at $l=1$, for each vertex $v_{(d,j)}$ with BFS level $\ell(v_{(d,j)}) = l = 1$ it holds that $b_{(d,j),(d,j)} = 1$ and $b_{(d,j),(c,i)} = 0$ for all $(c,i) \in [n] \times [k-1] \setminus \{(d,j)\}$. 
  Therefore the vertices $v_{(d,j)}$ at BFS level 1 correspond directly to rows $\mathbf{i}_{(d,j)}$ of $I$.
  At every iteration $l>1$, we generate, using elementary row operations, the rows $\mathbf{i}_{(d,j)}$ of $I$ that correspond to vertices $v_{(d,j)}$ at BFS level $\ell(v_{(d,j)}) = l$.
  In order to obtain row $\mathbf{i}_{(d,j)}$ of $I$, we subtract from $\mathbf{b}_{(d,j)}$ the rows $\mathbf{i}_{(c,i)}$ where $b_{(d,j),(c,i)} = 1$ and $(c,i) \neq (d,j)$. 
  Observe that by Definition~\ref{def:path_B}, $H(B)$ is connected and that every character-state pair $(c,i)$ must have been introduced by vertex $v_{(c,i)}$, which must on a path to the root $v_{(*,0)}$ from $v_{(d,j)}$.
  Hence, $\ell(v_{(c,i)}) < l$ for every $(c,i) \in [n] \times [k-1] \setminus \{(d,j)\}$ where $b_{(d,j),(c,i)} = 1$.
  Since the corresponding vertices $v_{(c,i)}$ are at a BFS level strictly smaller than $l$, the corresponding rows $\mathbf{i}_{(c,i)}$ have already been generated by the induction hypothesis. 
  Therefore at the final iteration, we obtain the identity matrix $I$ from $D$ using elementary row operations.
  It thus follows that $D$ is full rank.

  Since $D$ is full rank, the row rank of $\begin{pmatrix}C & D\end{pmatrix}$ equals the rank of $D$, which is $n(k-1)$. 
  Furthermore, the first row $\begin{pmatrix} \mathbf{1} & \mathbf{0}\end{pmatrix}$ of $B$ cannot be expressed as a linear combination of $\begin{pmatrix} C & D\end{pmatrix}$. 
  This implies that the row rank of $B$ is $n(k-1) + 1$. \qed
\end{proof}

This means that given the $m \times nk$ frequency matrix $F$ and the $(n(k-1) + 1) \times nk$ binary complete perfect phylogeny matrix $B$, there exists a \emph{unique} $m \times (n(k-1) + 1)$ matrix $U$ such that \eqref{eq:F=UB} holds, i.e.\ $U = FB^{-1}$ where the $nk \times (n(k-1) + 1)$ matrix $B^{-1}$ is the right inverse of $B$ such that $B B^{-1} = I$ where $I$ is the $(n(k-1) + 1) \times (n(k-1)+1)$ identity matrix.
Given $F$ and $T$, we now define the unique matrix $U = [u_{p,(c,i)}]$ without explicitly computing the right inverse of $B$. 
We do this using the notion of a \emph{descendant set} of character-state pairs $(c,i)$. 
The descendant set $D_{(c,i)}$ is the set of states for character $c$ that are descendants of $v_{(c,i)}$ in $T$.
Formally, $D_{(c,i)} = \{ j \mid (c,i) \prec_T (c,j) \}$.
We denote by $T_{(c,i)}$ the subtree of $T$ consisting of all vertices that have state $i$ for character $c$, and by $\overline{T}_{(c,i)}$ the subtree rooted at vertex $v_{(c,i)}$.
The descendant set of a character precisely determines the relationship between $\overline{T}_{(c,i)}$ and $T_{(c,i)}$; namely we have $\overline{T}_{(c,i)} = \bigcup_{l \in D_{(c,i)}} T_{(c,l)}$. 
In the two-state $(k=2)$ case, we have that $T_{(c,1)} = \overline{T}_{(c,1)}$ (see Figure~\ref{fig:t_vs_tbar}).
Recall that $f_{p,(c,i)} = \sum_{(d,j) \in T_{(c,i)}} u_{p,(d,j)}$.
We define the \emph{cumulative frequency} $f^+_p(D_{(c,i)}) = \sum_{l \in D_{(c,i)}} f_{p,(c,l)}$.
In the following lemma we show that given $T \in \mathcal{T}_{n,k}$, the cumulative frequencies $f^+_p(D_{(c,i)})$ for the descendant sets defined by $T$ \emph{uniquely} determine the usage matrix $U$.
For an intuitive explanation of the usage equation \eqref{eq:U-app}, see Figure~\ref{fig:usage_equation}.


\begin{figure}[t]
  \centering
  \includegraphics{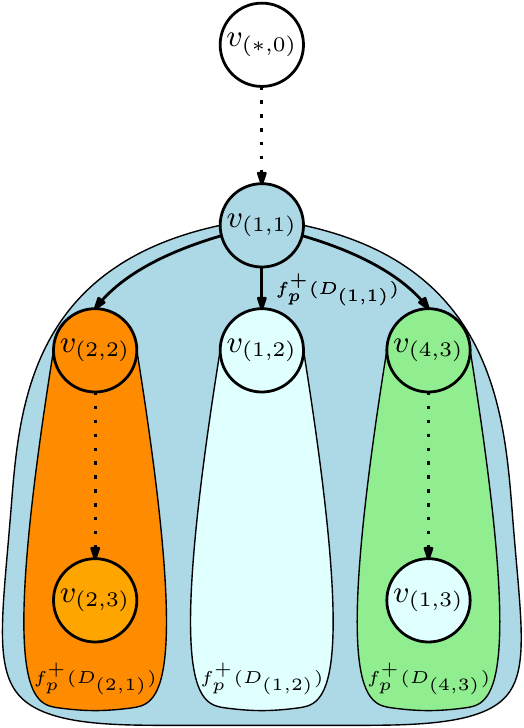}
  \caption{\textbf{Example of the usage equation \eqref{eq:U-app}.} Recall that $f_{p,(c,i)}$ is the sum of the usages of the vertices that have state $i$ for character $c$, i.e.\ $f_{p,(c,i)} = \sum_{(d,j) \in T_{(c,i)}} u_{p,(d,j)}$. The observation here is that the cumulative frequency $f^+_p(D_{(c,i)})$ equals the sum of the usages of all vertices in the subtree of $T$ rooted at vertex $v_{(c,i)}$, i.e.\ $f^+_p(D_{(c,i)}) = \sum_{l \in D_{(c,i)}} f_{p,(c,l)} = \sum_{l \in D_{(c,i)}} \sum_{(d,j) \in T_{(c,l)}} u_{p,(d,j)}$. Hence, $u_{p,(1,1)} = f^+_p(D_{(1,1)}) - (f^+_p(D_{(2,2)}) + f^+_p(D_{(1,2)}) + f^+_p(D_{(4,3)})) = (f_{p,(1,1)} + f_{p,(1,2)} + f_{p,(1,3)}) - (f_{p,(2,2)} + f_{p,(2,3)} + f_{p,(1,2)} + f_{p,(4,3)})$.}
  \label{fig:usage_equation}
\end{figure}

\begin{lemU}
  Let $T \in \mathcal{T}_{n,k}$ and $\mathcal{F} = [[f_{p,(c,i)}]]$ be a frequency tensor.
  For a character-state pair $(c,i)$ and sample $p$, let
  \begin{equation}
    u_{p,(c,i)} = f^+_p(D_{(c,i)}) - \sum_{(d,j) \in \delta(c,i)} f^+_p(D_{(d,j)}).
    \label{eq:U-app}
  \end{equation}
  Then $U = [u_{p,(c,i)}]$ is the unique matrix whose entries satisfy $f_{p,(c,i)} = \sum_{(d,j) \in T_{(c,i)}} u_{p,(d,j)}$.
\end{lemU}
\begin{proof}
  Let $(c,i)$ be a character-state pair and $p$ be a sample.
  Let $A = \theta(T)$ and $B = \psi(A)$.
  Recall that $T_{(c,i)}$ is the set of vertices $\{v_{(d,j)} \mid b_{(d,j),(c,i)} = 1 \}$.
  Note that by definition, the vertices of $T_{(c,i)}$ induce a connected subtree in $T$.
  We thus need to show that
  \[
    f_{p,(c,i)} = u_{p,(*,0)} b_{(d,j),(c,i)} + \sum_{d=1}^n \sum_{j=1}^{k-1} u_{p,(d,j)} b_{(d,j),(c,i)} = \sum_{(d,j) \in T_{(c,i)}} u_{p,(d,j)}.
  \]
  We introduce the following shorthand $\Delta(c,i) = \bigcup_{(d,j) \in T_{(c,i)}} \delta(d,j) \setminus T_{(c,i)}$, which is the set of vertices $\{v_{(d,j)}\}$ that are not in $T_{(c,i)}$ but whose parent $v_{\pi(d,j)}$ is in $T_{(c,i)}$.
  Thus,
  \begin{align*}
    f_{p,(c,i)} & = \sum_{(d,j) \in T_{(c,i)}} u_{p,(d,j)}\\
    & = \sum_{(d,j) \in T_{(c,i)}}\left( f^+_p(D_{(d,j)}) - \sum_{(e,l) \in \delta(d,j)} f^+_p(D_{(e,l)}) \right).
  \end{align*}
  Observe that in the equation above, for every $(d,j) \in T_{(c,i)} \setminus \{(c,i)\}$, there are two terms $f^+_p(D_{(d,j)})$ and $-f^+_p(D_{(d,j)})$, which consequently cancel out.
  The remaining terms are $f^+_p(D_{(c,i)})$, and $-f_p(D_{(c,l)})$ for each $(c,l) \in \Delta(c,i)$.
  In addition, we have that the state trees $\{ D_{(c,l)} \}$ corresponding to the elements of $(c,l)  \in \Delta(c,i)$ are pairwise disjoint.
  Moreover, $D_{(c,i)} \setminus \{i\} = \bigcup_{(c,l) \in \Delta(c,i)} D_{(c,l)}$.
  Thus,
  \begin{align*}
    f_{p,(c,i)} & = f^+_p(D_{(c,i)}) - \sum_{(c,l) \in \Delta(c,i)} f^+_p(D_{(c,l)})\\
    & = f_{p,(c,i)} + \sum_{(c,l) \in \Delta(c,i)} f^+_p(D_{(c,l)}) - \sum_{(c,l) \in \Delta(c,i)} f^+_p(D_{(c,l)})\\
    & = f_{p,(c,i)}. 
  \end{align*}
  By Lemma~\ref{lem:full_row_rank}, the equation $F = UB$ has only one solution given $F$ and $B$. Thus $U = [u_{p,(c,i)}]$ is the unique matrix such that $f_{p,(c,i)} = \sum_{(d,j) \in T_{(c,i)}} u_{p,(d,j)}$ for all samples $p$ and character-state pairs $(c,i)$.
  \qed
\end{proof}

\subsection{Combinatorial Characterization of the \Problem}

We say that $T$ \emph{generates} $\mathcal{F}$ if the corresponding matrix $U$, defined by \eqref{eq:U}, is a usage matrix.
It turns out that positivity of the values $u_{p,(c,i)}$ is a necessary and sufficient condition for $T$ to generate $\mathcal{F}$.
We show this in the following theorem.

\begin{thm1}
  A complete perfect phylogeny tree $T$ generates $\mathcal{F}$ if and only if 
  \begin{equation}
    \tag{\ref{eq:MSSC}} f^+_p(D_{(c,i)}) - \sum_{(d,j) \in \delta(c,i)} f^+_p(D_{(d,j)}) \ge 0
  \end{equation}
  for all character-state pairs $(c,i)$ and samples $p$.
\end{thm1}
\begin{proof}
  $(\Rightarrow)$ We start by proving the forward direction. 
  Let $F = [F_0 \ldots F_{k-1}]$, $T$ be a tree that generates $F$, $B = \psi(\theta(T))$ be the corresponding binary matrix of $T$ and let $U = [u_{p,(c,i)}]$ be the corresponding usage matrix.
  Since $T$ generates $F$, we have that $u_{p,(c,i)} \geq 0$ for all character-state pairs $(c,i)$ and samples $p$.
  By Lemma~\ref{lem:U}, \eqref{eq:MSSC} thus holds.

  $(\Leftarrow)$ As for the reverse direction, we need to show that if \eqref{eq:MSSC} is met, the matrix $U$ as defined in Lemma~\ref{lem:U} is a usage matrix.
  Let $p \in [m]$. 
  We prove this direction by showing that (i) $u_{p,(c,i)} \geq 0$ for all character-state pairs $(c,i)$, and (ii) $u_{p,(*,0)} + \sum_{c=1}^n \sum_{i=1}^{k-1} u_{p,(c,i)} = 1$.
  \begin{enumerate}
    \item[(i)] By Lemma~\ref{lem:U} and \eqref{eq:MSSC}, we have that $u_{p,(c,i)} \geq 0$.
    \item[(ii)] 
      We now have
      \[
        u_{p,(*,0)} + \sum_{c=1}^n\sum_{i=1}^{k-1} u_{p,(c,i)} = f^+_p(D_{(*,0)}) - \sum_{(d,j) \in \delta(*,0)} f^+_p(D_{(d,j)}) + \sum_{c=1}^n \sum_{i=1}^{k-1} \left( f^+_p(D_{(c,i)}) - \sum_{(d,j) \in \delta(c,i)} f^+_p(D_{(d,j)})\right).
      \]
      Observe that for each $(c,i) \in [n] \times [k-1]$ there are exactly two terms in the above equation: $+f^+_p(D_{(c,i)})$ when $v_{(c,i)}$ is considered as a parent and $-f^+_p(D_{(c,i)})$ when $v_{(c,i)}$ is considered as a child. 
      Hence, all these terms cancel out.
      Since $D_{(*,0)} = \{0,\ldots,k-1\}$ and $\sum_{i=0}^{k-1} f_{p,(c,i)} = 1$, we have that $f^+_p(D_{(*,0)}) = 1$.
      Thus, $u_{p,(*,0)} + \sum_{c=1}^n\sum_{i=1}^{k-1} u_{p,(c,i)} = 1$.
  \end{enumerate}
  \qed
\end{proof}

Although the sets $D_{(c,i)}$ are \emph{a priori} unknown, we will show that the following condition \eqref{eq:MSAC} must hold for any tree $T$ that generates $\mathcal{F}$ where $(c,i) \prec_T (d,j)$.

\begin{def1}
  Let $(c,i)$ and $(d,j)$ be distinct character-state pairs and let $D_{(c,i)}, D_{(d,j)} \subseteq \{0,\ldots,k-1\}$.
  A pair $(D_{(c,i)}, D_{(d,j)})$ is a \emph{valid descendant set pair} provided 
  \begin{equation}
    \tag{\ref{eq:MSAC}}
    f^+_p(D_{(c,i)}) - f^+_p(D_{(d,j)}) \ge 0
  \end{equation}
  for all samples $p$; and additionally if $c = d$ then $D_{(c,j)} \subsetneq D_{(c,i)}$.
\end{def1}

It turns out that there are potentially many valid descendant set pairs as shown by the following lemma.
\begin{lem2}
  Let $(D_{(c,i)}, D_{(d,j)})$ be a valid descendant set pair. If $D_{(c,i)} \subseteq D'_{(c,i)}$ and $D'_{(d,j)} \subseteq D_{(d,j)}$ then $(D'_{(c,i)}, D'_{(d,j)})$ is a valid descendant set pair.
\end{lem2}
\begin{proof}
  Let $D_{(c,i)} \subseteq D'_{(c,i)}$ and $D'_{(d,j)} \subseteq D_{(d,j)}$.
  Let $p \in [m]$.
  Since $f_{p,(c,i)} \geq 0$ (by Definition~\ref{def:F}), $D_{(c,i)} \subseteq D'_{(c,i)}$ and $D'_{(d,j)} \subseteq D_{(d,j)}$, we have that $f^+_p(D'_{(c,i)}) \ge f^+_p(D_{(c,i)})$ and $f^+_p(D_{(d,j)} \ge f^+_p(D'_{(d,l)})$.
Moreover, if $c = d$ then $D_{(d,j)} \subsetneq D_{(c,i)}$ and thus $D'_{(d,j)} \subsetneq D'_{(c,i)}$.
  Hence, $(D'_{(c,i)}, D'_{(d,j)})$ is a valid descendant set pair. \qed
\end{proof}

Since $v_{(*,0)}$ is the all-zero ancestor, we have the following corollary that describes the \emph{extreme} valid descendant set pair.

\begin{corollary}
  \label{cor:supersub}
    Let $T$ be a complete perfect phylogeny tree that generates $\mathcal{F}$. If $(c,i) \prec_T (d,j)$ and $(c,i) \neq (c,0)$ then $D_{(c,i)} = [k-1]$ and $D_{(d,j)} = \{j\}$ form a valid descendant set pair $(D_{(c,i)}, D_{(d,j)})$.
\end{corollary}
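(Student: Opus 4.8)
The plan is to deduce the extreme pair $\bigl([k-1],\{j\}\bigr)$ from the valid descendant set pair that $T$ itself already furnishes, by enlarging the first coordinate up to $[k-1]$ and shrinking the second down to $\{j\}$ via Lemma~\ref{lem:supersub}. To keep notation straight, I would, for the length of the argument, let $D_{(c,i)}$ and $D_{(d,j)}$ denote the \emph{actual} descendant sets in $T$, namely $D_{(c,i)}=\{l\mid (c,i)\prec_T(c,l)\}$ and likewise for $(d,j)$. The Proposition above, applied to the comparison $(c,i)\prec_T(d,j)$, gives that $(D_{(c,i)},D_{(d,j)})$ is a valid descendant set pair, so this is the object to start from.

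Next I would record two containments. Reflexivity of $\prec_T$ gives $j\in D_{(d,j)}$, hence $\{j\}\subseteq D_{(d,j)}$. For the other coordinate, $(c,i)\neq(c,0)$ means $v_{(c,i)}$ is the head of the edge labelled $(c,i)$, so $v_{(c,i)}$ is not the root $v_{(*,0)}=v_{(c,0)}$; since the only vertex ancestral (under the reflexive relation $\prec_T$) to the root is the root itself, $(c,i)\not\prec_T(c,0)$, so $0\notin D_{(c,i)}$ and therefore $D_{(c,i)}\subseteq[k-1]$. Feeding $D'_{(c,i)}=[k-1]\supseteq D_{(c,i)}$ and $D'_{(d,j)}=\{j\}\subseteq D_{(d,j)}$ into Lemma~\ref{lem:supersub} then yields that $\bigl([k-1],\{j\}\bigr)$ is a valid descendant set pair, which is exactly the claim. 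The subcase $c=d$ needs no separate argument: the proper-containment clause of Lemma~\ref{lem:supersub} propagates automatically, since $D'_{(c,j)}\subseteq D_{(c,j)}\subsetneq D_{(c,i)}\subseteq D'_{(c,i)}$ forces $\{j\}=D'_{(c,j)}\subsetneq D'_{(c,i)}=[k-1]$.

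I do not expect a genuine obstacle; the entire content is the root/reflexivity observation pinning down $0\notin D_{(c,i)}$, everything else being bookkeeping. The one point worth stating carefully is why one cannot simply verify \eqref{eq:MSAC} by hand: when $c\neq d$ the inequality $\sum_{l=1}^{k-1}f_{p,(c,l)}\ge f_{p,(d,j)}$ compares frequencies of two \emph{different} characters and is not a mere consequence of nonnegativity --- one really needs the chain $f^+_p([k-1])\ge f^+_p(D_{(c,i)})\ge f^+_p(D_{(d,j)})\ge f^+_p(\{j\})$, which is precisely the content that Lemma~\ref{lem:supersub} abstracts. I would therefore present the argument exactly as the two-line reduction above rather than unpacking it.
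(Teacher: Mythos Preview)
Your overall shape is right --- start from the tree's own descendant sets and push out to the extreme pair $([k-1],\{j\})$ via Lemma~\ref{lem:supersub} --- but there is a genuine circularity in how you obtain the starting pair. You invoke ``the Proposition above'' (Proposition~1) to conclude that the actual descendant sets $(D^T_{(c,i)},D^T_{(d,j)})$ form a valid pair. In the paper's logical order, however, the Corollary precedes Proposition~1, and Proposition~1's proof explicitly cites this very Corollary. So you cannot use Proposition~1 here. (Separately, Proposition~1 as stated only asserts the \emph{existence} of some valid pair, not that the tree's own descendant sets work; you are reading more into it than its statement gives.)

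The repair is already in your last paragraph. The chain
\[
f^+_p([k-1])\ \ge\ f^+_p\bigl(D^T_{(c,i)}\bigr)\ \ge\ f^+_p\bigl(D^T_{(d,j)}\bigr)\ \ge\ f^+_p(\{j\})
\]
is the whole proof: the outer inequalities are nonnegativity together with the inclusions $D^T_{(c,i)}\subseteq[k-1]$ and $\{j\}\subseteq D^T_{(d,j)}$ that you already justified; the middle inequality comes from telescoping \eqref{eq:MSSC} (Theorem~\ref{thm:characterization}) along the path in $T$ from $v_{(c,i)}$ to $v_{(d,j)}$, since for each edge $f^+_p(D^T_{\pi})\ge\sum_{\text{children}}f^+_p(D^T_{\cdot})\ge f^+_p(D^T_{\text{child}})$. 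This is precisely the argument that later appears inside the proof of Proposition~1; you just need to invoke Theorem~\ref{thm:characterization} directly rather than routing through Proposition~1. With that one substitution your two containments plus Lemma~\ref{lem:supersub} (or the bare chain) finish the job, and the $c=d$ proper-containment check you wrote is fine.
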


The following proposition shows that \eqref{eq:MSAC} is a necessary condition to solutions of the \Problem.

\begin{prop1}
  Let $T$ be a complete perfect phylogeny tree that generates $\mathcal{F}$. If $(c,i) \prec_T (d,j)$ then there exist a valid descendant set pair $(D_{(c,i)}, D_{(d,j)})$.
\end{prop1}
\begin{proof}
  Let $(c,i) \prec_T (d,j)$.
  Let $P = v_{(c_1,i_1)}, \ldots, v_{(c_t,i_t)}$ where $v_{(c_1,i_1)} = v_{(c,i)}$ and $v_{(c_t,i_t)} = v_{(d,j)}$ be the unique path from $v_{(c,i)}$ to $v_{(d,j)}$ in $T$.
  Assume for a contradiction that there exists no valid descendant set pair $(D_{(c,i)}, D_{(d,j)})$.
  By Corollary~\ref{cor:supersub}, we thus have that $D_{(c,i)} = [k-1]$ and $D_{(d,j)} = \{j\}$ do not form a valid descendant set pair $(D_{(c,i)}, D_{(d,j)})$.
  Now, if $c=d$, we would have a contradiction as $\{j\} \subseteq [k-1]$.
  Therefore, $c \neq d$ and $f^+_p(D_{(c,i)}) - f^+_p(D_{(d,j)}) < 0$.
  By Theorem~\ref{thm:characterization}, we have that $f^+_p(D_{(c_l,i_l)}) - f^+_p(D_{(c_{l+1},i_{l+1})}) \geq 0$ for all $1 \leq l < t$.
  Therefore, $f^+_p(D_{(c_1,i_1)}) - f^+_p(D_{(c_{t},i_{t})}) \ge 0$, which leads to a contradiction.
  \qed
\end{proof}

We now define the multi-state ancestry graph $G_\mathcal{F}$ whose vertices correspond to character-state pairs and whose edges correspond to valid descendant state pairs.
\begin{def1}
The \emph{multi-state ancestry graph} $G_{\mathcal{F}}$ of the frequency tensor $\mathcal{F}$ is an edge-labeled, directed multi-graph $G_{\mathcal{F}} = (V,E)$ whose vertices $v_{(c,i)}$ correspond to character-state pairs $(c,i)$ and whose multi-edges are $(v_{(c,i)},v_{(d,j)})$ for all valid descendant set pairs $(D_{(c,i)}, D_{(d,j)})$.
\end{def1}

Note that $v_{(1,0)},\ldots,v_{(n,0)}$ all refer to the same root vertex $v_{(*,0)}$.
See Figure~\ref{fig:example} for an example multi-state ancestry graph.
We use the labels of the multi-edges to restrict the set of allowed spanning trees by defining a threading as follows.

\begin{def1}
  A rooted subtree $T$ of $G_{\mathcal{F}} = (V,E)$ is a \emph{threaded tree} provided (1) for every pair of adjacent edges $(v_{(c,i)}, v_{(d,j)}), (v_{(d,j)}, v_{(e,l)}) \in E(T)$ with corresponding labels $(D_{(c,i)}, D_{(d,j)})$ and $(D'_{(d,j)}, D'_{(e,l)})$, it holds that $D_{(d,j)} = D'_{(d,j)}$, and (2) 
  for every pair of vertices $v_{(c,i)},v_{(c,j)} \in V(T)$ it holds that $D_{(c,j)} \subseteq D_{(c,i)}$ if and only if $(c,i) \prec_T (c,j)$.
\end{def1}

We now prove that solutions of an \Problem~instance $\mathcal{F}$ correspond to threaded spanning trees of the ancestry graph $G_\mathcal{F}$.

\begin{thm2}
  \label{thm:combinatorial_characterization2}
  A complete perfect phylogeny tree $T$ generates $\mathcal{F}$ if and only if $T$ is a threaded spanning tree of the ancestry graph $G_\mathcal{F}$ such that \eqref{eq:MSSC} holds.
\end{thm2}
\begin{proof}
  $(\Rightarrow)$ Let $T$ be a complete perfect phylogeny tree generating $\mathcal{F}$.
  We claim that $T$ is a threaded spanning tree of the ancestry graph $G_\mathcal{F}$.
  We start by showing that every edge $(v_{(c,i)},v_{(d,j)}) \in E(T)$ is an edge of $G_\mathcal{F}$ labeled by $(D_{(c,i)},D_{(d,j)})$.
  By Theorem~\ref{thm:characterization}, we have that
  $f^+_p(D_{(c,i)}) - \sum_{(d,j) \in \delta(c,i)} f^+_p(D_{(d,j)}) \ge 0$
  for all character-state pairs $(c,i)$ and samples $p$.
  Let $(c,i)$ be a character-state pair.
  By definition, we have that $D_{(c, j)} \subsetneq D_{(c,i)}$ for all $(c,j) \in \delta(c,i)$.
  Moreover, $i \in D_{(c,i)}$ and $j \in D_{(d,j)}$ for all character-state pairs $(d,j) \in \delta(c,i)$.
  Thus, $(D_{(c,i)}, D_{(d,j)})$ is a valid descendant set pair for all character-state pairs $(d,j) \in \delta(c,i)$.
  Hence, every edge $(v_{(c,i)},v_{(d,j)})$ of $T$ is an edge of $G_\mathcal{F}$ labeled by the valid descendant set pair $(D_{(c,i)}, D_{(d,j)})$.
  Thus, $T$ is a tree of $G$.
  Next, we show that $T$ is a \emph{threaded} spanning tree.
  \begin{enumerate}
    \item By definition of $D$, we have that every pair of adjacent edges $(v_{(c,i)}, v_{(d,j)}), (v_{(d,j)}, v_{(e,l)}) \in E(T)$ is labeled by $(D_{(c,i)}, D_{(d,j)})$ and $(D_{(d,j)}, D_{(e,l)})$, respectively.
    \item By definition of $D$, we have that for every edge $(v_{(c,i)}, v_{(d,j)}) \in E(T)$ labeled by $(D_{(c,i)}, D_{(d,j)})$, it holds that $D_{(c,i)} = \{ l \mid (c,i) \prec_T (c,l) \}$ and $D_{(d,j)} = \{ l \mid (d,j) \prec_T (d,l) \}$. 
      Hence, $(c,i) \prec_T (c,j)$ if and only if $D_{(c,j)} \subseteq D_{(c,i)}$.
  \end{enumerate}
  The conditions of Definition~\ref{def:threaded} are thus met. Therefore, $T$ is a threaded spanning tree of $G_\mathcal{F}$.

  $(\Leftarrow)$ Let $T$ be a threaded spanning tree of the ancestry graph $G_\mathcal{F}$ such that $f^+_p(D_{(c,i)}) - \sum_{(d,j) \in \delta(c,i)} f^+_p(D_{(d,j)}) \ge 0$ for all character-state pairs $(c,i)$ and samples $p$.
  Observe that $T$ is a complete perfect phylogeny tree.
  By condition (1) of Definition~\ref{def:threaded}, we have that for all character-state pairs $(c,0) \neq (*,0)$ and $(d,j) \in \delta(c,i)$, adjacent edges $(v_{\pi(c,i)}, v_{(c,i)})$, $(v_{(c,i)}, v_{(d,j)})$ are labeled by $(D_{\pi(c,i)},D_{(c,i)})$ and $(D_{(c,i)},D_{(d,j)})$---where $\pi(c,i)$ is the parent character-state pair of $(c,i)$.
  Hence, we may use $D_{(c,i)}$ to unambiguously denote the descendant state set of the character-state pair $(c,i)$.
  Moreover, by condition (2) of Definition~\ref{def:threaded} and the fact that $T$ is a spanning tree of $G_\mathcal{F}$, we have that $D$ is defined in the same way as in Theorem~\ref{thm:characterization}.
  By Theorem~\ref{thm:characterization}, we thus have that $T$ generates $\mathcal{F}$. \qed
\end{proof}

Figure~\ref{fig:example} shows an example instance $\mathcal{F}$, where $k=3$, $m=2$ and $n=2$, including the multi-state ancestry graph $G_\mathcal{F}$ and a solution $T$ that generates $\mathcal{F}$.

\begin{figure}[t]
  \centering
  \subfloat[Frequency tensor {$\mathcal{F} = [F_0 \, F_1 \, F_2]$ where $m=2$, $n=2$ and $k=3$}]{\includegraphics{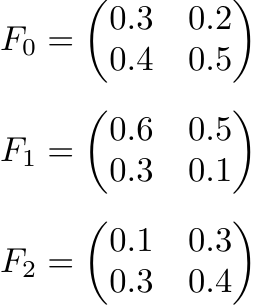}}
  \hspace{.5cm}
  \subfloat[Multi-state ancestry graph $G_\mathcal{F}$. Red edges denote an $n,k$-complete perfect phylogeny tree $T$ that generates $\mathcal{F}$]{\includegraphics[width=.8\textwidth]{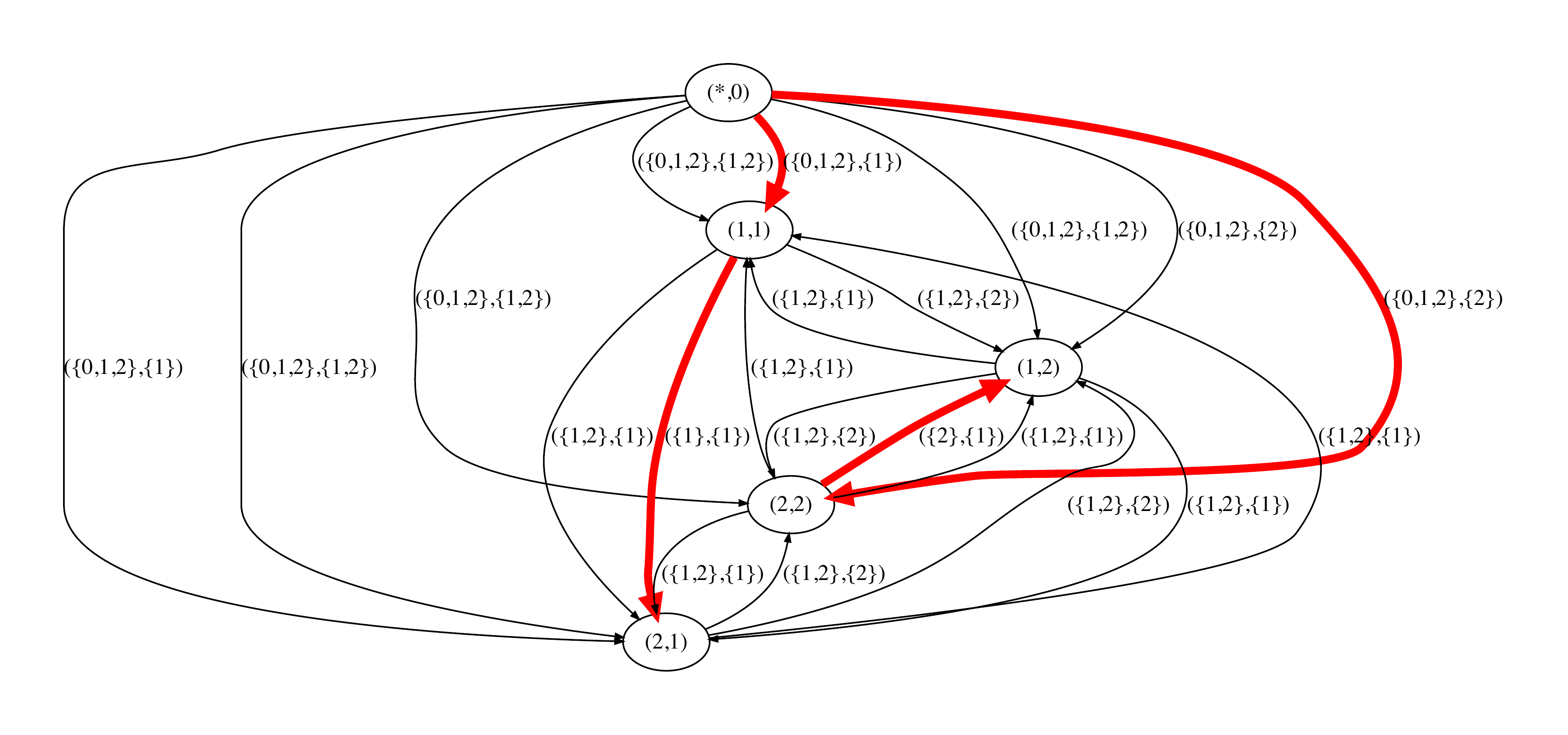}}
  \caption{\textbf{Example \Problem~instance and solution}}
  \label{fig:example}
\end{figure}

\subsection{Problem Complexity}
\label{sec:complexity}

In previous work, we have shown that the problem is NP-complete for general $m$~\cite{ElKebir:2015by}.
An open question was the hardness for constant $m$, which we resolve with the
following lemma.

\begin{thm3}
  The \Problem~is NP-complete even for $m=2$ and $k=2$.
\end{thm3}
\begin{proof}
  Clearly, the problem is in NP---given matrices $U$ and $A$ we can check in
  polynomial time whether $F_i = U A_i$ for all $i \in \{0,\ldots,k-1\}$. We show NP-hardness by reduction
  from SUBSET SUM, which, given nonnegative integers
  $B = \{b_1, \ldots, b_t\}$ and $d$, asks whether there
  exists a subset $B' \subseteq B$ whose sum equals $d$. This problem is
  NP-complete~\cite{Garey:1979}. 
  
  Let $e = \sum_{\ell=1}^t c_\ell$.
  Without loss of generality assume that $e > 0$, $b_\ell < d$ for all $\ell \in [t]$
  and that $b_\ell \leq b_{\ell+1}$ for all $\ell \in [t-1]$. The corresponding frequency
  tensor $\mathcal{F} = [F_0 \, F_1]$ is then defined as follows.
  \begin{align*}
    F_0 & = \frac{1}{e}
    \begin{pmatrix}
      e - d & d & e - b_1 & e - b_2 & \ldots & e - b_t\\
      d & e - d & e - t\epsilon & e - (t-1)\epsilon & \ldots & e - \epsilon
    \end{pmatrix}\\
    F_1 & = \frac{1}{e}
    \begin{pmatrix}
      d & e - d & b_1 & b_2 & \ldots & b_t\\
      e - d & d & t \epsilon & (t-1) \epsilon & \ldots & \epsilon
    \end{pmatrix}.
  \end{align*}
  Note that $\mathcal{F}$ is a $k \times m \times n$ tensor where $k = 2$, $m = 2$ and $n = t + 2$.
  Also note that the normalization factor $\frac{1}{e}$ ensures that each $f_{p,(c,i)} \in [0,1]$ and that $f_{p,(c,0)} + f_{p,(c,1)} = 1$ for all $p \in [m]$ and $c \in [n]$.
  Clearly $\mathcal{F}$ can be obtained in polynomial time from a SUBSET SUM instance. By construction, the ancestry
  graph $G_\mathcal{F}$ consists of a root vertex $v_{(*,0)}$ with outgoing edges to all other vertices $\{d, e-d, b_1, \ldots, b_t\}$. In
  addition, these vertices induce a complete bipartite graph with vertex sets
  $\{d,e-d\}$ and $\{b_1,\ldots,b_t\}$. See Figure~\ref{fig:subsetsum} for an
  illustration.
  
  We claim that there exist $U \in \mathcal{U}_{2,t+2,2}$ and $A \in \mathcal{A}_{t+2,2}$ such that
  $F_0 = U A_0$ and $F_1 = U A_1$ if and only if there exists a subset $B' \subseteq B$
  whose sum is $d$. Equivalently, by Theorem~\ref{thm:combinatorial_characterization}, we claim that $G_\mathcal{F}$ admits a threaded spanning tree $T$
  satisfying \eqref{eq:MSSC} if and only if $B$ has a subset $B'$ whose sum is
  $d$.
  
  We start by proving the forward direction. Since $e = \sum_{\ell=1}^t b_\ell$
  and $T$ is a threaded spanning tree, the sum of the children $\delta(d)$ equals
  $d$ and the sum of the children $\delta(e-d)$ equals $e-d$. Therefore $B' = \delta(d)$ is a subset of $B$ whose sum is $d$. As for
  the reverse direction, we have that the sum of $B'$ equals $d$ and
  thus that the sum of $B \setminus B'$ equals $e-d$. The corresponding 
  threaded spanning tree $T$ where $\delta(*,0) = \{d,e-d\}$, $\delta(d) = B'$ and
  $\delta(e-d) = B \setminus B'$ is therefore a threaded spanning tree of
  $G_\mathcal{F}$ that satisfies \eqref{eq:MSSC}. 
\end{proof}

The result above settles the outstanding question of fixed-parameter
tractability in $m$.

\begin{corollary}
  \Problem~is not fixed-parameter tractable in $m$.
\end{corollary}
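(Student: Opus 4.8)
The plan is to obtain the corollary as an immediate consequence of the preceding theorem, which shows that \Problem\ is NP-complete even when $m = 2$ (and $k = 2$), using the standard fact that a parameterized problem already NP-hard for a fixed value of its parameter cannot be fixed-parameter tractable unless $\mathrm{P} = \mathrm{NP}$.

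Concretely, recall that \Problem\ parameterized by $m$ being fixed-parameter tractable would mean the existence of an algorithm that, on an input frequency tensor $\mathcal{F}$ of encoding size $s$, decides the problem in time $g(m)\cdot s^{O(1)}$ for some computable function $g$. First I would restrict this hypothetical algorithm to the subfamily of instances with $m = 2$; on this subfamily it runs in time $g(2)\cdot s^{O(1)}$, and since $g(2)$ is a constant this is a polynomial-time decision procedure for \Problem\ restricted to $m = 2$. I may further restrict the inputs to those with $k = 2$, which is exactly the regime covered by the preceding theorem.

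Next I would invoke that theorem: \Problem\ with $m = 2$ and $k = 2$ is NP-complete. Combining this with the polynomial-time procedure obtained in the previous step yields $\mathrm{P} = \mathrm{NP}$. Taking the contrapositive, unless $\mathrm{P} = \mathrm{NP}$ no such algorithm exists, i.e.\ \Problem\ is not fixed-parameter tractable in $m$; equivalently, in the language of parameterized complexity, \Problem\ is para-NP-hard with respect to the parameter $m$.

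There is essentially no mathematical obstacle here --- the argument is the routine trick of specializing the parameter to a constant. The only point that warrants a sentence of care is verifying that the reduction underlying the preceding NP-completeness theorem genuinely produces instances with $m = 2$ held fixed (independent of the \textsc{Subset Sum} instance it transforms), so that specializing the parameter to the constant $2$ is legitimate; inspection of that reduction confirms $m = 2$ throughout.
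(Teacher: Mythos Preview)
Your proposal is correct and matches the paper's intent: the paper gives no explicit proof of the corollary, simply remarking that the preceding NP-completeness result for $m=2$ ``settles the outstanding question of fixed-parameter tractability in $m$.'' Your write-up merely spells out the standard para-NP-hardness argument that the paper leaves implicit, and you are even slightly more careful in making the $\mathrm{P}\neq\mathrm{NP}$ assumption explicit.
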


\begin{figure}[t]
  \center
  \includegraphics[width=\textwidth]{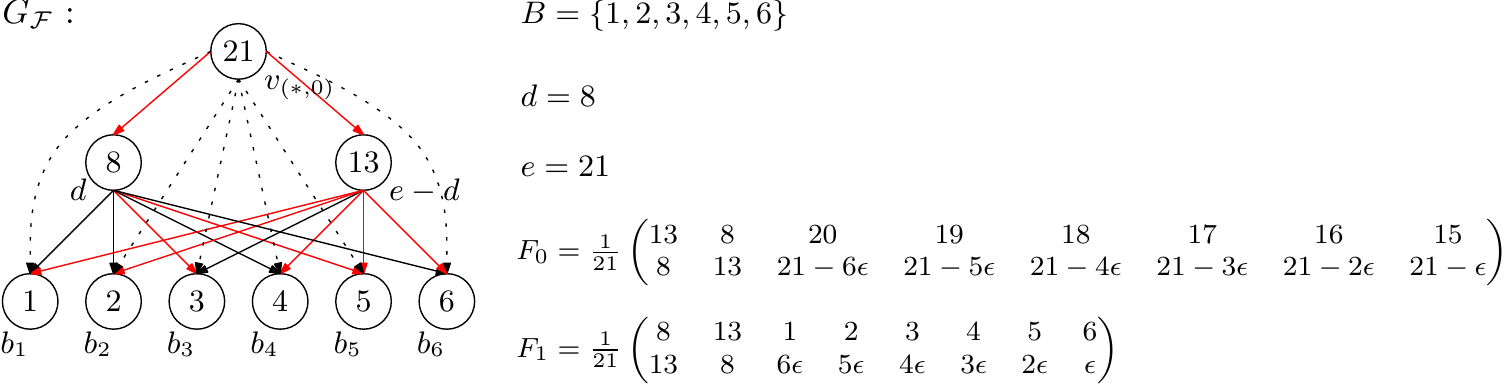}
  \caption{\textbf{Reduction from SUBSET SUM}}
  \label{fig:subsetsum}
\end{figure}

\subsection{\ProblemCladisticFull}

In the case of cladistic characters we are given a set $\mathcal{S} = \{ S_c \mid c \in [n]\}$ of state trees for each character. 
The vertex set of a state tree $S_c$ is $\{0,\ldots,k-1\}$, and the edges describe the relationships of the states of character $c$.
A perfect phylogeny $T$ is \emph{consistent} with $\mathcal{S}$ provided provided $i \prec_{S_{c}} j$ if and only if $(c,i) \prec_T (c,j)$ for all characters $c$ and states $i,j$.

In the cladistic \Problem\ we are given frequency tensor $\mathcal{F}$ and a set of state trees $\mathcal{S} = \{ S_c \mid c \in [n]\}$.
The goal is to find complete perfect phylogeny tree $T$ that generates $\mathcal{F}$ and is consistent with $\mathcal{S}$.
The set of states trees $\mathcal{S}$ determines the set of descendant sets as $D_{(c,i)} = \{j \mid i \prec_{S_c} j\}$.
Therefore the cladistic multi-state ancestry graph $G_{(\mathcal{F}, \mathcal{S})}$ is a simple graph with edges $(v_{(c,i)},v_{(d,j)})$ labeled by $(D_{(c,i)}, D_{(d,j)})$ provided $c \neq d$ and \eqref{eq:MSAC} holds.
Moreover, for each character $c$ there is an edge $(v_{(c,i)}, v_{(c,j)})$ provided state $i$ is the parent of state $j$ in $S_c$.
Solutions of the cladistic \Problem\ correspond to threaded spanning trees in $G_{(\mathcal{F},\mathcal{S})}$ as shown by the following proposition.

\begin{prop1}
  A complete perfect phylogeny tree $T$ generates $\mathcal{F}$ and is consistent with $\mathcal{S}$ if and only if $T$ is a threaded spanning tree of the cladistic ancestry graph $G_{(\mathcal{F},\mathcal{S})}$ such that \eqref{eq:MSSC} holds.
\end{prop1}
\begin{proof}
  Let $T$ be a threaded spanning tree of the ancestry graph $G_{(\mathcal{F},\mathcal{S})}$.
  We claim that $T$ is consistent with $\mathcal{S}$, i.e.\ $i \prec_{S_{c}} j$ if and only if $(c,i) \prec_T (c,j)$ for all characters $c$ and states $i,j$.
  By condition (1) of Definition~\ref{def:threaded}, we have that for all character-state pairs $(c,i) \neq (*,0)$ and $(d,j) \in \delta(c,i)$, adjacent edges $(v_{\pi(c,i)}, v_{(c,i)})$, $(v_{(c,i)}, v_{(d,j)})$ are labeled by $(D_{\pi(c,i)},D_{(c,i)})$ and $(D_{(c,i)},D_{(d,j)})$, respectively---where $\pi(c,i)$ is the parent character-state pair of $(c,i)$.
  By definition, we have $D_{(c,i)} = \{ l \mid i \prec_{S_{c}} l \}$ for each character-state pair $(c,i)$.
  By condition~(2) of Definition~\ref{def:threaded} and the fact that $T$ is a spanning tree of  $G_{(\mathcal{F},\mathcal{S})}$, we have $D_{(c,i)} = \{ l \mid (c,i) \prec_{T} (c,l) \}$.
  The lemma now follows.
  \qed
\end{proof}


Figure~\ref{fig:cladistic_example} shows an example instance $(\mathcal{F},\mathcal{S})$, where $k=3$, $m=2$ and $n=2$, including the cladistic multi-state ancestry graph $G_{(\mathcal{F},\mathcal{S})}$ and a solution $T$ that generates $\mathcal{F}$ and is consistent with $\mathcal{S}$.

\begin{figure}[t]
  \centering
  \subfloat[Frequency tensor {$\mathcal{F} = [F_0 \, F_1 \, F_2]$ where $m=2$, $n=2$ and $k=3$}]{\includegraphics{tensor}}
  \hspace{.5cm}
  \subfloat[State trees {$\mathcal{S}$}]{\includegraphics[scale=1.2]{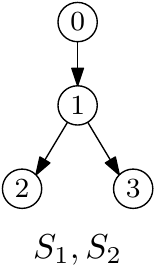}}
  \hspace{.5cm}
  \subfloat[Cladistic multi-state ancestry graph $G_{(\mathcal{F},\mathcal{S})}$, red edges denote $n,k$-complete perfect phylogeny tree $T$ that generates $\mathcal{F}$ and is consistent with $\mathcal{S}$]{\includegraphics[width=.5\textwidth]{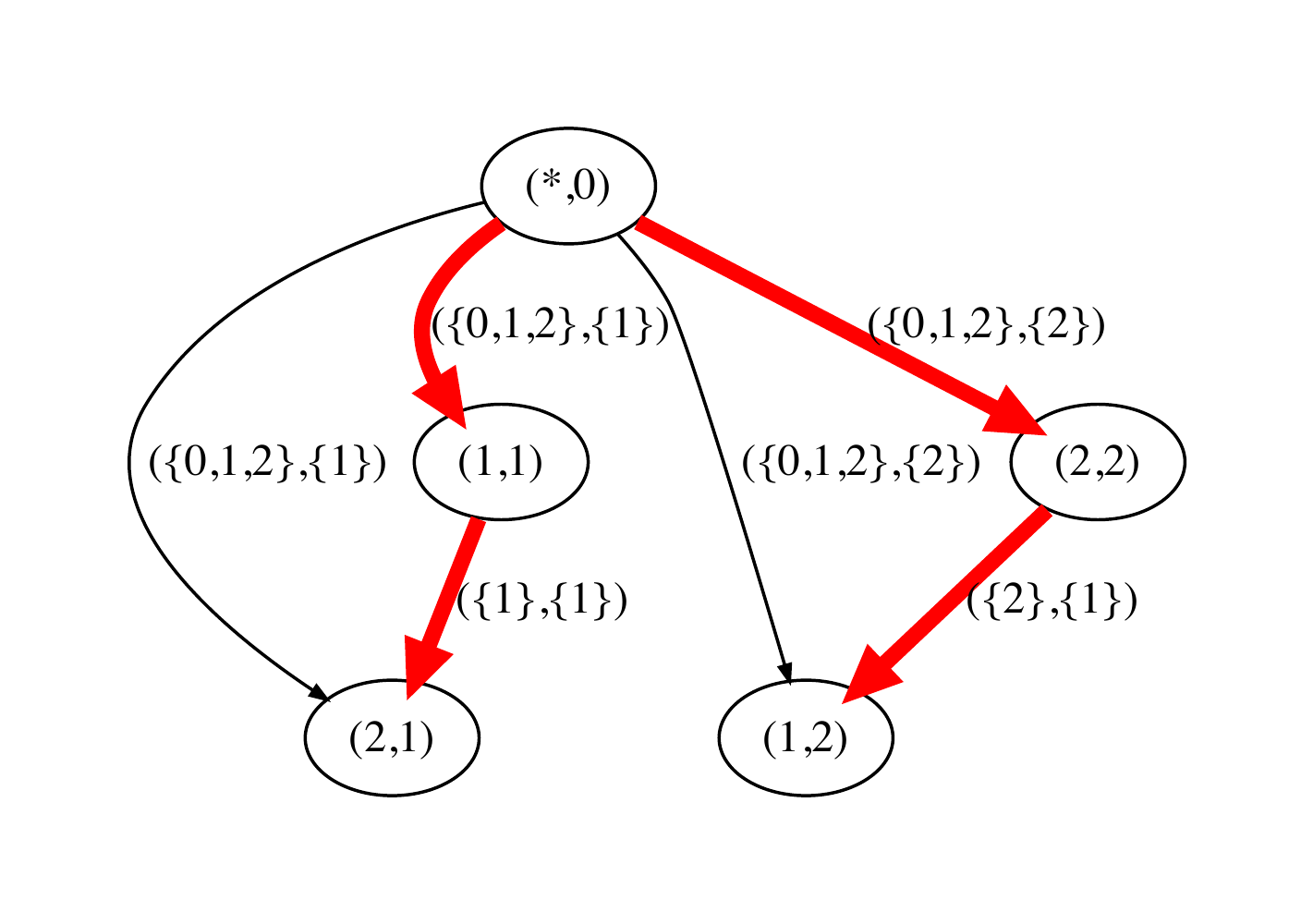}}
  \caption{\textbf{Example cladistic \Problem~instance and solution}}
  \label{fig:cladistic_example}
\end{figure}


\subsection{Algorithm for the \ProblemCladistic}
\label{sec:alg}

We now describe an algorithm for enumerating all trees $T$ that are consistent with the given state trees $\mathcal{S}$ and generate the given frequencies $\mathcal{F}$.
The crucial observation is that any subtree of a consistent, threaded spanning tree $T$ that satisfies \eqref{eq:MSSC} must itself be satisfy \eqref{eq:MSSC} and be consistent and threaded.
A subtree $T'$ is \emph{consistent} if it is rooted at $v_{(*,0)}$, and for each character $c$ the set of states $\{i \mid v_{(c,i)} \in V(T') \}$ induces a connected subtree in $S_c$.
We can thus constructively grow consistent, threaded trees that satisfy \eqref{eq:MSSC} by maintaining the following invariant.

\begin{inv}
  Let tree $T$ be the partially constructed tree.
  It holds that (1) for each $v_{(c,i)} \in V(T) \setminus \{v_{(*,0)}\}$ and parent $\pi(i)$ of $i$ in $S_c$, the vertex $v_{(c,\pi(i))}$ is the first vertex labeled by character $c$ on the unique path from $v_{(*,0)}$ to $v_{(c,i)}$; and (2) for each vertex $v_{(c,i)} \in V(T)$, \eqref{eq:MSSC} holds for $T$ and $\mathcal{F}$.
\end{inv}

We maintain a subset of edges $H \subseteq E(G_{(\mathcal{F},\mathcal{S})})$ called the \emph{frontier} that can be used to extend a partial tree $T$ such that the following invariant holds.
\begin{inv}
  Let tree $T$ be the partially constructed tree.
  For every edge $(v_{(c,i)},v_{(d,j)}) \in H$, (1) $v_{(c,i)} \in V(T)$, (2) $v_{(d,j)} \not \in V(T)$, and (3) Invariant~\ref{inv:T} holds for $T'$ where $E(T') = E(T) \cup \{(v_{(c,i)},v_{(d,j)})\}$.
\end{inv}

Algorithm~\ref{alg:cladistic} gives the pseudo code of \textsc{Enumerate} described in the main text.
The initial call is \textsc{Enumerate}$(G, \{v_{(*,0)}\}, \delta(*,0))$. 
The partial tree containing just the vertex $v_{(*,0)}$ satisfies Invariant~\ref{inv:T}.
The set $\delta(*,0)$ corresponds to the set of outgoing edges from vertex $v_{(*,0)}$ of $G_{(\mathcal{F},\mathcal{S})}$, which by definition satisfies Invariant~\ref{inv:H}.
Upon the addition of an edge $(v_{(c,i)}, v_{(d,j)}) \in H$ (line 5), Invariant~\ref{inv:H} is restored by adding all outgoing edges from $v_{(d,j)}$ whose addition results in a consistent partial tree that satisfies \eqref{eq:MSSC} (lines 8-9) and by removing all edges from $H$ that introduce a cycle (lines 11-12) or violate \eqref{eq:MSSC} (lines 13-14).

\begin{algorithm}
  \caption{\textsc{Enumerate}$(G, T, H)$}
  \label{alg:cladistic}
  \KwIn{Ancestry graph $G_{(\mathcal{F},\mathcal{S})}$, perfect phylogeny tree $T$, frontier $H$}
  \KwOut{All complete perfect phylogeny trees that generate $\mathcal{F}$ and are consistent with $\mathcal{S}$}
  \If{$H = \emptyset~\mathbf{and}~|V(T)| = |V(G)|$}
  {
    Return $T$
  }
  \Else
  {
    \While{$H \neq \emptyset$}
    {
      $(v_{(c,i)}, v_{(d,j)}) \leftarrow \textsc{Pop}(H)$\\
      $E(T) \leftarrow E(T) \cup \{(v_{(c,i)}, v_{(d,j)})\}$\\
      \ForEach{$(v_{(d,j)}, v_{(e,l)}) \in E(G)$}
      {
        \If{$v_{(e,l)} \not \in V(T)~\mathbf{and}~v_{(e,\pi(l))}$ is the first vertex with character $e$ on the path from $v_{(*,0)}$ to $v_{(d,j)}$ $\mathbf{and}~f^+_p(D_{(d,j)}) \geq f^+_p(D_{(e,l)}) + \sum_{(f,s) \in \delta(d,j)} f^+_p(D_{(f,s)})$}
	{
	  \textsc{Push}$(H,(v_{(d,j)}, v_{(e,l)}))$
	}
      }
      \ForEach{$(v_{(e,l)},v_{(f,s)}) \in H$}
      {
	 \If{$v_{(f,s)} = v_{(d,j)}$}
	 {
	   Remove $(v_{(e,l)},v_{(f,s)})$ from $H$\\
	 }
	 \ElseIf{$v_{(e,l)} = v_{(c,i)}~\mathbf{and}~\exists p \in [m] \text{ such that } f^+_p(D_{(c,i)}) < f^+_p(D_{(f,s)}) + \sum_{(h,t) \in \delta(c,i)} f^+_p(D_{(h,t)})$}
	 {
	   Remove $(v_{(e,l)},v_{(f,s)})$ from $H$\\
	 }
      }
      \textsc{Enumerate}$(G, T, H)$\\
      $E(T) \leftarrow E(T) \setminus \{(v_{(c,i)}, v_{(d,j)})\}$
    }
}
\end{algorithm}

We account for errors by taking as input nonempty intervals $[\bunderline{f}_{p,(c,i)}, \boverline{f}_{p,(c,i)}]$ that contain the true frequency $f_{p,(c,i)}$ for character-state pairs $(c,i)$ in samples $p$.
A tree $T$ is \emph{valid} if there exists a frequency tensor $\mathcal{F}' = [[f'_{p,(c,i)}]]$ such that $\bunderline{f}_{p,(c,i)} \le f'_{p,(c,i)} \le \boverline{f}_{p,(c,i)}$ and $\mathcal{F}'$ generates $T$---i.e.\ \eqref{eq:MSSC} holds for $T$ and $\mathcal{F}'$.
A valid tree $T$ is \emph{maximal} if there exists no valid supertree $T'$ of $T$, i.e.\ $E(T) \subsetneq E(T')$.
The task now becomes to find the set of all maximal valid trees.
We recursively define $\hat{\mathcal{F}} = [[\hat{f}_{p,(c,i)}]]$ where
\begin{equation}
  \hat{f}_{p,(c,i)} = \max\Big\{\bunderline{f}_{p,(c,i)}, \sum_{(d,j) \in \delta(c,i)} \hat{f}^+_p(D_{(d,j)}) - \sum_{j \in D(c,i) \setminus \{i\}} \hat{f}_{p,(c,j)}\Big\}.
\end{equation}
The intuition here is to satisfy \eqref{eq:MSSC} by assigning the smallest possible values to the children. We do this bottom-up from the leaves and set $\hat{f}_{p,(c,i)} = \bunderline{f}_{p,(c,i)}$ for each leaf vertex $v_{(c,i)}$.
It turns out that $\bunderline{f}_{p,(c,i)} \leq \hat{f}_{p,(c,i)} \leq \overline{f}_{p,(c,i)}$ is a necessary condition for $T$ to be valid as shown in the  following lemma.

\begin{lem2}
  If tree $T$ is valid then $\bunderline{f}_{p,(c,i)} \le \hat{f}_{p,(c,i)} \le \boverline{f}_{p,(c,i)}$ for all $p$ and $(c,i)$.
\end{lem2}
\begin{proof}
  Let $T$ be a valid tree and let $\mathcal{F} = [[f_{p,(c,i)}]]$ be a frequency tensor that generates $T$ such that $\bunderline{f}_{p,(c,i)} \le f_{p,(c,i)} \le \boverline{f}_{p,(c,i)}$.
  We claim that $\bunderline{f}_{p,(c,i)} \le \hat{f}_{p,(c,i)} \le f_{p,(c,i)} \le \boverline{f}_{p,(c,i)}$.
  We proof this by structural induction on $T$ by working our way up to the root.
  
  For the base case, let $v_{(c,i)}$ be a leaf of $T$. 
  That is, $\delta(c,i) = \emptyset$.
  Thus by definition, $\hat{f}_{p,(c,i)} = \bunderline{f}_{p,(c,i)}$. 
  Therefore, $\bunderline{f}_{p,(c,i)} \le \hat{f}_{p,(c,i)} \le f_{p,(c,i)} \le \boverline{f}_{p,(c,i)}$.
  For the step, let $v_{(c,i)}$ be an inner vertex of $T$.
  Thus, $\delta(c,i) \neq \emptyset$.
  The induction hypothesis (IH) states that $\bunderline{f}_{p,(d,j)} \le \hat{f}_{p,(d,j)} \le f_{p,(d,i)} \le \boverline{f}_{p,(d,j)}$ for all descendants $(d,j)$ of $(c,i)$ in $T$.
  We distinguish two cases:
  \begin{enumerate}
    \item In case $\hat{f}_{p,(c,i)} = \bunderline{f}_{p,(c,i)}$, we have $\bunderline{f}_{p,(c,i)} \le \hat{f}_{p,(c,i)} \le f_{p,(c,i)} \le \boverline{f}_{p,(c,i)}$.
    \item In case $\hat{f}_{p,(c,i)} = \sum_{(d,j) \in \delta(c,i)} \hat{f}^+_p(D_{(d,j)}) - \sum_{j \in D(c,i) \setminus \{i\}} \hat{f}_{p,(c,i)}$, we have $\hat{f}_{p,(c,i)} \ge \bunderline{f}_{p,(c,i)}$ by definition.
      By $\eqref{eq:MSSC}$, we have that $f_{p,(c,i)} \ge \sum_{(d,j) \in \delta(c,i)} f^+_p(D_{(d,j)}) - \sum_{j \in D(c,i) \setminus \{i\}}$.
      By the IH, we have $\bunderline{f}_{p,(d,j)} \le \hat{f}_{p,(d,j)} \le f_{p,(d,j)} \le \boverline{f}_{p,(d,j)}$ for all $(d,j) \in \delta(c,i)$ and $\bunderline{f}_{p,(c,l)} \le \hat{f}_{p,(c,l)} \le f_{p,(c,l)} \le \boverline{f}_{p,(c,l)}$ for all $l \in D_{(c,i)} \setminus \{i\}$.
      Therefore, 
      $\hat{f}_{p,(c,i)} = \sum_{(d,j) \in \delta(c,i)} \hat{f}^+_p(D_{(d,j)}) - \sum_{j \in D(c,i) \setminus \{i\}} \hat{f}_{p,(c,i)} \le \sum_{(d,j) \in \delta(c,i)} f^+_p(D_{(d,j)}) - \sum_{j \in D(c,i) \setminus \{i\}} f_{p,(c,i)} \le f_{p,(c,i)}$.
      Hence, $\bunderline{f}_{p,(c,i)} \le \hat{f}_{p,(c,i)} \le f_{p,(c,i)} \le \boverline{f}_{p,(c,i)}$.
  \qed
  \end{enumerate}
\end{proof}

It may be the case that the frequencies $\hat{f}_{p,(c,i)}$ of a character $c$ in a sample $p$ do not sum to 1. 
Therefore, $\bunderline{f}_{p,(c,i)} \le \hat{f}_{p,(c,i)} \le \boverline{f}_{p,(c,i)}$ is not a sufficient condition for $T$ to be valid. 
However, if $1 - \sum_{i=1}^{k-1} \hat{f}_{p,(c,i)} \le \boverline{f}_{p,(c,0)}$ holds, $T$ is valid as it is generated by frequency tensor $\mathcal{F}' = [[f'_{p,(c,i)}]]$ where 
\begin{equation*}
f'_{p,(c,i)} = 
  \begin{cases}
    1 - \sum_{i=1}^{k-1} \hat{f}_{p,(c,i)}, & \mbox{if $i=0$,}\\
    \hat{f}_{p,(c,i)}, & \mbox{otherwise.}
  \end{cases}
\end{equation*}
This is the case if $\boverline{f}_{p,(c,0)} = 1$.
By assuming the latter, we are able to enumerate all maximal valid trees by updating condition (2) of Invariant~\ref{inv:T} so that \eqref{eq:MSSC} holds for $T$ and $\hat{\mathcal{F}}$.

Algorithm~\ref{alg:noisy_enumerate} gives the pseudo code of an enumeration procedure of all maximal valid trees given intervals $[l_{p,(c,i)}, u_{p,(c,i)}]$ for each character-state pair $(c,i)$ in each sample $p$.
The initial call is \textsc{NoisyEnumerate}$(G, \{v_{(*,0)}\}, \delta(*,0))$.
The partial tree containing just the vertex $v_{(*,0)}$ satisfies Invariant~\ref{inv:T}.
The set $\delta(*,0)$ corresponds to the set of outgoing edges from vertex $v_{(*,0)}$ of $G_{(\mathcal{F},\mathcal{S})}$, which by definition satisfies Invariant~\ref{inv:H}.
Upon the addition of an edge $(v_{(c,i)}, v_{(d,j)}) \in H$ (line 5), Invariant~\ref{inv:H} is restored by adding all outgoing edges from $v_{(d,j)}$ whose addition results in a consistent partial tree $T'$ that satisfies \eqref{eq:MSSC} for $\hat{\mathcal{F}}$ (lines 9-10) and by removing all edges from $H$ that introduce a cycle (lines 12-13) or violate \eqref{eq:MSSC} for $\hat{\mathcal{F}}$ (lines 14-15).
Note that in line 13 we dropped the condition $v_{(e,l)} = v_{(c,i)}$ as the newly added edge $(v_{(c,i)}, v_{(d,j)})$ may affect the frequencies $\hat{\mathcal{F}}$ of the vertices of the current partial tree $T$.

Since a maximal valid tree $T$ does not necessarily span all the vertices, it may happen that for a character $c$ not all states in $S_c$ are present.
We say that a maximal valid tree $T$ is \emph{state complete} if for each vertex $v_{(c,i)}$ of $T$, all vertices $v_{(c,j)}$ where $j \in V(S_c)$ are also in $V(T)$.
Our goal is to report all maximal valid and state-complete trees.
Therefore, we post-process each maximal valid tree $T$ and remove all vertices $v_{(c,i)}$ where there is a $j \in V(S_c)$ such that $v_{(c,j)} \not \in V(T)$.
The tree that we report corresponds to the connected component rooted at $v_{(*,0)}$.
Since each maximal valid and state-complete tree is a partial valid tree rooted at $v_{(*,0)}$, our enumeration procedure reports all maximal valid and state-complete trees.

\begin{algorithm}
  \caption{\textsc{NoisyEnumerate}$(G, T, H)$}
  \label{alg:noisy_enumerate}
  \KwIn{Ancestry graph $G_{(\mathcal{F},\mathcal{S})}$, perfect phylogeny tree $T$, frontier $H$}
  \KwOut{All maximal valid perfect phylogenies that are consistent with $\mathcal{S}$}
  \If{$H = \emptyset$}
  {
    Let $T'$ be the subtree of $T$ that only contains state-complete characters\\
    Return $T'$
  }
  \Else
  {
    \While{$H \neq \emptyset$}
    {
      $(v_{(c,i)}, v_{(d,j)}) \leftarrow \textsc{Pop}(H)$\\
      $E(T) \leftarrow E(T) \cup \{(v_{(c,i)}, v_{(d,j)})\}$\\
      \ForEach{$(v_{(d,j)}, v_{(e,l)}) \in E(G)$}
      {
	\If{$v_{(e,l)} \not \in V(T)~\mathbf{and}~v_{(e,\pi(l))}$ is the first vertex with character $e$ on the path from $v_{(*,0)}$ to $v_{(d,j)}$ $\mathbf{and}~\hat{f}^+_p(D_{(d,j)}) \geq \hat{f}^+_p(D_{(e,l)}) + \sum_{(f,s) \in \delta(d,j)} \hat{f}^+_p(D_{(f,s)})$}
	{
	  \textsc{Push}$(H,(v_{(d,j)}, v_{(e,l)}))$
	}
      }
      \ForEach{$(v_{(e,l)},v_{(f,s)}) \in H$}
      {
	 \If{$v_{(f,s)} = v_{(d,j)}$}
	 {
	   Remove $(v_{(e,l)},v_{(f,s)})$ from $H$\\
	 }
	 \ElseIf{$\exists p \in [m], \hat{f}^+_p(D_{(c,i)}) < \hat{f}^+_p(D_{(f,s)}) + \sum_{(h,t) \in \delta(c,i)} \hat{f}^+_p(D_{(h,t)})$}
	 {
	   Remove $(v_{(e,l)},v_{(f,s)})$ from $H$\\
	 }
      }
      \textsc{NoisyEnumerate}$(G, T, H)$\\
      $E(T) \leftarrow E(T) \setminus \{(v_{(c,i)}, v_{(d,j)})\}$
    }
  }
\end{algorithm}

\subsection{Application to Cancer Sequencing Data}
\label{sec:realdata}

Figure~\ref{fig:LOH_SCD_SNV} shows the eleven state trees that satisfy the assumption where at each locus at most one CNA event occurs (either an LOH or an SCD) as well as at most one SNV event.  
Table~\ref{tab:CNA} shows the relative frequency assignments for each state, described in Section~\ref{sec:cancer_model}.
Table~\ref{tbl:intervals} shows the allowed VAFs $[\bunderline{h}, \boverline{h}]$ given a state tree $S$ and mixing proportions $\mu$.

\begin{figure}
  \center
  \includegraphics[width=.9\textwidth]{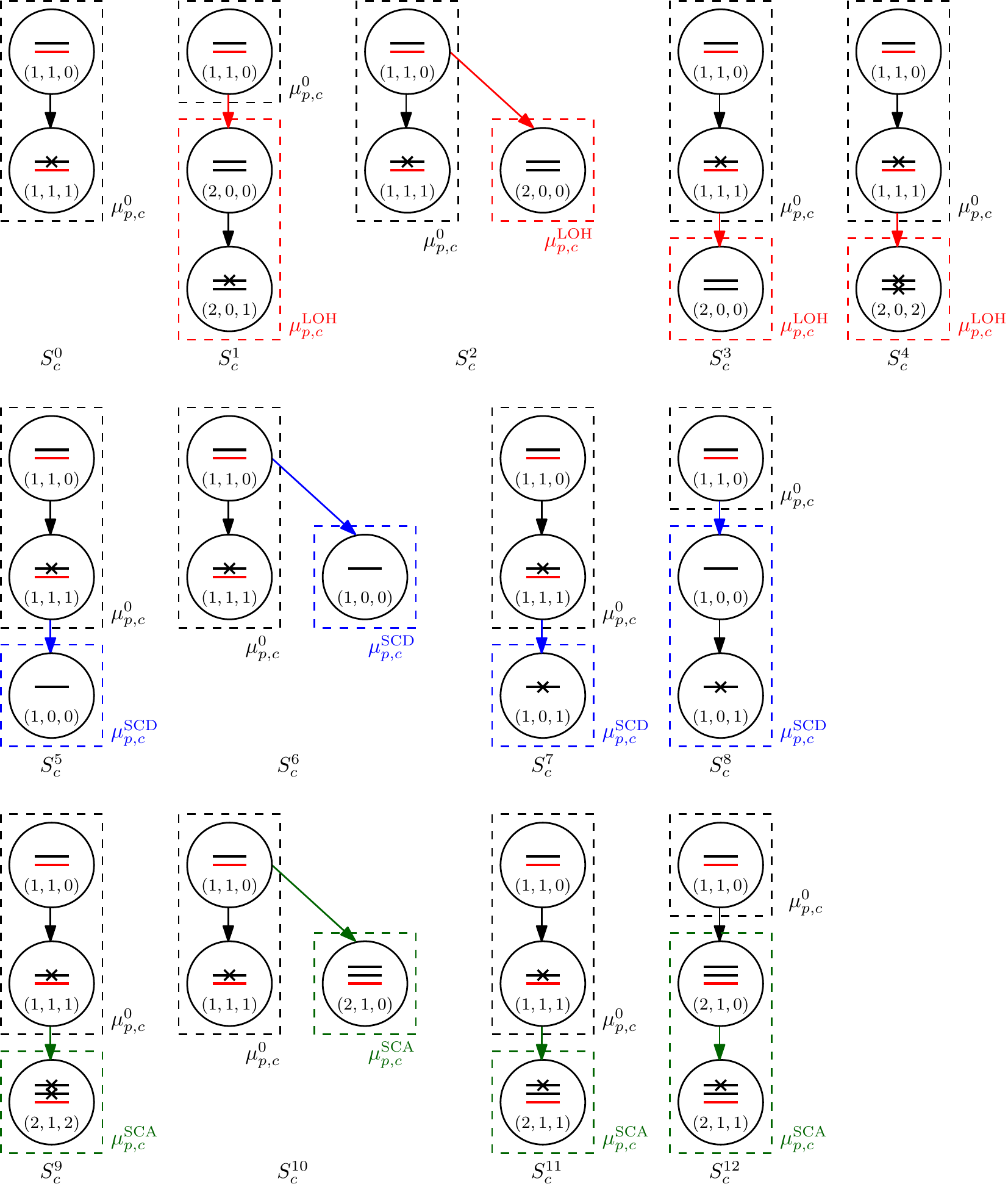}
  \caption{\textbf{Allowed state trees.} Red edges indicate LOH events, blue edges indicate SCD events, green edges indicate SCA events, and black edges indicate SNV events.  State tree $S_c^0$ corresponds to a single SNV event. State trees $S_c^1, S_c^2,S_c^3,S_c^4$ correspond to mixed LOH/SNV events, state trees $S_c^5,S_c^6,S_c^7,S_c^8$ to mixed SCD/SNV events and states trees $S_c^9,S_c^{10},S_c^{11},S_c^{12}$ to mixed SCA/SNV events.}
  \label{fig:LOH_SCD_SNV}
\end{figure}

\begin{sidewaystable}[t]
  \center
  \caption{\textbf{Relative frequencies.} Each state tree $S_c \in \Sigma$, defines a linear system of equations with one solution in terms of variables $f_{p,(c,i)}$ and constants $h_{p,c}$, $\mu^0_{p,c}$, $\mu^\mathrm{LOH}_{p,c}$, $\mu^\mathrm{SCD}_{p,c}$ and $\mu^\mathrm{SCA}_{p,c}$.}
  \label{tab:CNA}
  \begin{tabular}{|l|l|l|l|l|l|l|l|l|l|l|}
    \hline
    $\Sigma$ & $f_0 = f_{(1,1,0)}$ & $f_1 = f_{(1,1,1)}$ & $f_2 = f_{(2,0,0)}$ & $f_3 = f_{(2,0,1)}$ & $f_4 = f_{(2,0,2)}$ & $f_5 = f_{(1,0,0)}$ & $f_6 = f_{(1,0,1)}$ & $f_7 = f_{(2,1,0)}$ & $f_8 = f_{(2,1,1)}$ & $f_9 = f_{(2,1,2)}$\\
    \hline
    \hline
    $S^0$ & $1- h$ & $h$ & 0 & 0 & 0 & 0 & 0 & 0 & 0 & 0\\
    \hline
    $S^1$ & $\mu^0$ & 0 & $\mu^\mathrm{LOH} - 2 h$ & $2 h$ & 0 & 0 & 0 & 0 & 0 & 0\\
    $S^2$ & $\mu^0 - 2 h$ & $2 h$ & $\mu^\mathrm{LOH}$ & 0 & 0 & 0 & 0 & 0 & 0 & 0\\
    $S^3$ & $\mu^0 - 2 h$ & $2 h$ & $\mu^\mathrm{LOH}$ & 0 & 0 & 0 & 0 & 0 & 0 & 0\\
    $S^4$ & $\mu^0 - (2 h - 2 \mu^\mathrm{LOH})$ & $2 h - 2 \mu^\mathrm{LOH}$ & 0 & 0 & $\mu^\mathrm{LOH}$ & 0 & 0 & 0 & 0 & 0\\
    \hline
    $S^5$ & $\mu^0 - \frac{\mu^\mathrm{SCD}}{ h}$ & 0 & 0 & 0 & $\frac{\mu^\mathrm{SCD}}{ h}$&  $\mu^\mathrm{SCD}$ & 0 & 0 & 0 & 0\\
    $S^6$ & $\mu^0 - \frac{\mu^\mathrm{SCD}}{ h}$ & 0 & 0 & 0 & $\frac{\mu^\mathrm{SCD}}{ h}$ & $\mu^\mathrm{SCD}$ & 0 & 0 & 0 & 0\\
    $S^7$ & $\mu^0 - h(1+\mu^0) - \mu^\mathrm{SCD}$ & $h(1+\mu^0) - \mu^\mathrm{SCD}$ & 0 & 0 & 0 & 0 & $\mu^\mathrm{SCD}$ & 0 & 0 & 0\\
    $S^8$ & $\mu^0$ & 0 & 0 & 0 & 0 & $\mu^\mathrm{SCD} - h(1+\mu^0) $  & $h(1+\mu^0)$ & 0 & 0 & 0\\
    \hline
    \hline
    $S^9$ & $1 + \mu^\mathrm{SCA} - h(2 + \mu^\mathrm{SCA})$ & $h(2 + \mu^\mathrm{SCA}) - 2 \mu^\mathrm{SCA}$ & 0 & 0 & 0 & 0 & 0 & 0 & 0 & $\mu^\mathrm{SCA}$ \\
    $S^{10}$ & $\mu^0 - h(2 + \mu^\mathrm{SCA})$ & $h(2 + \mu^\mathrm{SCA})$ & 0 & 0 & 0 & 0 & 0 & $\mu^\mathrm{SCA}$ & 0 & 0\\
    $S^{11}$ & $1 - h(2 + \mu^\mathrm{SCA})$ & $h(2 + \mu^\mathrm{SCA}) - \mu^\mathrm{SCA}$ & 0 & 0 & 0 & 0 & 0 & 0 & $\mu^\mathrm{SCA}$ & 0\\
    $S^{12}$ & $\mu^0$ & 0 & 0 & 0 & 0 & 0 & 0 & $\mu^\mathrm{SCA} - h(2 + \mu^\mathrm{SCA})$  & $h(2 + \mu^\mathrm{SCA})$ & 0\\
    \hline
  \end{tabular}
\end{sidewaystable}

\begin{table}[t]
  \center
  \caption{\textbf{Allowed VAFs.} Given CNA proportions $\mu^0_{p,c}$, $\mu^\mathrm{LOH}_{p,c}$, $\mu^\mathrm{SCD}_{p,c}$ and $\mu^\mathrm{SCA}_{p,c}$, each state tree $S_c \in \Sigma$ determines the interval $[\bunderline{h},\overline{h}]$ of allowed VAFs.}
  \label{tbl:intervals}
  \begin{tabular}{|l|l|}
    \hline
    $\Sigma$ & $[\bunderline{h}, \overline{h}]$\\
    \hline
    \hline
    $S^0$ & $[0, 0.5]$\\
    \hline
    $S^1$ & $\left[0,\frac{\mu^\mathrm{LOH}}{2}\right]$\\
    $S^2$ & $\left[0,\frac{\mu^0}{2}\right]$\\
    $S^3$ & $\left[0,\frac{\mu^0}{2}\right]$\\
    $S^4$ & $\left[\mu^\mathrm{LOH}, \frac{1 + \mu^\mathrm{LOH}}{2}\right]$\\
    \hline
    $S^5$ & $\left[0, \frac{\mu^0}{1 + \mu^0} \right]$\\
    $S^6$ & $\left[0, \frac{\mu^0}{1 + \mu^0} \right]$\\
    $S^7$ & $\left[\frac{\mu^\mathrm{SCD}}{1 + \mu^0}, \frac{1}{1 + \mu^0} \right]$\\
    $S^8$ & $\left[0, \frac{\mu^\mathrm{SCD}}{1 + \mu^0} \right]$\\
    \hline
    $S^9$ & $\left[\frac{2 \mu^\mathrm{SCA}}{2 + \mu^\mathrm{SCA}}, \frac{1 + \mu^\mathrm{SCA}}{2 + \mu^\mathrm{SCA}} \right]$\\
    $S^{10}$ & $\left[0, \frac{\mu^0}{2 + \mu^\mathrm{SCA}}\right]$\\
    $S^{11}$ & $\left[\frac{\mu^\mathrm{SCA}}{2 + \mu^\mathrm{SCA}}, \frac{1}{2 + \mu^\mathrm{SCA}}\right]$\\
    $S^{12}$ & $\left[0, \frac{\mu^\mathrm{SCA}}{2 + \mu^\mathrm{SCA}}\right]$\\
    \hline
  \end{tabular}
\end{table}

\section{Supplementary Results}

We include here additional results on both simulated data and real data not included in the main text. 

\begin{figure}[t]
  \centering
  \subfloat[\label{fig:res_D}Number of solutions (log-scale) for noisy data with $n=4$]{\includegraphics[width=.93\textwidth]{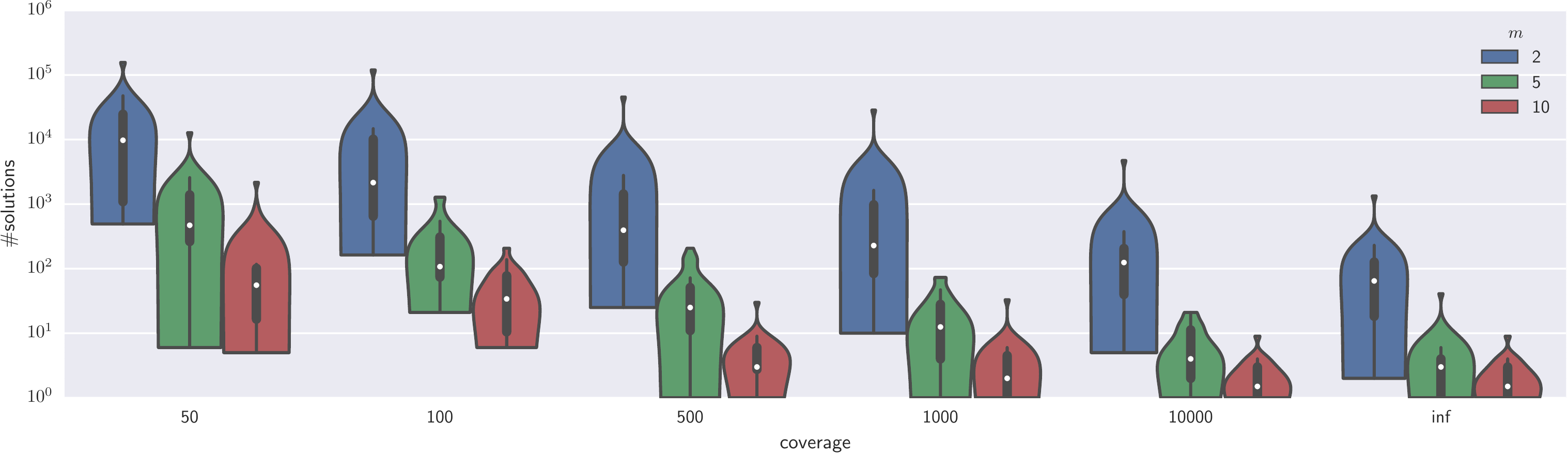}}
  \\
  \subfloat[\label{fig:res_D3}Running time (s, log-scale) for noisy data with $n=4$]{\includegraphics[width=.93\textwidth]{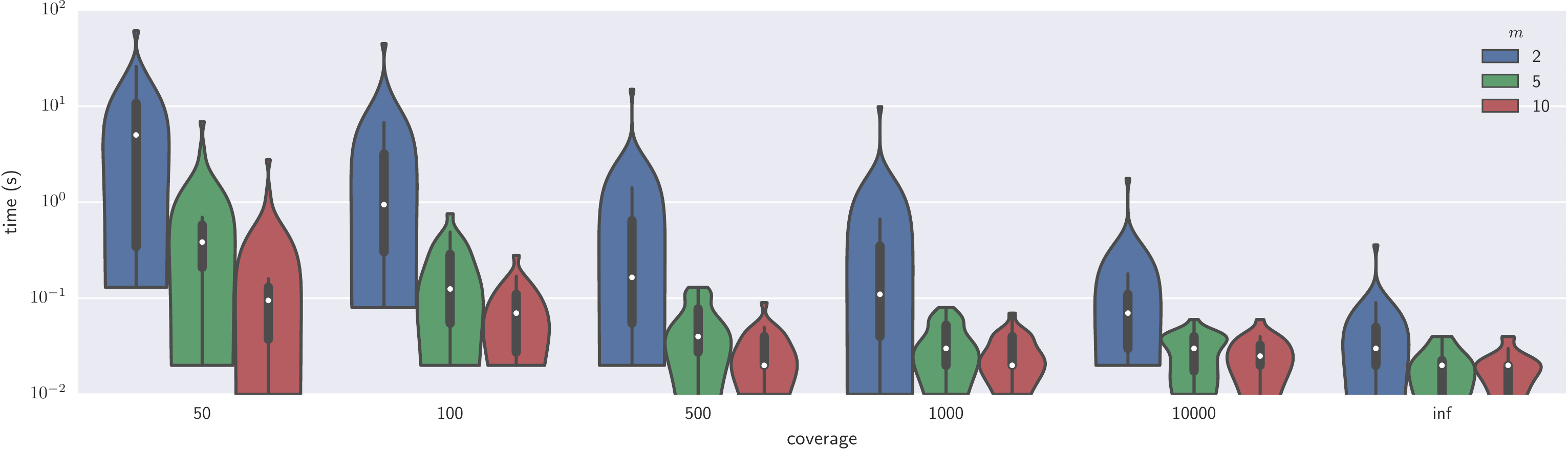}}
  \caption{\textbf{Simulated data results:} running time (seconds, log-scale) and number of solutions (log-scale). A coverage of `inf' corresponds to error-free VAFs (Figure~\ref{fig:res_A})}
  \label{fig:noisy}
\end{figure}

\begin{figure}[t]
  \centering
  \includegraphics[width=.7\textwidth]{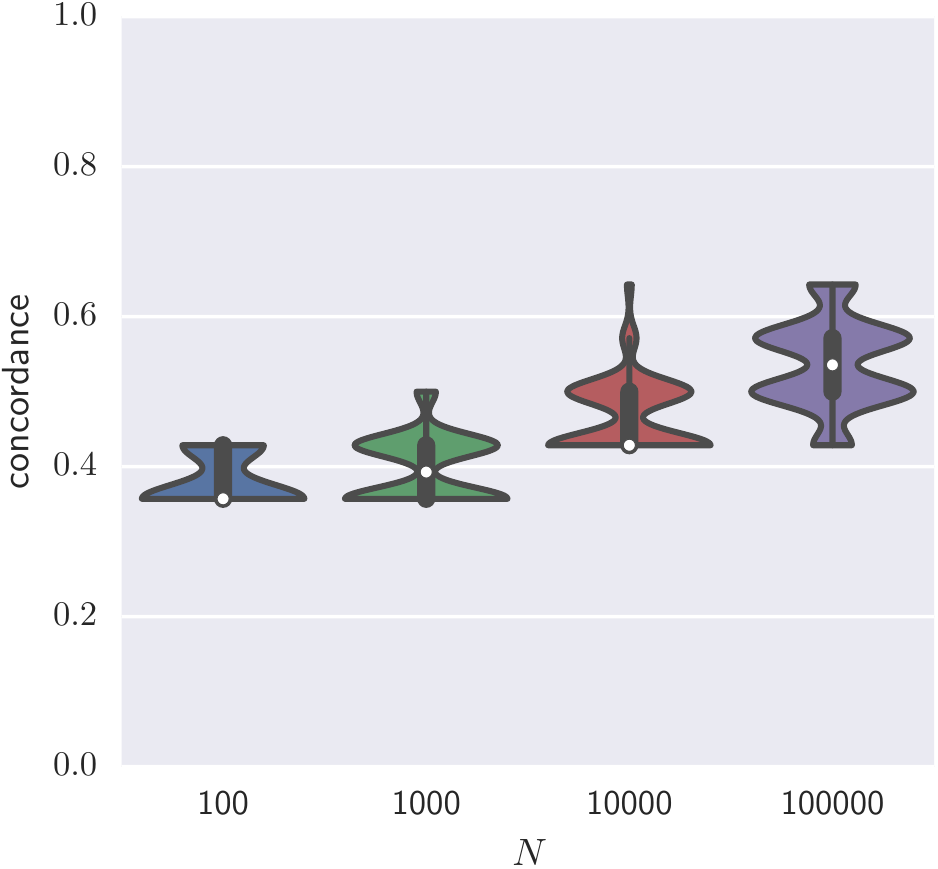}
  \caption{\textbf{Simulated data results.} Concordance with the simulated tree of an instance with $n=10$, $m=10$ and target coverage of 1000x}
  \label{fig:res_E}
\end{figure}

Figure~\ref{fig:solution_space_prostate} shows the solution space of tumor A22 ($N=1000000$).

\begin{figure}[t]
  \center
  \subfloat[Tumor CLL077 (20 solutions).  We note GPR158 is not show here as it was not contained in any of the 30 solutions.]{\includegraphics[width=0.8\textwidth]{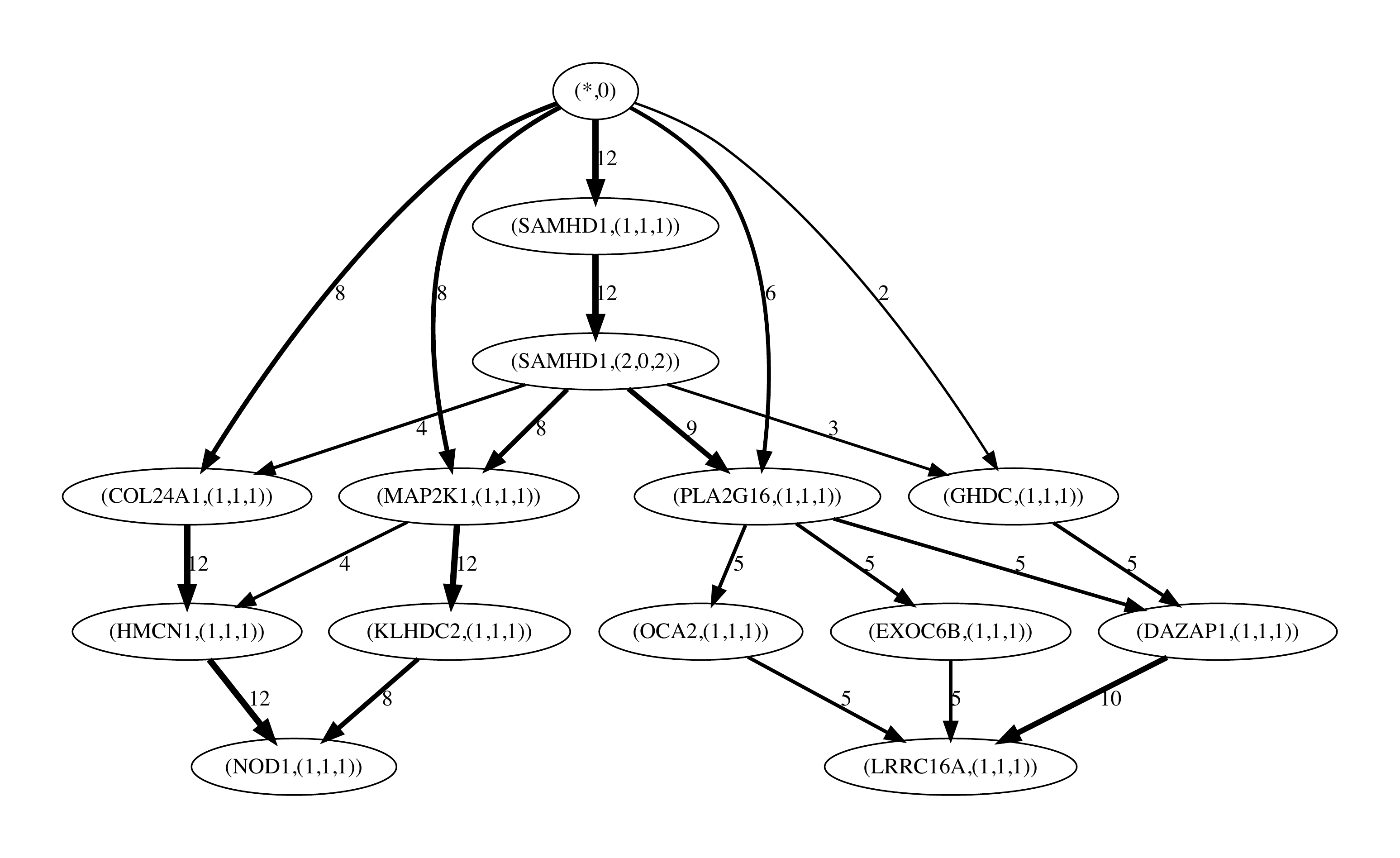} \label{fig:solution_space_cll}}
  \\
  \subfloat[Tumor A22 (24288 trees)]{\includegraphics[width=0.8\textwidth]{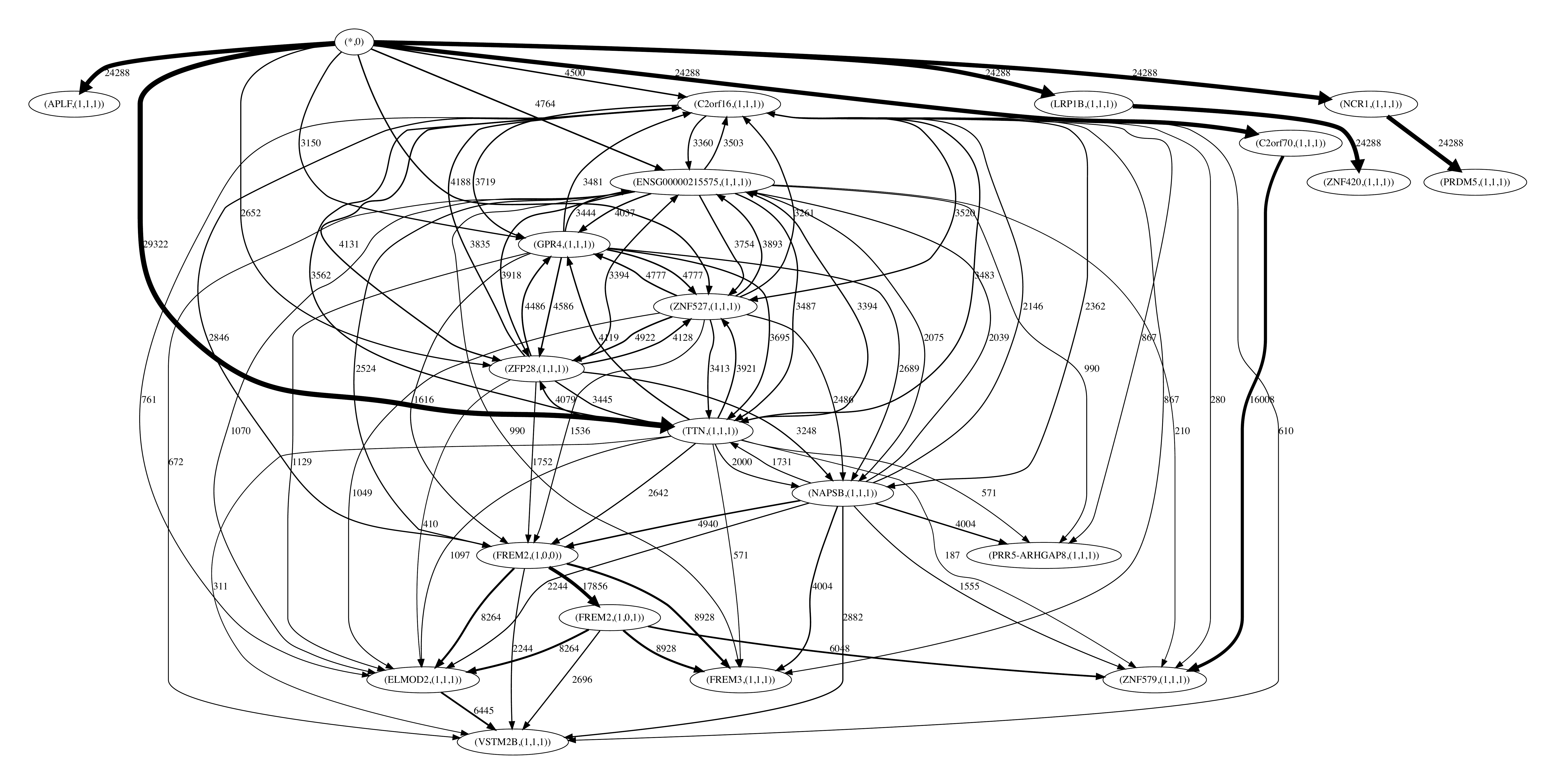}}
  \caption{\textbf{Enumerated solution space of real data instances.} Vertices correspond to the vertices of the solution trees and each edge is labeled by the number of solutions in which it occurs}
  \label{fig:solution_space_prostate}
\end{figure}

\end{document}